\documentclass[11pt]{article}
\usepackage[dvipsnames,svgnames,x11names,hyperref]{xcolor}
\usepackage{fullpage}
\usepackage{amsmath}
\usepackage{amssymb}
\usepackage{amsthm}
\usepackage{thmtools}
\usepackage[colorlinks]{hyperref}

\usepackage{mathpazo}
\usepackage{todonotes}
\presetkeys{todonotes}{size=\footnotesize,color=green!40}{}
\numberwithin{equation}{section}
\usepackage{mdframed}
\usepackage{tikz}

\usepackage{scrextend} 

\hypersetup{
  linkcolor=[rgb]{0.3,0.3,0.6},
  citecolor=[rgb]{0.2, 0.6, 0.2},
  urlcolor=[rgb]{0.8,0.4,0.2}
}

\usetikzlibrary{arrows}

\declaretheoremstyle[bodyfont=\it,qed=$\qed$]{noproofstyle} 

\declaretheorem[numberwithin=section]{theorem}
\declaretheorem[numberlike=theorem]{lemma}
\declaretheorem[numberlike=theorem]{corollary}

\declaretheorem[numberlike=theorem]{claim}

\declaretheorem[numberlike=theorem]{subclaim}

\declaretheorem[unnumbered,name=Theorem,style=noproofstyle]{theorem*}
\declaretheorem[unnumbered,name=Lemma,style=noproofstyle]{lemma*}
\declaretheorem[unnumbered,name=Corollary,style=noproofstyle]{corollary*}
\declaretheorem[unnumbered,name=Proposition,style=noproofstyle]{proposition*}

\declaretheorem[unnumbered,name=Claim]{claim*}
\declaretheorem[unnumbered,name=Conjecture]{conjecture*}
\declaretheorem[unnumbered,name=Question]{question*}

\declaretheoremstyle[bodyfont=\it]{defstyle} 
\declaretheorem[numberlike=theorem,style=defstyle]{definition}

\declaretheorem[unnumbered,name=Remark,style=defstyle]{remark*}

\newenvironment{myproof}[1]%
{\vspace{1ex}\noindent{\em Proof of #1.}\hspace{0.5em}\def\myproof@name{#1}}%
{\hfill{\tiny \qed\ (\myproof@name)}\vspace{1ex}}



\newenvironment{proof-sketch}{\medskip\noindent{\em Sketch of Proof.}\hspace*{1em}}{\qed\bigskip}
\newenvironment{proof-attempt}{\medskip\noindent{\em Proof attempt.}\hspace*{1em}}{\bigskip}

\newcommand{\inparen }[1]{\left(#1\right)}             
\newcommand{\inbrace }[1]{\left\{#1\right\}}           
\newcommand{\inangle }[1]{\left\langle#1\right\rangle} 
\newcommand{\pfrac}[2]{\inparen{\frac{#1}{#2}}}

\newcommand{\abs}[1]{\left|#1\right|}                  
\newcommand{\norm}[1]{\left\Vert#1\right\Vert}         

\newcommand{\union}{\cup}
\newcommand{\Union}{\bigcup}

\newcommand{\eqdef}{\stackrel{\text{def}}{=}}
\newcommand{\setdef}[2]{\inbrace{{#1}\ : \ {#2}}}      



\newcommand{\poly}{\mathrm{poly}}



\newcommand{\veca}{\mathbf{a}}

\newcommand{\vecu}{\mathbf{u}}
\newcommand{\vecv}{\mathbf{v}}

\newcommand{\spaced}[1]{\quad #1 \quad}



\newcommand{\NP}{\mathsf{NP}}

\renewcommand{\epsilon}{\varepsilon}
\newcommand{\val}{\mathrm{val}}

\newcommand{\sym}{\text{sym}}

\begin{document}
\title{On Fortification of Projection Games}
\author{
  Amey Bhangale\thanks{Rutgers University. Research supported by the NSF grant CCF-1253886. {\tt amey.bhangale@rutgers.edu}}%
  \and%
  Ramprasad Saptharishi\thanks{Tel Aviv University. The research leading to these results has received funding from the European Community's Seventh Framework Programme (FP7/2007-2013) under grant agreement number 257575. {\tt ramprasad@cmi.ac.in}}%
  \and%
  Girish Varma\thanks{Tata Institute of Fundamental Research, Mumbai. Supported by the Google India Fellowship in Algorithms. {\tt girishrv@tifr.res.in}}%
  \and%
  Rakesh Venkat\thanks{Tata Institute of Fundamental Research, Mumbai. {\tt rakesh@tifr.res.in}}%
}
\maketitle

\begin{abstract}
  A recent result of Moshkovitz~\cite{Moshkovitz14} presented an
  ingenious method to provide a completely elementary proof of the
  \emph{Parallel Repetition Theorem} for certain projection games via
  a construction called \emph{fortification}. However, the
  construction used in \cite{Moshkovitz14} to fortify arbitrary label
  cover instances using an arbitrary extractor is insufficient to
  prove parallel repetition. In this paper, we provide a fix by using
  a stronger graph that we call \emph{fortifiers}. Fortifiers are graphs that have both
  $\ell_1$ and $\ell_2$ guarantees on induced distributions from large
  subsets. 

  We then show that an expander with sufficient spectral gap, or a
  bi-regular extractor with stronger parameters (the latter is also the
  construction used in an independent update \cite{Moshkovitz15} of
  \cite{Moshkovitz14} with an alternate argument), is a good
  fortifier. We also show that using a fortifier (in particular $\ell_2$ guarantees)  is necessary for
  obtaining the robustness required for fortification. 

\end{abstract}


\section{Introduction}

\subsection*{Label-cover and general two-prover games}

A label cover instance is specified by a bipartite graph $G =
((X,Y),E)$, a pair of alphabets $\Sigma_X$ and $\Sigma_Y$ and a set of
constraints $\psi_e: \Sigma_X \rightarrow \Sigma_Y$ on each edge $e\in
E$. The goal is to label the vertices of $X$ and $Y$ using
labels from $\Sigma_X$ and $\Sigma_Y$ so as to satisfy as many
constraints are possible.

This problem is often viewed as a two-prover game. The verifier picks
an edge $(x,y)$ at random and sends $x$ to the first prover and $y$ to
the second prover. They are to return a label of the vertex that they
received, and the verifier accepts if the labels they returned are
consistent with the constraint $\psi_{(x,y)}$. The value of this game
$G$, denoted by $\val(G)$, is given by the acceptance probability of
the verifier maximized over all possible strategies of the
provers. These are also called \emph{projection games} as the
constraints are functions from $\Sigma_X$ to $\Sigma_Y$. They are
called \emph{general games} if the constraint on each edge is an
arbitrary relation $\psi_{(x,y)} \subseteq \Sigma_X \times \Sigma_Y$. 

These two notions are equivalent in the sense that $\val(G)$ is
exactly equal to the maximum fraction of constraints that can be
satisfied by any labelling.\\

This problem is central to the PCP Theorem \cite{AS98,ALMSS98} and
almost all inapproximability results that stem from it. The (Strong)
PCP Theorem can be rephrased as stating that for every $\epsilon > 0$,
it is $\NP$-hard to distinguish whether a given label cover instance
has $\val(G) = 1$ or $\val(G) < \epsilon$. An important step is a way
to transform instances with $\val(G)<1-\epsilon$ to instances $G'$
with $\val(G') < \epsilon$. This is usually achieved via the
\emph{Parallel Repetition Theorem}.

\subsection*{Parallel Repetition}

The $k$-fold repetition of a game $G$, denoted by $G^k$, is the following natural
definition --- the verifier picks $k$ edges $(x_1,y_1),\cdots,
(x_k,y_k)$ from $E$ uniformly and independently, sends $(x_1,\dots, x_k)$ and $(y_1,\dots,
y_k)$ to the provers respectively, and accepts if the labels returned
by them are consistent on each of the $k$ edges.

If $\val(G)=1$ to start with then $\val(G^k)$ still remains $1$. How does $\val(G^k)$ decay with $k$ if $\val(G) < 1$? Turns out even this simple
operation of repeating a game in parallel has a counter-intuitive
effect on the value of the game. It is easy to see that $\val(G^k) \geq \val(G)^k$
as provers can use a same strategy as in $G$ to answer each query $(x_i,y_i)$.
The first surprise is $\val(G^k)$ is
\emph{not} $\val(G)^k$, but sometimes can be \emph{much larger} than
$\val(G)^k$. Fortnow~\cite{For89} presented a game $G$ for which $\val(G^2) > \val(G)^2$, 
Feige~\cite{Fei91} improved this by giving an example of game $G$ with $\val(G) < 1$ but $\val(G^2) = \val(G)$. 
Indeed, there are known examples \cite{Raz11} of projection games
where $\val(G) = (1 - \epsilon)$ but $\val(G^k) \geq \inparen{1 - \epsilon\sqrt{k}}$ 
for a large range of $k$.

The first non trivial upper bound on $\val(G^k)$ was proven by Verbitsky~\cite{Ver96} who showed that
if $\val(G)<1$ then the value $\val(G^k)$  must go to zero as $k$ goes to infinity. 
It is indeed true that $\val(G^k)$ decays exponentially
with $k$ (if $\val(G) < 1$). This breakthrough was first proved by Raz
\cite{Raz98}, and has subsequently seen various simplifications and
improvements in parameters \cite{Hol09,Rao11,DS14,BG14}.
The following statements are due to Holenstein \cite{Hol09}, Dinur and
Steurer~\cite{DS14} respectively.

\begin{theorem}[Parallel repetition theorem for general games]
  Suppose $G$ is a two-prover game such that $\val(G) \leq
  1 - \epsilon$ and let $\abs{\Sigma_X}\abs{\Sigma_Y} \leq s$. Then, for any $k \geq 0$,
  \[
  \val(G^k) \spaced{\leq} \inparen{1 - \epsilon^3/2}^{\Omega(k/\log s)}.
  \]
\end{theorem}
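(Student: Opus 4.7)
The plan is to argue by contradiction. Suppose strategies $(f,g)$ for $G^k$ achieve acceptance probability $\delta \geq (1 - \epsilon^3/2)^{ck/\log s}$ for a suitable constant $c > 0$; I will extract from them a single-copy strategy for $G$ with value strictly greater than $1 - \epsilon$, contradicting $\val(G) \leq 1 - \epsilon$. Write $W_S$ for the event that the provers win on every coordinate in $S \subseteq [k]$, so $\Pr[W_{[k]}] \geq \delta$.

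The technical heart is the construction of a \emph{dependency-breaking} random variable $E_T$. Pick a uniformly random subset $T \subset [k]$ of size $m = \Theta((\log s)/\epsilon^2)$ and let $E_T$ be the tuple $(T, X_T, f(X_T), Y_T, g(Y_T))$ sampled according to the natural distribution conditioned on $W_T$. Two properties are then extracted by averaging over $T$ and a random coordinate $i \notin T$. First, \emph{high conditional winning probability}: a chain-rule decomposition of $\Pr[W_{[k]}]$ along coordinates outside $T$ shows that the expected value of $\Pr[W_i \mid E_T]$ is at least $\delta^{1/(k-m)}$; since $\delta \geq (1-\epsilon^3/2)^{ck/\log s}$ and $k \gg m$, this exceeds $1 - \epsilon/2$ for a small enough $c$. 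Second, \emph{approximate independence}: the total question entropy on $[k]\setminus T$ is at most $(k-m)\log s$, so averaged over $i$ the conditional mutual information between $X_i$ and the remaining unbundled coordinates (given $Y_i$ and $E_T$) is at most $O(m(\log s)/k) = O(\epsilon^2)$, and by Pinsker's inequality this bounds the statistical distance between the conditional distribution of $(X_i, Y_i)$ given $E_T$ and the product of its marginals by $O(\epsilon)$. Crucially, these marginals agree with the original single-copy question distribution of $G$.

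The last step invokes Holenstein's \emph{correlated sampling lemma}: two parties each holding a distribution ($P$ and $Q$ respectively) on a common finite set can, with shared randomness, produce samples marginally distributed as $P$ and $Q$ that agree with probability at least $1 - \norm{P - Q}_1$. The extracted single-copy strategy is: each prover, on receiving its input $x$ (resp.\ $y$), uses shared randomness to sample $T$, $E_T$, and a uniform $i \notin T$; then simulates the remaining coordinates of $(X_{[k]}, Y_{[k]})$ by correlated sampling, using their own input together with $E_T$ to determine each marginal; and finally returns the answer that $f$ (resp.\ $g$) assigns to coordinate $i$ of the simulated tuple. Combining the two properties above with the correlated sampling guarantee yields overall success probability $> 1 - \epsilon$, the desired contradiction.

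The main obstacle — and the essence of Holenstein's simplification of Raz — is that $E_T$ must simultaneously (i) be jointly sampleable by the two provers from shared randomness alone, (ii) make the remaining coordinates nearly independent across provers, and (iii) preserve a high conditional winning probability. The $\epsilon^3$ in the exponent (rather than $\epsilon^2$) emerges from balancing the two error terms (high winning probability vs.\ statistical closeness) through the choice $m = \Theta((\log s)/\epsilon^2)$, and the $\log s$ factor is the alphabet-size cost of bounding the total question entropy on $[k]\setminus T$.
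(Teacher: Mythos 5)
First, a point of reference: the paper does not prove this theorem at all --- it is quoted as a known result of Holenstein \cite{Hol09}, and the paper's own contribution (fortification) is orthogonal to it. So your proposal is to be judged as a reconstruction of Holenstein's information-theoretic argument, and as such it has the right overall shape (dependency-breaking variable, bounding information cost by the answer entropy, Pinsker, correlated sampling) but a genuine gap at its first key step.

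The gap is the ``high conditional winning probability'' claim: for a \emph{uniformly random} set $T$ of fixed size $m$ and random $i \notin T$, you assert $\E_{T,i}\left[\Pr[W_i \mid E_T]\right] \geq \delta^{1/(k-m)}$, derived from ``a chain-rule decomposition along coordinates outside $T$''. No such decomposition exists: the chain rule conditions each coordinate on winning an ever-growing prefix, not on the single event $W_T$, and the inequality is false in general. Concretely, if the winning pattern is ``all $k$ coordinates'' with probability $\delta$ and otherwise a uniformly random subset of size $k-r$ with $r$ chosen so that $((k-r)/k)^m \gg \delta$, then $\Pr[W_T]$ is dominated by the second branch and $\E_{T,i}[\Pr[W_i \mid W_T]] \approx (k-r)/k$, which can be made exponentially small in $k$ while $\delta^{1/(k-m)}$ stays near $1$; nothing in your argument rules out such correlations among the $W_i$ (and the Feige/Raz counterexamples show the winning events of parallel-repeated strategies can indeed be adversarially correlated). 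This is precisely why Raz/Holenstein build $T$ \emph{greedily}: one proves that if $\Pr[W_T]$ is not too small and $|T|$ is not too large, then either some $i \notin T$ has $\Pr[W_i \mid W_T] \leq 1-\epsilon/2$ (and is added to $T$, shrinking $\Pr[W_T]$ by a $(1-\epsilon/2)$ factor) or one reaches a contradiction with $\val(G) \leq 1-\epsilon$ via the embedding argument; iterating gives the bound. Relatedly, your parameters are inconsistent: the conditioning set must be allowed to grow to size $\Theta(\epsilon^2 k/\log s)$ (that, times the per-step loss $\epsilon/2$, is where $\epsilon^3 k/\log s$ in the exponent comes from), whereas your $m = \Theta((\log s)/\epsilon^2)$ is independent of $k$, and with it your own information bound $O(m(\log s)/k)$ does not match $O(\epsilon^2)$. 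Two smaller omissions: the information cost of conditioning on $E_T$ must also include the $\log(1/\Pr[W_T])$ term (here it is bounded since $\Pr[W_T]\ge\delta$, but it needs to be accounted for), and the dependency-breaking variable in Holenstein's proof must additionally contain, for every coordinate $j$ outside $T\cup\{i\}$, one of $X_j$ or $Y_j$ chosen at random --- without these ``half-questions'' the two provers' conditional distributions are not close and the correlated-sampling step does not go through as described.
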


\begin{theorem}[Parallel repetition theorem for projection games]
  Suppose $G$ is a projection game such that $\val(G) \leq
  \rho$. Then, for any $k \geq 0$,
  \[
  \val(G^k) \spaced{\leq} \pfrac{2\sqrt{\rho}}{1 + \rho}^{k/2}.
  \]
\end{theorem}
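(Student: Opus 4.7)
The plan is to follow the operator-theoretic approach of Dinur and Steurer. The idea is to introduce an auxiliary spectral relaxation $\val_+(G)$ of the game value that (i) satisfies $\val(G)^2 \leq \val_+(G)$, (ii) multiplies exactly under tensor products so that $\val_+(G^k) = \val_+(G)^k$, and (iii) is bounded pointwise by $\frac{2\sqrt{\val(G)}}{1+\val(G)}$. Chaining the three gives
\[
\val(G^k) \;\leq\; \val_+(G^k)^{1/2} \;=\; \val_+(G)^{k/2} \;\leq\; \pfrac{2\sqrt{\val(G)}}{1+\val(G)}^{k/2},
\]
which is exactly the claimed bound.

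To set up $\val_+$, view a projection game $G = ((X,Y),E,\psi)$ with edge distribution $\mu$ as a weighted bipartite object and encode strategies $f_X, f_Y$ by their indicator vectors $\mathbf{1}_{f_X}$ on $X \times \Sigma_X$ and $\mathbf{1}_{f_Y}$ on $Y \times \Sigma_Y$. Define the linear operator $B_G$ whose $((x,a),(y,b))$-entry is proportional to $\mu(x,y)$ whenever $\psi_{xy}(a)=b$ and zero otherwise, with inner products weighted by the vertex marginals of $\mu$ together with the uniform measure on the alphabets. Then $\val(G)$ equals the bilinear form $\langle \mathbf{1}_{f_X}, B_G\, \mathbf{1}_{f_Y}\rangle$ maximized over strategy indicators, and I define $\val_+(G)$ by taking the same bilinear form but relaxing the maximum to all nonnegative unit-$\ell_2$-norm vectors. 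Property (i) then follows from the comparison between the $\ell_1$ and $\ell_2$ masses of indicator-of-strategy vectors under the chosen normalization, and property (ii) is the standard identity $B_{G^k} = B_G^{\otimes k}$ together with multiplicativity of operator norms under tensor products.

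The analytic heart of the argument is property (iii). One takes near-optimal nonnegative unit vectors $u, v$ for $\val_+(G)$ and applies a layer-cake decomposition $u = \int_0^\infty \mathbf{1}_{u \geq s}\,ds$ (similarly for $v$), reducing $\langle u, B_G v\rangle$ to an average of bilinear forms on $0/1$-valued strategies. Each such term is controlled by $\val(G)$, and a single-variable optimization essentially of the form $\min_{t>0} \frac{1 + t^2\rho}{t(1+\rho)} = \frac{2\sqrt{\rho}}{1+\rho}$ (minimum attained at $t = 1/\sqrt{\rho}$) produces the sharp constant when the contributions from the two sides of the operator are correctly balanced.

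The main obstacle is precisely this last step: the sharpness of the constant $\frac{2\sqrt{\rho}}{1+\rho}$ is not accessible by a blunt Cauchy--Schwarz application, and requires a careful tradeoff between the mass of the super-level sets of $u$ and $v$ through the layer-cake decomposition. The remaining steps (the operator set-up, the $\ell_1$-versus-$\ell_2$ comparison, and the tensorization) are comparatively routine once the right weighted operator $B_G$ has been identified.
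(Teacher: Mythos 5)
First, be aware that the paper does not prove this theorem at all: it is quoted as background from Dinur and Steurer \cite{DS14}, and the paper's own machinery proves a different statement --- parallel repetition for games that have first been made $(\delta,\epsilon)$-robust by fortification, via the elementary counting argument behind \autoref{thm:robustness-to-pr} and \autoref{lem:p2-generalgames}, which yields $(\val(G)+\epsilon)^k + k\epsilon$ and not the sharp bound $\pfrac{2\sqrt{\rho}}{1+\rho}^{k/2}$. So what you are sketching is the Dinur--Steurer proof itself, and it must be judged on its own terms rather than against anything in this paper.

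Judged that way, your skeleton is the right one (a multiplicative relaxation, tensorization, and a threshold-rounding comparison back to $\val$), but there is a genuine gap precisely at the points you label routine. The relaxation you actually define --- the bilinear form of $B_G$ maximized over all nonnegative unit-$\ell_2$ vectors, with vertex-marginal times uniform-alphabet weights and an unspecified proportionality constant --- does not satisfy (i) and (iii) simultaneously. Projection constraints with large preimages inflate this unconstrained nonnegative maximum without changing the game value: for the one-edge game in which every label of $\Sigma_X$ projects to the unique label of $\Sigma_Y$, the value is $1$, yet under any weighting for which strategy indicators are unit vectors and the form reproduces the value on indicators, the nonnegative maximum is about $\sqrt{\abs{\Sigma_X}}$ (take the $X$-side vector uniform over labels), so (iii) fails badly; padding any low-value game with such constraints transfers the same inflation to games with $\val(G)<1$. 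Under your stated uniform-probability alphabet measure the indicators are not even unit vectors, so (i) requires a compensating constant in $B_G$ that cannot be chosen consistently across games. The relaxation of \cite{DS14} is not this plain norm: it is cut down by a per-question normalization of fractional strategies, which prevents mass from spreading over many labels of a single question, and once that constraint is imposed, your step (ii) is no longer ``the standard identity plus multiplicativity of operator norms under tensor products'' --- multiplicativity of the constrained, cone-restricted quantity is itself one of the main lemmas of \cite{DS14} and needs its own proof. Finally, the step you rightly call the analytic heart --- the layer-cake rounding in which each pair of super-level sets is dominated by $\val(G)$, with the optimization $\min_{t>0}(1+t^2\rho)/(t(1+\rho))=2\sqrt{\rho}/(1+\rho)$ supplying the constant --- is only announced, not carried out; it is exactly where the projection structure is used and where the exponent $k/2$ is decided, so as it stands the proposal is an accurate roadmap of \cite{DS14} rather than a proof.
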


Although a lot of these results are substantial simplifications of earlier
proofs, they continue to be involved and delicate. Arguably, one might
still hesitate to call them \emph{elementary} proofs.\\

Recently, Moshkovitz~\cite{Moshkovitz14} came up with an ingenious
method to prove a parallel repetition theorem for certain projection
games by slightly modifying the underlying game via a process that the author
called \emph{fortification}. The method of fortification suggested in
\cite{Moshkovitz14} was a rather mild change to the underlying game
and proving parallel repetition for such \emph{fortified projection
  games} was sufficient for most applications. The advantage of
fortification was that parallel repetition theorem for fortified games
had a simple, elementary and elegant proof as seen in
\cite{Moshkovitz14}.

\subsection{Fortified games}\label{sec:intro-fortified-games}

Fortified games will be described more formally in
\autoref{sec:prelims}, but we give a very rough overview
here. Moshkovitz showed that there is an easy way to bound the value
of repeated game if we knew that the game was \emph{robust on large
  rectangles}. We shall first need the notion of \emph{symmetrized projection games}.\\

{\bf Symmetrized Projection games. } Given a projection game $G$ on
$((X,Y),E)$, the symmetrized game $G_\sym$ is a game on $((X,X),E')$
such that for every $y \in Y$ with $(x,y), (x',y)\in E$, there is an
edge $(x,x') \in E'$ with the constraint $\pi_{(x,y)}(\sigma_x) =
\pi_{(x',y)}(\sigma_{x'})$.\\

For projection games, it would be more convenient to work with the
above symmetrized version for reasons that shall be explained
shortly. It is not hard to see that $\val(G)$ and $\val(G_\sym)$ are
within a quadratic factor of each other. Thus for projection games, we
shall work with the game $G_\sym$ instead of the original game $G$. 

\medskip

\begin{definition}[$(\delta,\epsilon)$-robust games] Let $G$ be a two-prover game
  on $((X,Y),E)$. For any pair of sets $S\subseteq X, T \subseteq Y$, let 
  $G_{S\times T}$ be the game where the verifier chooses his
  random query $(x,y)\in E$ conditioned on the event
  that $x \in S$ and $y \in T$.

  $G$ is said to be \emph{$(\delta,\epsilon)$-robust} if for
  every $S,T \subseteq X$ with $|S|\geq \delta |X|$ and $|T| \geq \delta |Y|$ we have that
  \[
  \val(G_{S\times T}) \quad\leq \quad \val(G) + \epsilon.
  \]
\end{definition}

\begin{theorem}[Parallel repetition for robust projection games \cite{Moshkovitz14}]\label{thm:robustness-to-pr} Let
  $G$ be a projection game on a bi-regular bipartite graph $((X,Y),E)$
   with alphabets $\Sigma_X$ and
  $\Sigma_Y$. For any positive integer $k$, if $\epsilon,\delta > 0$
  are parameters such that $2\delta |\Sigma_Y|^{k-1} \leq \epsilon$ and
  $G_\sym$ is $(\delta,\epsilon)$-robust, then\footnote{The following is the corrected statement from \cite{Moshkovitz15}.}
  \[
  \val(G_\sym^k) \spaced{\leq} \inparen{\val(G_\sym)+\epsilon}^k + k\epsilon.
  \]
\end{theorem}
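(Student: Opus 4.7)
The plan is to proceed by induction on $k$; the base case $k=1$ is immediate. For the inductive step, fix arbitrary strategies $A,B\colon X^k\to \Sigma_X^k$ for the two provers in $G_\sym^k$, and regard a random verifier query as a tuple $(\vec e_{<k},\vec e_k)$ of $k$ independent $G_\sym$-edges, where $\vec e_i=(x_i,y_i,x_i')$. Once $\vec e_{<k}$ is fixed, the labels output by each prover on coordinates $1,\dots,k-1$ and $k$ become deterministic functions of $x_k$ (respectively $x_k'$) alone; write $\tilde A(x_k):=A_k(x_1,\dots,x_k)$ and $\tilde B(x_k'):=B_k(x_1',\dots,x_k')$ for the induced coordinate-$k$ strategies.

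The heart of the argument is to peel off the last coordinate by partitioning on the common \emph{projected labels} assigned to coordinates $1,\dots,k-1$. For each $\vec\sigma\in \Sigma_Y^{k-1}$ and each fixed $\vec e_{<k}$, define
\[
S_{\vec\sigma}(\vec e_{<k}) \;=\; \bigl\{\, x_k\in X:\ \pi_{(x_i,y_i)}(A_i(x_1,\dots,x_k))=\sigma_i\text{ for all }i<k\,\bigr\},
\]
and $T_{\vec\sigma}(\vec e_{<k})$ analogously for $B$. For each $\vec e_{<k}$, the family $\{S_{\vec\sigma}\}_{\vec\sigma}$ partitions $X$, and likewise $\{T_{\vec\sigma}\}_{\vec\sigma}$. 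The event $W_{<k}$ of winning coordinates $1,\dots,k-1$ is precisely the disjoint union over $\vec\sigma$ of $\{x_k\in S_{\vec\sigma},\ x_k'\in T_{\vec\sigma}\}$, so the overall win event $W$ satisfies
\[
\Pr[W] \;=\; \sum_{\vec\sigma\in \Sigma_Y^{k-1}} \E_{\vec e_{<k}}\!\left[\Pr_{\vec e_k}\!\left[\,x_k\in S_{\vec\sigma},\ x_k'\in T_{\vec\sigma},\ W_k\,\right]\right].
\]

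Call the pair $(\vec\sigma,\vec e_{<k})$ \emph{good} if $|S_{\vec\sigma}|\ge\delta|X|$ and $|T_{\vec\sigma}|\ge\delta|X|$, and \emph{bad} otherwise. For a good pair, conditioning $\vec e_k$ on $x_k\in S_{\vec\sigma}$ and $x_k'\in T_{\vec\sigma}$ yields exactly the edge distribution of the restricted game $(G_\sym)_{S_{\vec\sigma}\times T_{\vec\sigma}}$, and $(\tilde A|_{S_{\vec\sigma}},\tilde B|_{T_{\vec\sigma}})$ is a pair of strategies there; $(\delta,\epsilon)$-robustness therefore bounds the conditional probability of $W_k$ by $\val(G_\sym)+\epsilon$. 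For a bad pair, one of the two densities is below $\delta$, which by bi-regularity (giving uniform marginals on $X$) forces $\Pr[x_k\in S_{\vec\sigma},\,x_k'\in T_{\vec\sigma}]<\delta$. Summing the two cases, the good part collapses through the partition identity $\sum_{\vec\sigma}\Pr[x_k\in S_{\vec\sigma},x_k'\in T_{\vec\sigma}]=\Pr[W_{<k}]$ to at most $(\val(G_\sym)+\epsilon)\cdot\Pr[W_{<k}]$, while the bad part is at most $\delta\cdot|\Sigma_Y|^{k-1}\le \epsilon/2$ by the hypothesis.

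To close the induction, observe that for every fixed $(x_k,x_k')$ the maps $A,B$ restrict to deterministic strategies for $G_\sym^{k-1}$, so $\Pr[W_{<k}]\le \val(G_\sym^{k-1})$, and the inductive hypothesis applies with the same $\delta,\epsilon$ since $|\Sigma_Y|^{k-2}\le |\Sigma_Y|^{k-1}$. Combining yields
\[
\Pr[W] \;\le\; (\val(G_\sym)+\epsilon)\bigl((\val(G_\sym)+\epsilon)^{k-1}+(k-1)\epsilon\bigr)+\tfrac{\epsilon}{2}\;\le\;(\val(G_\sym)+\epsilon)^k + k\epsilon,
\]
where the second inequality is immediate when $\val(G_\sym)+\epsilon>1$ and otherwise reduces to $(\val(G_\sym)+\epsilon)(k-1)\epsilon+\epsilon/2\le k\epsilon$. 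The main subtlety---and the reason for partitioning by $\vec\sigma\in \Sigma_Y^{k-1}$ rather than by the raw label pair $(\vec a,\vec b)\in \Sigma_X^{k-1}\times\Sigma_X^{k-1}$---is that the bad contribution is paid in proportion to the number of partition classes, and only the $|\Sigma_Y|^{k-1}$ count (not the much larger $|\Sigma_X|^{2(k-1)}$) fits the $2\delta|\Sigma_Y|^{k-1}\le\epsilon$ budget in the hypothesis.
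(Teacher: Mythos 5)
Your proof is correct and follows essentially the same route as the paper's (Lemma D.1 plus the unfolded recursion): you peel off the last coordinate by partitioning on the $(k-1)$-tuple of projected labels, invoke robustness on the large rectangles in the partition, and absorb the small rectangles using the $2\delta|\Sigma_Y|^{k-1}\le\epsilon$ hypothesis. The only cosmetic differences are that you phrase the argument as a single induction in probabilistic language rather than a standalone recursion plus an edge-counting argument, and you correctly isolate the point the paper makes in a footnote---that indexing by $\vec\sigma\in\Sigma_Y^{k-1}$ rather than by the raw $\Sigma_X$-labels is what gives projection games the better alphabet dependence.
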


Not all projection games are robust on large rectangles, but Moshkovitz
suggested a neat way of slightly modifying a projection game and
making it robust. This process was called \emph{fortification}.

On a high level, for any two-prover game, the verifier chooses to
verify a constraint corresponding to an edge $(x,y)$ but is instead
going to sample several other dummy vertices and give the provers two
sets of $D$ vertices $\inbrace{x_1,\dots, x_D}$ and
$\inbrace{y_1,\dots, y_D}$ such that $x = x_i$ and $y = y_j$ for some
$i$ and $j$. The provers are expected to return labels of all $D$
vertices sent to them but the verifier checks consistency on just the
edge $(x,y)$. This is very similar to the ``confuse/match''
perspective of Feige and Kilian~\cite{FK94}.

To derandomize this construction, Moshkovitz~\cite{Moshkovitz14} uses
a pseudo-random bipartite graph where given a vertex
$w$, the provers are expected to return labels of all its
neighbours (\autoref{defn:concatenation}). The most natural
candidate of such a pseudo-random graph is  an $(\delta,
\epsilon)$-extractor, as we really want to ensure that conditioned on
``large enough events'' $S$ and $T$, the underlying distribution on the
constraints does not change much. This makes a lot of intuitive sense,
since on choosing a random element of $S$ and then a random neighbour,
the extractor property guarantee that the induced distribution on
vertices in $X$ is $\epsilon$-close to uniform. Thus, it is natural to
expect that conditioning on the events $S$ and $T$ should not change
the underlying distribution on the constraints by more than
$O(\epsilon)$. This was the rough argument in \cite{Moshkovitz14},
which unfortunately turns out to be false. We elaborate on this in
\autoref{sec:extractor-insufficient} and \autoref{sec:explicit-extractor-counterexample}. 

A recent updated version \cite{Moshkovitz15} of \cite{Moshkovitz14} provides
an different argument for the fortification lemma using a stronger
extractor. We discuss this at the end of \autoref{sec:contributions}.

\subsection{Our contributions}\label{sec:contributions}

We present a fix to the approach of \cite{Moshkovitz14}, by describing
a way to transform any given game instance $G$ into a robust instance
$G^*$ with the same value following the framework of
\cite{Moshkovitz14} but using a different graph for concatenation, and
a different analysis.

We first describe a concrete counter-example to the original argument
of \cite{Moshkovitz14} in \autoref{sec:extractor-insufficient}, that
shows  concatenating (\autoref{defn:concatenation})
with an arbitrary $(\delta,\epsilon)$-extractor is insufficient. In
fact, as we show in \autoref{sec:random-graph}, concatenating with \emph{any}
left-regular graph with left-degree by $o(1/\epsilon\delta)$ fails to
make arbitrary instances $(\delta, \epsilon)$-robust. We instead use
bipartite graphs called \emph{fortifiers}, defined below.

\begin{definition}[Fortifiers]\label{defn:fortifiers}
  A bipartite graph $H = ((W,X),E_H)$ is an
  \emph{$(\delta, \epsilon_1, \epsilon_2)$-fortifier} if for any set
  $S \subseteq W$ such that $|S| \geq \delta|W|$, if $\pi$ is the
  probability distribution on $X$ induced by picking a uniformly
  random element $w$ from $S$, and a uniformly random neighbor $x$ of $w$, then
  \begin{eqnarray*}
    \abs{\pi - \vecu}_1 & \leq & \epsilon_1,\\
    \norm{\pi - \vecu}^2 & \leq & \frac{\epsilon_2}{|X|}.
  \end{eqnarray*}  
\end{definition}

Notice that a fortifier is an extractor, with the additional
condition that the $\ell_2$-distance of $\pi$ from the
uniform distribution is small. This is what enables us to show that
concatenation with a fortifier produces a robust instance. 

\begin{theorem}[Fortifiers imply robustness]\label{thm:fortification-projection}
Suppose $G$ is a general two-prover game on a bi-regular graph $((X,Y),E)$. Then, for any
$\epsilon, \delta >0$, if $H_1 = ((W,X),E_1)$ and $H_2 = ((Z,Y),E_2)$ are $(\delta, \epsilon,
  \epsilon)$-fortifiers, then the concatenated game
  $G^* = H_1 \circ G \circ H_2$ is $(\delta,O(\epsilon))$-robust.
\end{theorem}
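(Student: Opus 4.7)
The plan is to fix an arbitrary pair of prover strategies $(f_1, f_2)$ for $G^*$ and show that $\val_{G^*_{S \times T}}(f_1, f_2) - \val_{G^*}(f_1, f_2) \le O(\epsilon)$; maximizing over strategies then gives the theorem. I would express both quantities as normalized bilinear forms by introducing the label-partition sets $B_x^a := \{w \in N_{H_1}(x) : f_1(w)_x = a\}$ and $C_y^b := \{z \in N_{H_2}(y) : f_2(z)_y = b\}$, and the non-negative matrix
\[
\Phi \;:=\; \sum_{(x,y) \in E} \sum_{(a, b) \in \psi_{x, y}} \mathbf{1}_{B_x^a} \otimes \mathbf{1}_{C_y^b} \;\in\; \R^{W \times Z}.
\]
The number of ``satisfying'' tuples $(x, y, w, z)$ with $w \in S$ and $z \in T$ is exactly $\inangle{\mathbf{1}_S \otimes \mathbf{1}_T, \Phi}$, and the two game values are this inner product (on $(S, T)$ or on $(W, Z)$) divided by the corresponding count of ``valid'' tuples.

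To handle the normalizing denominator, set $\rho := |S|/|W|$, $\tau := |T|/|Z|$, $\hat s(x) := |S \cap N_{H_1}(x)|/(\rho d_{H_1})$, and $\hat t(y)$ analogously. The $\ell_2$ fortifier guarantee translates to $\E_x[(\hat s - 1)^2] \le \epsilon$ and $\E_y[(\hat t - 1)^2] \le \epsilon$, and bi-regularity of $G$ makes the marginals of $(x, y) \in E$ uniform on $X$ and $Y$. A single application of Cauchy--Schwarz then yields $\E_{(x, y) \in E}[\hat s(x) \hat t(y)] \in [1 - \epsilon, 1 + \epsilon]$, so the two denominators agree up to a factor of $\rho \tau (1 \pm O(\epsilon))$.

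The main step is to show $\inangle{\mathbf{1}_S \otimes \mathbf{1}_T, \Phi} \le \rho \tau \inangle{\mathbf{1}_W \otimes \mathbf{1}_Z, \Phi} + O(\epsilon)\cdot \rho \tau\, |E|\, d_{H_1} d_{H_2}$. Decomposing $\mathbf{1}_S = \rho \mathbf{1}_W + \sigma_S$ with $\sigma_S$ zero-mean (and similarly $\mathbf{1}_T = \tau \mathbf{1}_Z + \sigma_T$) splits the bilinear form into the leading term plus three cross-terms involving $\sigma_S$ and/or $\sigma_T$. For each of the simpler cross-terms $\rho \inangle{\mathbf{1}_W \otimes \sigma_T, \Phi}$ and $\tau \inangle{\sigma_S \otimes \mathbf{1}_Z, \Phi}$, the rank-one structure of $\Phi$ — each $\mathbf{1}_{B_x^a}$ is supported on $N_{H_1}(x)$, and similarly for $\mathbf{1}_{C_y^b}$ — lets me invoke the fortifier's $\ell_2$ guarantee as an expander-mixing-style inequality: the induced vector $A\sigma_S$ on $X$ has entries $A\sigma_S(x) = d_{H_1}(s(x) - \rho)$, whose $\ell_2$ norm is controlled directly by the fortifier, and Cauchy--Schwarz against the weights $\sum_b |C_y^b|$ on the other side gives the desired bound.

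The principal obstacle will be the fully bilinear cross-term $\inangle{\sigma_S \otimes \sigma_T, \Phi}$, where both zero-mean perturbations act simultaneously and one cannot reduce to a single-side application of a fortifier. I would handle it by using the $\ell_2$ property of both fortifiers in tandem: either a double Cauchy--Schwarz that invokes the spectral bounds on the $H_1$- and $H_2$-sides together, or a hybrid step that first replaces $\mathbf{1}_T$ by $\tau \mathbf{1}_Z$ using the $H_2$-fortifier alone and then applies the $H_1$-fortifier to the remaining linear functional in $\sigma_S$. The bi-regularity of $G$ is what keeps these two errors from compounding; combining the three cross-term bounds with the denominator estimate yields $\val_{G^*_{S \times T}}(f_1, f_2) - \val_{G^*}(f_1, f_2) \le O(\epsilon)$, as required.
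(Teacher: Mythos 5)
Your algebraic setup is sound as far as it goes: the bilinear form $\Phi$, the normalization via $\hat s,\hat t$, and the estimate $\E_{(x,y)\in E}[\hat s(x)\hat t(y)]=1\pm\epsilon$ are correct (the latter is essentially \autoref{claim:triang1-bound} of the paper, specialized to $\lambda_0=1$). The problem is the overall plan: the per-strategy inequality $\val_{G^*_{S\times T}}(f_1,f_2)\le\val_{G^*}(f_1,f_2)+O(\epsilon)$ that you set out to prove is strictly stronger than robustness and is false, no matter how strong the fortifiers are. Take any satisfiable game $G$ in which every vertex has a label rejected by all its constraints, take $S,T$ of density $\rho=\tau=1/2$, and let $f_1,f_2$ answer according to a perfect labelling on $S$ and $T$ and with rejecting labels outside. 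Then $\val_{G^*_{S\times T}}(f_1,f_2)=1$, while $\val_{G^*}(f_1,f_2)\approx\rho\tau=1/4$. Accordingly your ``main step'' inequality $\inangle{\mathbf{1}_S\otimes\mathbf{1}_T,\Phi}\le\rho\tau\inangle{\mathbf{1}_W\otimes\mathbf{1}_Z,\Phi}+O(\epsilon)\rho\tau|E|d_{H_1}d_{H_2}$ fails for this strategy: the left side is $\approx\rho\tau|E|d_{H_1}d_{H_2}$ while the leading term on the right is $\approx(\rho\tau)^2|E|d_{H_1}d_{H_2}$. In this example \emph{all three} cross terms, not only $\inangle{\sigma_S\otimes\sigma_T,\Phi}$, are of order $\Omega(1)\cdot\rho\tau|E|d_{H_1}d_{H_2}$: e.g.\ $B_x^{a^*}=S\cap N_{H_1}(x)$, so $\inangle{\sigma_S,\mathbf{1}_{B_x^{a^*}}}=(1-\rho)|S\cap N_{H_1}(x)|$ is large. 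The technical reason no fortifier-based bound can rescue these terms is that the fortifier hypothesis only controls the neighbourhood-level statistics $|S\cap N_{H_1}(x)|$ (i.e.\ $\mu_S$), whereas your cross terms involve $|S\cap B_x^a|-\rho|B_x^a|$ for the strategy-dependent refinements $B_x^a\subseteq N_{H_1}(x)$, and the optimal strategy may correlate with $S,T$ exactly as above.

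The statement that is both true and sufficient compares the sub-game value with $\val(G^*)=\val(G)$ (\autoref{lem:concat-value-same}), not with the value of the \emph{same} strategy on the whole game. Conditioned on the underlying edge $(x,y)$, a strategy restricted to $S\times T$ induces a product randomized strategy for $G$ (pick $w$ uniformly in $S\cap N_{H_1}(x)$, $z$ uniformly in $T\cap N_{H_2}(y)$), whose success probabilities $q_{x,y}$ satisfy $\sum_{(x,y)}\frac{1}{|E|}q_{x,y}\le\val(G)$; hence $\val(G^*_{S\times T})\le\val(G)+\abs{\pi-\vecu}_1$, where $\pi_{x,y}\propto\mu_S(x)\mu_T(y)$ is the induced edge distribution of \eqref{eqn:pi-ST}. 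The whole content of the proof is then to show $\abs{\pi-\vecu}_1\le O(\epsilon)$ from the $\ell_1$ and $\ell_2$ fortifier guarantees, which is \autoref{thm:fortification-on-expanders-restated} (proved via \autoref{claim:triang1-bound} and \autoref{claim:triang2-bound}); your denominator computation is the first of these, and the second bounds the numerator using the $\ell_1$ bounds plus one Cauchy--Schwarz on $\sum_{(x,y)}|\mu_S^\perp(x)\mu_T^\perp(y)|$. Concretely, your argument can be repaired by replacing the comparison term $\rho\tau\inangle{\mathbf{1}_W\otimes\mathbf{1}_Z,\Phi}$ with $\val(G)\cdot\sum_{(x,y)\in E}|S\cap N_{H_1}(x)||T\cap N_{H_2}(y)|$, using $\sum_{(a,b)\in\psi_{x,y}}|S\cap B_x^a||T\cap C_y^b|\le q_{x,y}\,|S\cap N_{H_1}(x)||T\cap N_{H_2}(y)|$ — but at that point you have reproduced the paper's proof rather than found an alternative to it.
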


In particular, bipartite spectral expanders are good fortifiers, as
\autoref{lem:expanders-are-fortifiers} shows. This gives us our main
result which follows from \autoref{lem:expanders-are-fortifiers} and \autoref{thm:fortification-projection}: 

\begin{corollary}\label{thm:main-thm}
  Let $G$ be a general two-prover game on a bi-regular graph
  $((X,Y),E)$. For any $\epsilon,\delta > 0$, if $H_1 = ((W,X),E_1)$
  and $H_2 = ((Z,Y),E_2)$ are two $\lambda$-expanders (\autoref{defn:expanders}) 
  with $\lambda < \epsilon \sqrt{\delta}$
  then concatenated game $G^*= H_1 \circ G \circ H_2$ is
  $(\delta,4\epsilon)$-robust.
\end{corollary}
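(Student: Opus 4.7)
The plan is to chain the two results stated immediately above the corollary and merely track constants. First, I would apply \autoref{lem:expanders-are-fortifiers} separately to $H_1$ and $H_2$. Since the hypothesis $\lambda<\epsilon\sqrt{\delta}$ is given, this upgrades both graphs from $\lambda$-expanders to $(\delta,\epsilon,\epsilon)$-fortifiers in the sense of \autoref{defn:fortifiers}, giving simultaneously the $\ell_1$-guarantee $\abs{\pi-\vecu}_1\le\epsilon$ and the $\ell_2$-guarantee $\norm{\pi-\vecu}^2\le\epsilon/|X|$ on distributions induced by large left-subsets.

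Next, I would invoke \autoref{thm:fortification-projection} with this pair of $(\delta,\epsilon,\epsilon)$-fortifiers. The theorem guarantees that $G^*=H_1\circ G\circ H_2$ is $(\delta,O(\epsilon))$-robust. The only thing left is to verify that the implicit constant in the $O(\epsilon)$ is at most $4$, which one reads off from the quantitative argument used inside \autoref{thm:fortification-projection}: the acceptance-probability shift under conditioning on an $S\times T$ rectangle is controlled by a sum of contributions, each bounded by either the $\ell_1$ or the $\ell_2$ guarantee of one of the two fortifiers, yielding a total of at most $4\epsilon$. Putting this together immediately gives $(\delta,4\epsilon)$-robustness of $G^*$.

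The corollary therefore carries no new content; all the real work lies in the two ingredients being cited. The expander-to-fortifier step of \autoref{lem:expanders-are-fortifiers} is a standard spectral computation: decompose the uniform distribution on $S\subseteq W$ as $\vecu_W+(\vecp-\vecu_W)$, push it through the bipartite random-walk operator, use orthogonality of $(\vecp-\vecu_W)$ to $\vecu_W$ and the second singular value bound $\lambda$ to obtain $\norm{\pi-\vecu_X}_2\le\lambda/\sqrt{\delta|W|}$, and finally derive the $\ell_1$ bound by Cauchy–Schwarz; the factor $\sqrt{\delta}$ in the hypothesis $\lambda<\epsilon\sqrt{\delta}$ is exactly what makes both $\ell_1$ and $\ell_2$ estimates meet the $\epsilon$ threshold. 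The fortifier-to-robustness step is the heart of the paper and genuinely uses both norms at once. The only possible obstacle in assembling the corollary is the bookkeeping of the $O(\epsilon)$ constant, which is a mechanical check rather than a conceptual difficulty.
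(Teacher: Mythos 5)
Your proposal is correct and follows essentially the same route as the paper, which derives the corollary precisely by combining \autoref{lem:expanders-are-fortifiers} (a $\lambda$-expander with $\lambda<\epsilon\sqrt{\delta}$ is a $(\delta,\epsilon,\epsilon)$-fortifier) with \autoref{thm:fortification-projection}, i.e.\ with \autoref{thm:fortification-on-expanders-restated} applied with $\lambda_0=1$. One bookkeeping remark: to land exactly on the constant $4$, keep the sharper $\ell_2$ parameter $\epsilon_2=\lambda^2/\delta<\epsilon^2$ supplied by \autoref{lem:expanders-are-fortifiers}, since the quantitative bound $2\epsilon_1+\epsilon_1^2+2\lambda_0\epsilon_2$ then gives $2\epsilon+3\epsilon^2\le 4\epsilon$ (the claim being trivial for large $\epsilon$), whereas plugging in $\epsilon_1=\epsilon_2=\epsilon$ as you state only yields $4\epsilon+\epsilon^2$.
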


As one would expect, the condition on the fortifier can be relaxed if
the underlying graph of the original label cover instance is a
spectral-expander. We prove the following theorem.
 \autoref{thm:fortification-projection}
follows from this theorem by setting $\lambda_0=1$.
\begin{theorem}\label{thm:fortification-on-expanders}
  Let $G$ be a two-prover game on bi-regular graph $((X,Y),E)$ where
  $G$ is an $\lambda_0$-expander. Then for any $\epsilon,\delta > 0$,
  if $H_1 = ((W,X),E_1)$ and $H_2 = ((Z,Y),E_2)$ are $(\delta, \epsilon,
  (\epsilon/\lambda_0))$-fortifiers, then the concatenated game $G^* =
  H_1 \circ G \circ H_2$ is $(\delta,O(\epsilon))$-robust.
\end{theorem}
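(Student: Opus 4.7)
The plan is: fix $A \subseteq W$ and $B \subseteq Z$ with $\delta_A := |A|/|W| \ge \delta$ and $\delta_B := |B|/|Z| \ge \delta$, fix arbitrary strategies $(f,g)$ for $G^*_{A \times B}$, and show $\val(f,g;\, G^*_{A\times B}) \le \val(G^*) + O(\epsilon)$. Introduce auxiliary functions $a\colon X \to [0,1]$ and $b\colon Y \to [0,1]$ by $a(x) := \Pr_{w \sim N_{H_1}(x)}[w \in A]$ and $b(y) := \Pr_{z \sim N_{H_2}(y)}[z \in B]$. Bi-regularity of $H_1, H_2$ lets me rewrite the fortifier hypotheses in terms of $a, b$ as $\E_X[\abs{a - \delta_A}] \le \delta_A\,\epsilon$ (from the $\ell_1$ guarantee) and $\sigma(a)^2 := \E_X[(a - \delta_A)^2] \le \delta_A^2\,(\epsilon/\lambda_0)$ (from the $\ell_2$ guarantee), and analogously for $b$.

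Let $q_A(x, y)$ denote the probability that $(f, g)$ succeed on the constraint over edge $(x, y) \in E$, conditioned on $w \sim N_{H_1}(x) \cap A$ and $z \sim N_{H_2}(y) \cap B$. Unpacking the $G^*_{A \times B}$ sampling gives
\[
\val(f, g;\, G^*_{A \times B}) \;=\; \frac{\E_{(x,y) \sim E}[\,a(x)\, b(y)\, q_A(x, y)\,]}{\E_{(x,y) \sim E}[\,a(x)\, b(y)\,]} \;=\; \E_{D_2}[q_A],
\]
where $D_2$ is the probability distribution on $E$ with density $a(x)b(y)/\E_E[ab]$ relative to the uniform distribution $D_1$ on $E$. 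On the other hand, answering $x$ with $f(w)[x]$ for $w$ uniformly random in $N_{H_1}(x)\cap A$ (and analogously for $y$) gives a valid randomized strategy for $G$ whose value is $\E_{D_1}[q_A]$, and hence $\E_{D_1}[q_A] \le \val(G) = \val(G^*)$. Since $q_A \in [0,1]$, it suffices to show $\norm{D_1 - D_2}_1 = O(\epsilon)$.

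To bound $\norm{D_1 - D_2}_1 = \E_E[\abs{ab - \E_E[ab]}] / \E_E[ab]$, write $\tilde a := a - \delta_A$ and $\tilde b := b - \delta_B$ and apply the expander mixing lemma for $G$: $\abs{\E_E[fg] - \E_X[f]\E_Y[g]} \le \lambda_0\,\sigma(f)\,\sigma(g)$ for any $f\colon X\to\R$, $g\colon Y\to\R$ (bi-regularity of $G$ identifies the edge marginals with uniform vertex distributions). Taking $(f,g) = (\tilde a, \tilde b)$ gives $\abs{\E_E[ab] - \delta_A \delta_B} \le \lambda_0\,\sigma(a)\,\sigma(b) \le \delta_A \delta_B\,\epsilon$, so $\E_E[ab] \ge \delta_A \delta_B (1 - \epsilon)$. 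Expanding $ab - \delta_A \delta_B = \delta_A \tilde b + \delta_B \tilde a + \tilde a \tilde b$, the first two summands contribute $O(\delta_A \delta_B\,\epsilon)$ to $\E_E[\abs{\,\cdot\,}]$ by the $\ell_1$ fortifier bound; the cross term is controlled by a second use of expander mixing, this time on $\abs{\tilde a}$ and $\abs{\tilde b}$:
\[
\E_E[\abs{\tilde a\,\tilde b}] \;\le\; \E_X[\abs{\tilde a}]\,\E_Y[\abs{\tilde b}] + \lambda_0\,\sigma(a)\,\sigma(b) \;\le\; \delta_A \delta_B\,\epsilon^2 + \delta_A \delta_B\,\epsilon \;=\; O(\delta_A \delta_B\,\epsilon),
\]
using $\sigma(\abs{\tilde a}) \le \sigma(a)$ and similarly for $b$. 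Combining, $\E_E[\abs{ab - \delta_A \delta_B}] = O(\delta_A \delta_B\,\epsilon)$ and thus $\norm{D_1 - D_2}_1 = O(\epsilon)$.

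The main subtlety is exactly this cross-term estimate: a naive Cauchy--Schwarz bound gives only $\E_E[\abs{\tilde a \tilde b}] \le \sigma(a)\,\sigma(b) = O(\delta_A \delta_B\,\epsilon/\lambda_0)$, which is too weak by a factor of $1/\lambda_0$. Routing through expander mixing on $\abs{\tilde a}, \abs{\tilde b}$ introduces the extra $\lambda_0$ that exactly cancels the $1/\lambda_0$ inside the fortifier's $\ell_2$ parameter --- which is precisely why the theorem calibrates that parameter to $\epsilon/\lambda_0$. Specializing to $\lambda_0 = 1$ recovers \autoref{thm:fortification-projection} as an immediate corollary.
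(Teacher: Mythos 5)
Your proof is correct and follows essentially the same route as the paper: after translating the fortifier guarantees into $\ell_1$/$\ell_2$ bounds on the (rescaled) induced densities, you show the conditional edge distribution of \eqref{eqn:pi-ST} is $O(\epsilon)$-close to uniform by bounding the normalization via the expander bound on the orthogonal components and, crucially, handling the cross term by applying expander mixing to the entrywise absolute values $\abs{\tilde a},\abs{\tilde b}$ --- exactly the paper's $f_S,f_T$ trick in \autoref{claim:triang2-bound}, which is what cancels the $1/\lambda_0$ in the $\ell_2$ parameter. The only differences are cosmetic: you work with densities $a,b$ instead of the distributions $\mu_S,\mu_T$, and you spell out (via the randomized-strategy argument for $q_A$) the step from statistical closeness to robustness that the paper treats informally in \autoref{sec:large-rectangle-subgames}.
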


One could ask if the definition of a fortifier is too strong, or if a
weaker object would suffice. We argue
in \autoref{sec:fortifiers-are-necessary}
that if we proceed through concatenation, fortifiers are indeed
necessary to make a game robust. 

Bipartite Ramanujan graphs of degree $\Theta(1/\epsilon^2 \delta)$
have $\lambda < \epsilon \sqrt{\delta}$ and are therefore good
fortifiers. In \autoref{sec:random-graph}, we show that this is almost
optimal by proving a lower bound of $\Omega(1/\epsilon\delta)$ on the
left-degree of any graph that can achieve $(\delta,
\epsilon)$-robustness. This shows that our construction of using
expanders to achieve robustness is almost optimal, in terms of the
degree of the fortifier graph. Note that the degree of the fortifier is important
as the alphabet size of the concatenated game is the alphabet size of
the original game raised to the degree. There are known explicit constructions of 
bi-regular $(\delta, \epsilon)$-extractors with left-degree
$\poly(1/\epsilon) \poly\log(1/\delta)$. But the lower bound in \autoref{sec:fortifiers-are-necessary}
shows that $(\delta,\epsilon)$-extractors are not fortifiers if $\delta
\ll \epsilon$, which is usually the relevant setting (see \autoref{thm:robustness-to-pr} and \autoref{thm:parrep-for-general}). 

Though all the above results are stated for bi-regular games, any
two-prover game can be easily converted to one on a bi-regular graph
or roughly the same value via standard tricks. We outline such a
construction (similar to the construction in \cite{DH13} for
projection games) in \autoref{sec:deg-reduction}.\\

Independently, the author of \cite{Moshkovitz14} came up with a
different argument to obtain robustness of projection games by using a
$(\delta, \epsilon\delta)$-extractor. This is described in an updated
version \cite{Moshkovitz15} present on the author's homepage.

It is also seen from \autoref{thm:fortification-on-expanders} that
bi-regular $(\delta, \epsilon\delta)$-extractors are indeed $(\delta,
\epsilon,\epsilon)$-fortifiers as well. Using an expander instead is arguably
simpler, and is almost optimal.

\begin{remark*}
  Although this fix provides a proof of a Parallel Repetition Theorem
  for projection games following the framework of \cite{Moshkovitz14},
  the degree of the fortifier is too large to get the required PCP for proving optimal hardness
  of the \textsc{Set-Cover} problem that Dinur and Steurer~\cite{DS14}
  obtained. See \cite{Moshkovitz15} for a discussion on this. 
\end{remark*}

\subsubsection*{Remark about parallel repetition for general games}

A fairly straightforward generalization \autoref{thm:robustness-to-pr}
to robust general games on bi-regular graphs is the following.

\begin{lemma}[Parallel repetition for general robust games]\label{thm:parrep-for-general}
  Let $G$ be a general two-prover game on a bi-regular graph $((X,Y),E)$ with alphabets
  $\Sigma_X$ and $\Sigma_Y$. For any positive integer $k$, if
  $\epsilon,\delta > 0$ are parameters such that $2\delta
  |\Sigma_X \times \Sigma_Y|^{k-1} \leq \epsilon$ and $G$ is
  $(\delta,\epsilon)$-robust, then
  \[
  \val(G^k) \spaced{\leq} \inparen{\val(G)+\epsilon}^k + k\epsilon.
  \]  
\end{lemma}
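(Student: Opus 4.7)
The plan is to mirror Moshkovitz's projection-game proof of \autoref{thm:robustness-to-pr}, but tracking strategy answers on both sides of the bipartite graph, since general constraints are relations rather than functions. I induct on $k$ and prove the one-step bound
\[
\val(G^k) \;\leq\; (\val(G)+\epsilon)\cdot\val(G^{k-1}) + \epsilon/2,
\]
from which $\val(G^k)\leq(\val(G)+\epsilon)^k+k\epsilon$ follows by telescoping (when $\val(G)+\epsilon\leq 1$; the statement is otherwise trivial since the right-hand side exceeds $1$).

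Fix any strategy $(f,g)$, with $f\colon X^k\to\Sigma_X^k$ and $g\colon Y^k\to\Sigma_Y^k$. For each $(\alpha',\beta')\in\Sigma_X^{k-1}\times\Sigma_Y^{k-1}$ let $p_{\alpha',\beta'}$ be the probability of the event $E_{\alpha',\beta'}$ that $f(\vecx)|_{[k-1]}=\alpha'$, $g(\vecy)|_{[k-1]}=\beta'$, and $(\alpha'_i,\beta'_i)\in\psi_{e_i}$ for all $i<k$; let $p'_{\alpha',\beta'}$ be the probability that $E_{\alpha',\beta'}$ holds and $e_k$ is additionally satisfied. Then $\sum_{\alpha',\beta'} p'_{\alpha',\beta'}$ is the success probability of $(f,g)$ on $G^k$, while $\sum_{\alpha',\beta'} p_{\alpha',\beta'}$ is just the probability the first $k-1$ edges succeed; since $e_k$ is independent of $e_{[k-1]}$, viewing the first $k-1$ coordinates of $(f,g)$ as a randomised strategy for $G^{k-1}$ shows $\sum p_{\alpha',\beta'}\leq\val(G^{k-1})$.

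The key rectangle observation is as follows. Conditioning on $e_{[k-1]}$ (hence on $x_{[k-1]},y_{[k-1]}$), the conditions $f(\vecx)|_{[k-1]}=\alpha'$ and $g(\vecy)|_{[k-1]}=\beta'$ restrict only $x_k,y_k$, becoming $x_k\in S:=\{x\in X:f(x_{[k-1]},x)|_{[k-1]}=\alpha'\}$ and $y_k\in T:=\{y\in Y:g(y_{[k-1]},y)|_{[k-1]}=\beta'\}$. For each $e_{[k-1]}$ satisfying the first $k-1$ constraints under $(\alpha',\beta')$ I split cases. Large case: if $|S|\geq\delta|X|$ and $|T|\geq\delta|Y|$, then the map $x\mapsto f(x_{[k-1]},x)_k$, $y\mapsto g(y_{[k-1]},y)_k$ is a labelling for $G_{S\times T}$, so $(\delta,\epsilon)$-robustness gives $\Pr[e_k\in S\times T\text{ and satisfied}]\leq(\val(G)+\epsilon)\Pr[e_k\in S\times T]$. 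Small case: otherwise, bi-regularity gives $\Pr[e_k\in S\times T]\leq\min(|S|/|X|,|T|/|Y|)\leq\delta$.

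Summing the large-case contributions across $e_{[k-1]}$ produces $(\val(G)+\epsilon)\,p_{\alpha',\beta'}$, and summing then over $(\alpha',\beta')$ yields at most $(\val(G)+\epsilon)\val(G^{k-1})$. For the small case, I charge $\delta$ per pair $(\alpha',\beta')$ satisfying a given $e_{[k-1]}$; since any fixed $e_{[k-1]}$ has at most $\prod_{i<k}|\psi_{e_i}|\leq|\Sigma_X\times\Sigma_Y|^{k-1}$ such pairs, the total small-case contribution is at most $\delta\cdot|\Sigma_X\times\Sigma_Y|^{k-1}\leq\epsilon/2$ by hypothesis. Together these establish the one-step bound. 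The main subtlety is the combinatorial count in the small case, which forces the $|\Sigma_X\times\Sigma_Y|^{k-1}$ factor in the hypothesis (rather than the $|\Sigma_Y|^{k-1}$ factor sufficient for projection games, where the $X$-side answer is forced by consistency); the rest is bookkeeping built on the product structure of i.i.d.\ edges and bi-regularity.
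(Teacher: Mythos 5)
Your proposal is correct and follows essentially the same route as the paper's proof (Lemma~\ref{lem:p2-generalgames} plus unrolling the recursion): partition according to the rectangle $S\times T$ determined by the first $k-1$ queries and answers, apply $(\delta,\epsilon)$-robustness on large accepting rectangles, bound small accepting rectangles by counting the at most $\abs{\Sigma_X\times\Sigma_Y}^{k-1}$ accepting label tuples per query prefix via bi-regularity, and telescope. The only differences are cosmetic: you phrase the bookkeeping probabilistically rather than as fractions of $\abs{E}^k$, and your small-case bound gives $\epsilon/2$ instead of $\epsilon$ per round.
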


But it is to be noted that the fortification procedure via
concatenating a fortifier makes $|\Sigma_X| = \exp(1/\delta)$
and in such scenarios $\delta |\Sigma_X| \gg 1$ making it infeasible
to ensure $2\delta|\Sigma_X \times \Sigma_Y|^{k-1} \leq
\epsilon$. Hence, though \autoref{thm:parrep-for-general} may be
useful in cases where we know that the game $G$ is robust via other
means, the technique of fortification via concatenation increases the
alphabet size too much for \autoref{thm:parrep-for-general} to be applicable.

\medskip

For the case of projection games, this is not an issue as
\autoref{thm:robustness-to-pr} only requires $2\delta |\Sigma_Y|^{k-1}
< \epsilon$ and concatenating $G_\sym$ by a fortifier only increases
$\abs{\Sigma_X}$ and keeps $\Sigma_Y$ unchanged. Thus, one can indeed
choose $\epsilon$ and $\delta$ small enough to give a parallel
repetition theorem for a robust version of an arbitrary projection
game. 

\section{Preliminaries}\label{sec:prelims}
\subsection*{Notation}

\begin{itemize}
  \setlength\itemsep{0em}
\item For any vector $\veca$, let $\abs{\veca}_1 := \sum_i \abs{\veca_i}$, 
and $\norm{\veca} := \sqrt{ \sum_i \veca_i^2}$ be the $\ell_1$ and $\ell_2$-norms respectively.
\item We shall use $\vecu_S$ to refer to the uniform distribution on a
  set $S$. Normally, the set $S$ would be clear from context and in
  such case we shall drop the subscript $S$.
\item For any vector $\veca$, we shall use $\veca^{\parallel}$ to
  refer to the component along the direction of $\vecu$, and
  $\veca^\perp$ to refer to the component orthogonal to $\vecu$.
\item We shall assume that the underlying graph for the games is
  bi-regular.  This is more or less without loss of generality via
  standard sampling tricks (see \autoref{sec:deg-reduction}).
\end{itemize}

We define the \emph{concatenation} operation of a two-prover games  with a bipartite graph  that was
alluded to in \autoref{sec:intro-fortified-games}.

\begin{definition}[Concatenation]\label{defn:concatenation}

Given a two-prover game on a graph $G=((X,Y),E)$ with a set of constraints $\psi$, a pair of alphabets $\Sigma_X$ and $\Sigma_Y$, bipartite
graphs $H_1=((W,X),E_1)$ with left degree $D_1$, and $H_2 = ((Z,Y),E_2)$ with left-degree $D_2$, the \emph{concatenated game} is a  game on the (multi) graph $H_1 \circ G \circ H_2 = ((W,Z),E_{H_1\circ G \circ H_2})$  with $\Sigma_W =
\Sigma_X^{D_1}$ and $\Sigma_Z = \Sigma_Y^{D_2}$. Label of a vertex $w\in W$ ($z\in Z$) can be thought of labels to its neighbors in $H_1$($H_2$) in a fixed order. For any edge $(w,z) \in E_{H_1\circ G \circ H_2}$, there exists $ (x,y) \in E$ such that $(w,x) \in E_1$, and $(z,y) \in E_2$. The constraint for this
edge first obtains the label of $x$ from $w$, and similarly obtains the label for $y$ from the label of $z$, and checks the constraint $\psi_{(x,y)}$ according to the game $G$.
\end{definition} 

\begin{figure}
\begin{center}
\begin{tikzpicture}
\draw (0,0) ellipse (0.5cm and 2cm);
\draw (2,0) ellipse (0.5cm and 2cm);
\draw (4,0) ellipse (0.5cm and 1.5cm);
\draw (6,0) ellipse (0.5cm and 1.5cm);

\node at (0,-2.5) {$W$};
\node at (2,-2.5) {$X$};
\node at (4,-2) {$Y$};
\node at (6,-2) {$Z$};
\tikzstyle{mycirc}=[circle, draw,
                        inner sep=0pt, minimum width=4pt]
\node[mycirc] (w1) at (0,1.5) {};
\node[mycirc] (w2) at (0,0.5) {};
\node[mycirc] (w3) at (0,-0.5) {};
\node[mycirc] (w4) at (0,-1.5) {};

\node[mycirc] (x1) at (2,1.5) {}
edge[<-,thin,draw=black!30] (w1)
edge[<-,thin,draw=black!30] (w3)
;
\node[mycirc] (x2) at (2,0.5) {}
edge[<-,thin,draw=black!30] (w1)
edge[<-,thin,draw=black!30] (w4)
;

\node[mycirc] (x3) at (2,-0.5) {}
edge[<-,thin,draw=black!30] (w3)
edge[<-,thin,draw=black!30] (w2)
;

\node[mycirc] (x4) at (2,-1.5) {}
edge[<-,thin,draw=black!30] (w2)
edge[<-,thin,draw=black!30] (w4)
;

\node[mycirc] (y1) at (4,1) {}
edge[<-,thin,draw=black!30] (x1)
edge[<-,thin,draw=black!30] (x2)
edge[<-,thin,draw=black!30] (x3)
edge[<-,thin,draw=black!30] (x4)
;

\node[mycirc] (y2) at (4,0) {}
edge[<-,thin,draw=black!30] (x1)
edge[<-,thin,draw=black!30] (x2)
edge[<-,thin,draw=black!30] (x3)
edge[<-,thin,draw=black!30] (x4)
;

\node[mycirc] (y3) at (4,-1) {}
edge[<-,thin,draw=black!30] (x1)
edge[<-,thin,draw=black!30] (x2)
edge[<-,thin,draw=black!30] (x3)
edge[<-,thin,draw=black!30] (x4)
;

\node[mycirc] (z1) at (6,1) {}
edge[<-,thin,draw=black!30] (y2)
edge[<-,thin,draw=black!30] (y3)
;
\node[mycirc] (z2) at (6,0) {}
edge[<-,thin,draw=black!30] (y1)
edge[<-,thin,draw=black!30] (y3)
;

\node[mycirc] (z3) at (6,-1) {}
edge[<-,thin,draw=black!30] (y1)
edge[<-,thin,draw=black!30] (y2)
;

\draw[very thick,->] (w2) to[bend left] (z3);
\path (w2) edge[thick,dashed, ->] (x3)
 (x3) edge[thick,dashed, ->] (y2) 
 (y2) edge[thick,dashed, ->] (z3);
\end{tikzpicture}
\end{center}
\caption{Concatenated Games}\label{fig:concat-games}
\end{figure}
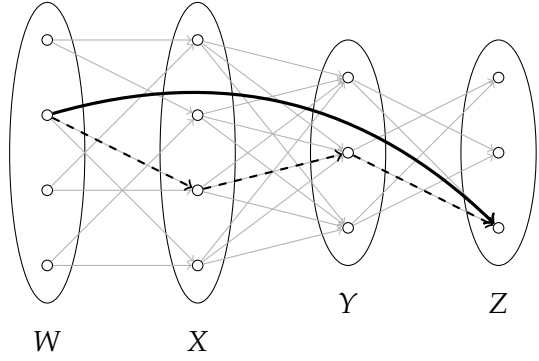

\begin{remark*}
  As mentioned earlier for projection games, as in \cite{Moshkovitz14}
  we shall work with symmetrized version $G_\sym$. In $G_\sym$ which
  is played on $((X, X),E_\sym)$,
  concatenating both sides with the same $H_1$ ensures that the resulting game $G^*$ is
  still a symmetrized projection game, and that
  the concatenation operation only changes $\Sigma_X$ and leaves
  $\Sigma_Y$ unchanged for the underlying projection game.

  We state the results in a general setting as the focus here would be
  mainly on the study of distributions of edges of sub-graphs of
  concatenated graphs.
\end{remark*}

\begin{lemma}[Concatenation preserves value]\label{lem:concat-value-same} \cite{Moshkovitz14}
Given any two-prover game on a bi-regular graph $G$, if $H_1$ and $H_2$ are bi-regular graphs, then we have:
\[
\val(H_1\circ G\circ H_2) = \val(G).
\]
\end{lemma}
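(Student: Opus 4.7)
The plan is to establish the two inequalities $\val(H_1 \circ G \circ H_2) \geq \val(G)$ and $\val(H_1 \circ G \circ H_2) \leq \val(G)$ separately, via natural liftings of strategies in both directions; bi-regularity of $H_1$ and $H_2$ is what makes the edge distribution of the concatenated game factor cleanly into three independent samples.

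For the direction $\val(H_1 \circ G \circ H_2) \geq \val(G)$, I would lift any strategy $(\sigma_X, \sigma_Y)$ for $G$ to a strategy on the concatenated game by having each $w \in W$ answer with the tuple $(\sigma_X(x))_{x \in N_{H_1}(w)}$, and each $z \in Z$ answer with $(\sigma_Y(y))_{y \in N_{H_2}(z)}$, in the fixed order prescribed by \autoref{defn:concatenation}. By construction a verifier edge of $H_1 \circ G \circ H_2$ corresponds to a tuple $(w,x,y,z)$ with $(w,x)\in E_1, (x,y)\in E, (y,z)\in E_2$, and the check on this edge amounts to testing $\psi_{(x,y)}(\sigma_X(x), \sigma_Y(y))$. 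Hence the acceptance probability coincides with that of $(\sigma_X, \sigma_Y)$ on $G$.

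For the reverse direction $\val(H_1 \circ G \circ H_2) \leq \val(G)$, I would go from any strategy $(\sigma^*_W, \sigma^*_Z)$ on the concatenated game to a randomized strategy on $G$: upon receiving $x$, the first prover samples a uniformly random $w$ from the neighbors of $x$ in $H_1$ and outputs the coordinate of $\sigma^*_W(w)$ corresponding to $x$; the second prover behaves analogously using $H_2$. Since the value of $G$ is a supremum over (deterministic) strategies, it suffices to argue the expected acceptance probability of this randomized strategy equals $\val_{\sigma^*}(H_1 \circ G \circ H_2)$; then averaging yields a deterministic strategy of at least that value.

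The one step that needs care is verifying the edge distribution of $H_1 \circ G \circ H_2$ decomposes as expected. By bi-regularity, for each $(x,y) \in E$ the number of concatenated edges through $(x,y)$ equals the product of the right-degrees of $H_1$ and $H_2$, independent of $(x,y)$. Consequently a uniformly random edge of $H_1 \circ G \circ H_2$ is generated by independently picking $(x,y) \in E$ uniformly, $w$ uniformly among neighbors of $x$ in $H_1$, and $z$ uniformly among neighbors of $y$ in $H_2$ — exactly the distribution appearing in both constructions above. This is the only place bi-regularity is used; without it one would have to reweight by degrees and the equality would hold only up to this correction.
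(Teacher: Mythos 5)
Your proof is correct, and it is the standard argument one would give for this lemma; the paper itself does not include a proof but simply cites \cite{Moshkovitz14}, so there is no alternative route to compare against. Both directions and the crucial distributional observation — that right-regularity of $H_1$ and $H_2$ makes a uniformly random edge of the concatenated multigraph factor as a uniform $(x,y)\in E$ together with independent uniform choices of $w\in N_{H_1}(x)$ and $z\in N_{H_2}(y)$ — are stated precisely, and the averaging step that converts the induced randomized strategy on $G$ into a deterministic one of at least the same value is exactly the point that needed to be said out loud. One small observation worth adding: your proof only invokes (right-)regularity of $H_1$ and $H_2$, not the stated bi-regularity of $G$ itself; that hypothesis is carried along in the lemma for the surrounding pipeline (e.g. \autoref{thm:robustness-to-pr} and the degree arguments in later sections) rather than being needed here.
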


\subsection*{Expanders, extractors and fortifiers}

\begin{definition}[Expanders]\label{defn:expanders}
  For any bi-regular bipartite graph $H = ((X,Y),E)$ with $|X| = |Y|$ and (left) degree $D$, we shall use
  $\lambda(H)$ to denote
\[
\lambda(H) \eqdef \max_{\vecv \perp \vecu} \frac{\norm{H\vecv}}{\norm{\vecv}}
\]
where the matrix $H$ is an $|Y| \times |X|$ matrix (rows indexed by vertices in $Y$, and columns by vertices in $X$) defined by $H(y,x) = 1/D$ if $(x,y)\in E$ and it is $0$ otherwise.
For any $\lambda > 0$, a bi-regular bipartite graph $H$ is an
\emph{$\lambda$-expander} if $\lambda(H) \leq \lambda$. 

More generally\footnote{We are not sure if this definition is standard, but is a natural generalization and precisely what we need in our proof.}, if $|X| \neq |Y|$, we define $\lambda(H)$ as follows:
\[
\lambda(H) \;\eqdef\; \max\limits_{\vecv \perp \vecu} \frac{\norm{H\vecv}}{\norm{\vecv}} \cdot \inparen{\frac{\norm{\vecu_X}}{\norm{H\vecu_X}}}.
\]
\end{definition}

\noindent 
Informally, $\lambda(H)$ measures ``how much \emph{more} does the matrix $H$ shrink $\vecv\perp \vecu_X$ compared to $\vecu_X$''?

\begin{lemma}[Explicit expanders \cite{BL06}]\label{lem:expander-construction}
For every $D >0$, there exists a fully explicit family of bipartite graphs $\{G_i\}$, such that $G_i$ is $D$-regular on both sides and $\lambda(G_i) \leq D^{-1/2} (\log D)^{3/2}$.
\end{lemma}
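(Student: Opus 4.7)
The plan is to invoke the Bilu--Linial construction of near-Ramanujan expanders from \cite{BL06} and convert its output to the bipartite setting through the bipartite double cover. Their main theorem produces, for every $D$, an explicit infinite family $\{G'_i\}$ of $D$-regular graphs whose second largest absolute eigenvalue satisfies $\max_{j\geq 2}|\lambda_j(G'_i)| \leq C\sqrt{D}\,(\log D)^{3/2}$ for an absolute constant $C$. From each $G'_i$ I would form its bipartite double cover $G_i$: two disjoint copies $V_i^{(0)}, V_i^{(1)}$ of $V(G'_i)$, with $(u^{(0)}, v^{(1)})$ an edge of $G_i$ whenever $\{u,v\}\in E(G'_i)$. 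The graph $G_i$ is bipartite and $D$-regular on both sides, and its biadjacency matrix is precisely the (symmetric) adjacency matrix $A(G'_i)$, so its singular values are exactly $|\lambda_1(G'_i)|,\ldots,|\lambda_n(G'_i)|$. Under the normalization $H = A(G'_i)/D$ of \autoref{defn:expanders} this gives $\lambda(G_i) = \max_{j \geq 2} |\lambda_j(G'_i)|/D \leq C\,D^{-1/2}(\log D)^{3/2}$, and the absolute constant can be absorbed into the $(\log D)^{3/2}$ factor once $D$ exceeds a fixed constant (the remaining bounded set of small $D$ can be handled by hand).

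The Bilu--Linial construction is itself iterative. One starts with a small $D$-regular base graph $H_0$ having good spectral gap (for instance $K_{D+1}$, whose nontrivial eigenvalues are all $-1$), and at each step forms a \emph{$2$-lift} of the current graph $H_{k-1}$. A $2$-lift is parametrized by a signing $s: E(H_{k-1}) \to \{\pm 1\}$ and doubles the vertex count. A central observation of \cite{BL06} is that the spectrum of the lift is the multiset union of the eigenvalues of $H_{k-1}$ (the ``old'' eigenvalues) and the eigenvalues of the signed adjacency matrix $A_s$ (the ``new'' eigenvalues). Consequently, if every $A_s$ encountered along the iteration has spectral radius $O(\sqrt{D}\,(\log D)^{3/2})$, then by induction every $H_k$ has second absolute eigenvalue of the same order, while $|V(H_k)| = 2^k |V(H_0)|$ grows to arbitrary size.

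The main obstacle, and the principal content of \cite{BL06}, is to produce a signing with $\|A_s\| = O(\sqrt{D}\,(\log D)^{3/2})$ explicitly in deterministic polynomial time on an arbitrary $D$-regular input graph. Bilu and Linial accomplish this by relaxing the signing problem to a semidefinite program whose optimum is bounded through a trace-moment calculation in the spirit of Alon--Boppana, and by derandomizing the associated rounding using the method of conditional expectations. Plugging this subroutine into the iterative $2$-lift produces the fully explicit infinite family of $D$-regular graphs with near-Ramanujan spectral gap, whose bipartite double covers give the family $\{G_i\}$ required by the lemma.
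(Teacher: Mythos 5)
The paper itself gives no proof of \autoref{lem:expander-construction}; it is imported as a black box from \cite{BL06}, and your proposal is in essence the same move, supplemented by a correct reduction from the bipartite statement to the Bilu--Linial family. The double-cover step is sound: the biadjacency matrix of the cover of $G_i'$ is $A(G_i')$, whose top singular vector is the all-ones vector, and since the lift iteration never creates the eigenvalue $-D$ (the new eigenvalues are $O(\sqrt{D}(\log D)^{3/2})$ in absolute value and the base graph $K_{D+1}$ is non-bipartite), each $G_i'$ stays non-bipartite, so under \autoref{defn:expanders} one indeed gets $\lambda(G_i)=\max_{j\geq 2}\abs{\lambda_j(G_i')}/D$; an equally standard alternative, probably the reading intended by the citation, is to run the $2$-lift iteration directly from a bipartite, bi-regular base graph such as $K_{D,D}$, since $2$-lifts preserve bipartiteness and bi-regularity. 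One caveat about your account of the internals of \cite{BL06}: their argument is not an SDP relaxation bounded by trace moments; the existential bound comes from showing that a uniformly random signing has small discrepancy (roughly $O(\sqrt{D\log D})$) on all pairs of vertex sets, combined with their converse to the expander mixing lemma (small discrepancy forces spectral radius $O(\alpha\log(D/\alpha))$), and the constructive version derandomizes this argument, crucially using that the current graph is already a good expander, which holds inductively along the iteration. Since the lemma enters the present paper only as a citation, this misdescription does not affect the validity of your proposal, and the remaining quibbles (the absolute constant swallowed by $(\log D)^{3/2}$, and trivially small $D$) are artifacts of how the lemma is stated rather than gaps in your argument.
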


\begin{definition}[Extractors]
  A bipartite graph $H = ((X,Y),E)$ is an
  $(\delta,\epsilon)$-extractor if for every subset $S \subseteq X$
  such that $|S| \geq \delta |X|$, if $\pi$ is the induced probability
  distribution on $Y$ by taking a random element of $S$ and a random
  neighbour, then
  \[
  \abs{\pi - \vecu}_1 \spaced{\leq}  \epsilon.
  \]
\end{definition}

\begin{lemma}[Explicit Extractors \cite{RVW00}]\label{lem:extractor-construction}
There exists explicit $(\delta, \epsilon)$-extractors $G=(X,Y,E)$ such that $|X| = O(|Y|/\delta)$ and each vertex
of $X$ has degree $D= O(\exp(\poly(\log \log (1/\delta)))\cdot(1/\epsilon^2))$. 
\end{lemma}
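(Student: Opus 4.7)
The plan is to invoke the zig-zag product construction of Reingold, Vadhan, and Wigderson. The core technical device is the zig-zag graph product $G \zigzag H$, which combines a ``large'' outer graph $G$ with a small inner graph $H$ whose vertex set matches the degree of $G$, producing a new graph whose degree is small (the square of $H$'s degree) while preserving expansion properties inherited from $G$. The starting building block is a small, fully explicit expander/extractor---obtainable by algebraic Cayley-graph constructions or, at very small scales, by exhaustive search.

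First, I would recall (and in a self-contained exposition, prove) the key composition lemma: if $G$ has spectral expansion $\lambda_G$ and $H$ has spectral expansion $\lambda_H$, then $\lambda(G \zigzag H) \leq \lambda_G + \lambda_H + \lambda_H^2$. Combined with the standard mixing-lemma fact that a $\lambda$-expander is a $(\delta, \lambda/\sqrt{\delta})$-extractor, this reduces the problem to explicitly constructing, for every target size, a bipartite graph of spectral expansion at most $\epsilon\sqrt{\delta}$ with the prescribed degree. To obtain the unbalanced extractor promised by the lemma (with $|X| = O(|Y|/\delta)$), I would then use the bipartite (unbalanced) variant of the zig-zag product, in which the inner graph $H$ has differing left and right vertex sets so that the resulting graph has the desired stretch between the two sides.

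Finally, to achieve the quasi-polylogarithmic degree $D = O(\exp(\poly(\log \log(1/\delta))) \cdot (1/\epsilon^2))$, I would iterate the zig-zag construction: repeatedly square the outer graph (which amplifies spectral expansion at the cost of squaring the degree) and then compose with a fixed small inner graph $H$ (which shrinks the degree back down to $|V(H)|^2$). Each iteration roughly squares the size of the outer graph while inflating the degree only polynomially in $|V(H)|$, so after $O(\log \log(1/\delta))$ iterations one reaches a graph of the required size with degree quasi-polylogarithmic in $1/\delta$. The $1/\epsilon^2$ factor enters through the final choice of spectral gap target $\lambda \leq \epsilon\sqrt{\delta}$.

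The main obstacle is the careful parameter accounting across iterations: one must track simultaneously (i) the outer graph's spectral expansion, (ii) the accumulated degree, and (iii) the ratio $|X|/|Y|$ in the unbalanced setting, and verify that all three land at the claimed values at the same iteration count. A secondary subtlety is that the most natural zig-zag variant produces \emph{non-bipartite} graphs; an additional ``bipartite double cover'' step or an analysis of the bipartite zig-zag variant is needed to deliver the bipartite extractor statement that this lemma asserts.
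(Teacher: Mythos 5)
This lemma is stated in the paper as a citation to \cite{RVW00} without proof, so there is no internal argument to compare against. Evaluating your proposal on its own terms, there is a fundamental gap: the reduction in your second paragraph---construct a $\lambda$-expander with $\lambda \leq \epsilon\sqrt{\delta}$ and then invoke the expander mixing lemma to conclude it is a $(\delta,\epsilon)$-extractor---cannot possibly yield the claimed degree. Any $D$-regular $\lambda$-expander has $\lambda = \Omega(1/\sqrt{D})$, so to achieve $\lambda \leq \epsilon\sqrt{\delta}$ you are forced to take $D = \Omega\big(1/(\epsilon^2\delta)\big)$. This has a $1/\delta$ dependence on $\delta$, whereas the lemma claims degree $O\big(\exp(\poly(\log\log(1/\delta)))\cdot(1/\epsilon^2)\big)$, which is quasi-polylogarithmic in $1/\delta$. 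The paper itself makes exactly this distinction in the discussion following \autoref{lem:building-fortifiers}: ``the degree of $\delta$-expanders must be $\Omega(1/\delta^2)$ whereas we have explicit $(\delta,\epsilon)$-extractors of degree $(1/\epsilon^2)\exp(\poly\log\log(1/\delta))$ which has a much better dependence in $\delta$.'' In other words, the whole point of using an extractor rather than a bare expander is to escape the $1/\delta$ degree penalty, and any route through the spectral mixing lemma re-introduces it.

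What this means is that you cannot treat the RVW \emph{extractor} construction as ``zig-zag to build an expander with small $\lambda$, then apply mixing.'' That pipeline recovers RVW's constant-degree expander theorem, but not their extractor theorem. The RVW extractors are built by a genuinely different argument that works directly with statistical ($\ell_1$) distance from uniform on $\delta$-dense sources: a ``lossless condenser''/block-extraction machinery combined with a zig-zag-like composition for condensers, where the $\log\log(1/\delta)$ iteration count comes from repeatedly halving (roughly) the entropy deficiency rather than from squaring an outer graph. Your paragraph on iteration is otherwise in the right spirit---the $\log\log(1/\delta)$ levels of recursion and the $1/\epsilon^2$ factor at the final extraction step are real features of the construction---but the inner engine must be a condenser composition, not a spectral-gap amplification, or the $\delta$-dependence blows up to $1/\delta$. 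Your concerns about bipartiteness and bi-regularity are secondary but worth flagging as well: the explicit RVW extractors are not automatically bi-regular, and the paper elsewhere emphasizes that bi-regularity is essential (see the footnote inside the proof of \autoref{lem:building-fortifiers} and \autoref{sec:extractor-insufficient}), so an additional regularization argument along the lines of \autoref{sec:deg-reduction} or an averaging-sampler variant would be needed to use the object as stated.
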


Our earlier definition of a fortifier (\autoref{defn:fortifiers}) has
properties of both an expander and an extractor. Indeed, we can build
fortifiers by just taking a product an expander and an extractor.

\begin{lemma}\label{lem:building-fortifiers}
  If $H_1 = ((V,W),E_1)$ is a bi-regular
  $(\delta,\epsilon)$-extractor, and if $H_2 = ((W,X),E_2)$ is a bi-regular
  $\lambda$-expander, then the product graph $H_1 \cdot  H_2$ is an $(\delta,\epsilon, \lambda^2\epsilon/\delta)$-fortifier.
\end{lemma}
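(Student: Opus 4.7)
The plan is to push the deviation from uniform through the two stages of the walk. Fix $S \subseteq V$ with $\abs{S} \geq \delta \abs{V}$ and let $\pi'$ be the intermediate distribution on $W$ obtained by sampling $v$ uniformly in $S$ and then walking to a uniform $H_1$-neighbor $w$; the final distribution on $X$ is then $\pi = H_2 \pi'$, and this is exactly the distribution induced by $H_1 \cdot H_2$ on the starting set $S$. The extractor property of $H_1$ immediately gives the $\ell_1$ bound $\abs{\pi' - \vecu_W}_1 \leq \epsilon$. In addition, I will use the pointwise bound
\[
\pi'(w) \;=\; \frac{\abs{N_{H_1}(w) \cap S}}{\abs{S}\,D_1} \;\leq\; \frac{\abs{V}}{\abs{S}\,\abs{W}} \;\leq\; \frac{1}{\delta\abs{W}},
\]
which comes from the bi-regularity of $H_1$ (its right-degree is $\abs{V}D_1/\abs{W}$) and which in particular yields $\norm{\pi' - \vecu_W}_\infty \leq 1/(\delta\abs{W})$.

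For the $\ell_1$ guarantee of the fortifier, the key observation is that each column of the operator $H_2$ sums to $1$ (the vertex $w \in W$ has exactly $D_2$ neighbours, each contributing $1/D_2$), so $H_2$ is $\ell_1$-contractive: $\abs{H_2\vecv}_1 \leq \abs{\vecv}_1$ for every vector $\vecv$. Bi-regularity of $H_2$ also gives $H_2 \vecu_W = \vecu_X$. Combining these,
\[
\abs{\pi - \vecu_X}_1 \;=\; \abs{H_2(\pi' - \vecu_W)}_1 \;\leq\; \abs{\pi' - \vecu_W}_1 \;\leq\; \epsilon.
\]

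For the $\ell_2$ guarantee, observe that $\pi' - \vecu_W$ is orthogonal to $\vecu_W$, since both $\pi'$ and $\vecu_W$ are probability distributions and thus the difference has zero total mass. Hence the generalised expander bound of \autoref{defn:expanders} applies and gives
\[
\norm{H_2(\pi' - \vecu_W)} \;\leq\; \lambda \cdot \frac{\norm{H_2\vecu_W}}{\norm{\vecu_W}} \cdot \norm{\pi' - \vecu_W} \;=\; \lambda\sqrt{\abs{W}/\abs{X}} \cdot \norm{\pi' - \vecu_W}.
\]
The pointwise bound on $\pi'$ lets me upgrade the extractor's $\ell_1$-closeness to an $\ell_2$-closeness through the elementary inequality $\norm{\vecv}^2 \leq \norm{\vecv}_\infty \cdot \abs{\vecv}_1$, yielding $\norm{\pi' - \vecu_W}^2 \leq \epsilon/(\delta\abs{W})$. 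Substituting,
\[
\norm{\pi - \vecu_X}^2 \;\leq\; \lambda^2 \cdot \frac{\abs{W}}{\abs{X}} \cdot \frac{\epsilon}{\delta\abs{W}} \;=\; \frac{\lambda^2 \epsilon/\delta}{\abs{X}},
\]
which is exactly the fortifier bound with $\epsilon_2 = \lambda^2 \epsilon/\delta$.

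Conceptually, the extractor's $\ell_1$-closeness combined with the coarse $\ell_\infty$ bound $1/(\delta\abs{W})$ already upgrades to $\ell_2$-closeness on $W$ at scale $\sqrt{\epsilon/\delta}/\sqrt{\abs{W}}$, after which the expander shrinks the zero-sum component by the additional spectral factor $\lambda$ (up to the $\sqrt{\abs{W}/\abs{X}}$ correction for the non-square case). The only part that requires care is invoking the generalised expander definition correctly, but that is routine bookkeeping, so I do not anticipate a real obstacle.
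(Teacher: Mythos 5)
Your proof is correct and follows essentially the same route as the paper's: the $\ell_1$ bound via the $\ell_1$-contractivity of the normalized adjacency matrix $H_2$, and the $\ell_2$ bound by combining the bi-regularity-derived pointwise bound $\pi'(w) \leq 1/(\delta\abs{W})$ with $\norm{\vecv}^2 \leq \norm{\vecv}_\infty \abs{\vecv}_1$ and then applying the generalized expansion of $H_2$ to the zero-sum component. The only cosmetic difference is that the paper writes the intermediate inequality as $\norm{\pi_S - \vecu}^2 \leq \max_w \pi_S(w) \cdot \abs{\pi_S - \vecu}_1$ rather than using $\norm{\pi' - \vecu}_\infty$ directly, but both rely on the same bound and give the identical conclusion.
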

\begin{proof} 
Let $H_2$ be the normalized adjacency matrix of graph $H_2$ and let $\pi_S$ denote the probability distribution on $W$ obtained by picking an element of $S\subseteq V$ uniformly and then choosing  a random neighbour in $H_1$. Thus, $H_2\pi_S$ is the probability distribution on $X$ induced by the uniform distribution on $S$ and a random neighbour in $H_1 \cdot H_2$. We want to show for all $S$ such that $|S|\geq \delta |V|$,
$$| H_2\pi_S - \vecu|_1 \leq \epsilon \text{\; and\; } \|  H_2\pi_S - \vecu\|^2 \leq \frac{\lambda^2\epsilon/\delta}{|X|}.$$
The first inequality is obtained as $| H_2\pi_S - \vecu|_1 = | H_2(\pi_S - \vecu)|_1\leq  |\pi_S - \vecu|_1 \leq \epsilon,$
where we use the fact that $\abs{H_2 v}_1 \leq \abs{v}_1$ for any $v$ and any normalized adjacency matrix, and $|\pi_S - \vecu|_1 \leq \epsilon$ follows form the extractor property of $H_1$.

\noindent
As for the second inequality, observe that
$$\norm{\pi_S - \vecu}^2 \leq \max_{w\in W}(\pi_S(w)) \cdot |\pi_S - \vecu|_1 \leq \epsilon \cdot \max_{w\in W}(\pi_S(w)).$$
For a bi-regular extractor\footnote{The bound on the right-degree guaranteed by bi-regularity is crucial for this claim. Without this, extractors are not sufficient for fortification (\autoref{sec:extractor-insufficient}).} $H_1$ of left-degree $D$, the degree of any $w\in W$ is $\inparen{|V|\cdot D/|W|}$ and the number of edges out of $S$ is least $\delta |V| \cdot D$. Hence, $\max_w \pi_S(w) \leq 1/(\delta |W|)$, which is achieved if all neighbours of $w$ are in $S$. Therefore, 
\begin{align*}
\norm{\pi_S - \vecu}^2 & \leq  \frac{(\epsilon/\delta)}{|W|}\\
\implies \norm{H_2(\pi_S - \vecu)}^2 & \leq  \lambda^2\frac{|W|}{|X|} \norm{\pi_S - \vecu}^2  \leq \frac{|W|}{|X|} \cdot \frac{\lambda^2 \cdot (\epsilon /\delta)}{|W|} \;=\; \frac{\lambda^2 \cdot (\epsilon /\delta)}{|X|}.
\qedhere
\end{align*}
\end{proof}

In particular, any bi-regular $(\delta, \epsilon)$-extractor is a
$(\delta,\epsilon, \epsilon/\delta)$-fortifier. Hence, if the
underlying graph $G$ of the two-prover game is a $\sqrt{\delta}$-expander,
then \autoref{thm:fortification-on-expanders} states that merely using
an $(\delta,\epsilon)$-extractor as suggested in \cite{Moshkovitz14}
would be sufficient to make it $(\delta,O(\epsilon))$-robust. 

Also, since any graph is trivially a $1$-expander, a bi-regular
$(\delta, \epsilon\delta)$-extractor is also an
$(\delta,\epsilon,\epsilon)$-fortifier. The following lemma also shows
that expanders are also fortifiers with reasonable parameters as well.

\begin{lemma}\label{lem:expanders-are-fortifiers}
  Let $H = ((W,X),E_H)$ be any $\lambda$-expander. Then, for
  every $\delta > 0$, $H$ is also a $(\delta,\sqrt{\lambda^2/\delta}
  ,\lambda^2/\delta)$-fortifier.

  In particular, if $\lambda \leq \epsilon\sqrt{\delta}$, then $H$ is
  an $(\delta, \epsilon,\epsilon)$-fortifier.
\end{lemma}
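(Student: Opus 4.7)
The plan is to decompose $\pi - \vecu_X$ through the normalized adjacency matrix of $H$ and then invoke the (generalized) expander bound. First I would let $\mu_S \eqdef \mathbf{1}_S/|S|$ be the uniform distribution on $S \subseteq W$, so that $\pi = H\mu_S$. Since $H$ is bi-regular, a direct edge count gives $H\vecu_W = \vecu_X$, and hence
\[
\pi - \vecu_X \;=\; H(\mu_S - \vecu_W).
\]
Moreover, both $\mu_S$ and $\vecu_W$ have inner product $1/|W|$ with $\vecu_W$, so $\vecv \eqdef \mu_S - \vecu_W$ lies in $\vecu_W^\perp$, which is exactly the subspace where the expander inequality applies.

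Next I would unpack the generalized expander bound. Since $H\vecu_W = \vecu_X$, we have $\norm{\vecu_W}/\norm{H\vecu_W} = \sqrt{|X|/|W|}$, so the definition of $\lambda(H)$ yields $\norm{H\vecv} \leq \lambda\sqrt{|W|/|X|}\cdot\norm{\vecv}$ for every $\vecv \perp \vecu_W$. A short direct computation, expanding $\norm{\mu_S - \vecu_W}^2 = \norm{\mu_S}^2 - 2\innerproduct{\mu_S}{\vecu_W} + \norm{\vecu_W}^2$, gives $\norm{\vecv}^2 = 1/|S| - 1/|W| \leq 1/(\delta|W|)$. Plugging this in,
\[
\norm{\pi - \vecu_X}^2 \;\leq\; \lambda^2 \cdot \frac{|W|}{|X|} \cdot \frac{1}{\delta |W|} \;=\; \frac{\lambda^2/\delta}{|X|},
\]
which is exactly the claimed $\ell_2$ bound with $\epsilon_2 = \lambda^2/\delta$.

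The $\ell_1$ bound then follows immediately by Cauchy--Schwarz applied to $\pi - \vecu_X \in \mathbb{R}^{|X|}$: $\abs{\pi - \vecu_X}_1 \leq \sqrt{|X|}\cdot\norm{\pi - \vecu_X} \leq \sqrt{\lambda^2/\delta}$, which matches $\epsilon_1 = \sqrt{\lambda^2/\delta}$. The ``in particular'' clause is then immediate by substituting $\lambda \leq \epsilon\sqrt{\delta}$ into both bounds (using $\epsilon \leq 1$ so that $\epsilon^2 \leq \epsilon$).

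The only mildly delicate point, rather than a real obstacle, is tracking the dimension factors $|W|$ and $|X|$ when $H$ is not a square bi-regular graph: the factor $\sqrt{|W|/|X|}$ coming from the generalized expander bound is precisely compensated by $\norm{\vecv}^2 = O(1/|W|)$, so the $|W|$'s cancel and the resulting $\ell_2$ bound has the correct $1/|X|$ normalization called for by \autoref{defn:fortifiers}.
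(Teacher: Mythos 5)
Your proof is correct and follows essentially the same route as the paper's: bound $\norm{\mu_S - \vecu_W}^2 \leq 1/(\delta|W|)$, apply the generalized expander bound (with its $\sqrt{|W|/|X|}$ normalization) to get the $\ell_2$ estimate $\lambda^2/(\delta|X|)$, and deduce the $\ell_1$ bound by Cauchy--Schwarz. The only difference is that you spell out the dimension bookkeeping and the orthogonality of $\mu_S - \vecu_W$ explicitly, which the paper leaves implicit.
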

\begin{proof}
Let $H$ be the normalized adjacency matrix of $H$. Let $S\subseteq W$ such that $|S|\geq \delta|W|$. We have,
$$\|\vecu_S^\perp\|^2\leq \frac{1}{\delta|W|} .$$
Hence, by the expansion property of $H$,
$$\|H\vecu_S -\vecu\|^2 := \|H\vecu_S^\perp\|^2 \leq \lambda^2 \cdot \frac{|W|}{|X|} \cdot \|\vecu_S^\perp\|^2\leq \frac{\lambda^2/\delta}{|X|}.$$
$|H\vecu_S -\vecu|_1 \leq \sqrt{\lambda^2/\delta}$ follows from above and Cauchy-Schwarz inequality.
\end{proof}

Although \autoref{lem:expanders-are-fortifiers} shows that expanders
are also fortifiers for reasonable parameters, the construction in
\autoref{lem:building-fortifiers} is more useful when the underlying
graph for the two-prover game is already a good expander. For example, if the
underlying graph $G$ was a $\delta$-expander, then
\autoref{thm:fortification-on-expanders} suggests that we only require
a
$(\delta,\epsilon,\epsilon/\delta)$-fortifier. \autoref{lem:building-fortifiers}
implies that an $(\delta,\epsilon)$-extractor is already a
$(\delta,\epsilon,\epsilon/\delta)$-fortifier and hence is sufficient
to make the game robust. The main advantage of this is the degree of
$\delta$-expanders must be $\Omega(1/\delta^2)$ whereas we have
explicit $(\delta,\epsilon)$-extractors of degree $(1/\epsilon^2)
\exp(\poly\log\log (1/\delta))$ which has a much better dependence in
$\delta$. This dependence on $\delta$ is crucial for certain
applications. 

\section{Sub-games on large rectangles}\label{sec:large-rectangle-subgames}

Consider a concatenated general game $G^* = H_1 \circ G \circ H_2$ on
$((W,Z),E_{H_1 \circ G \circ H_2})$  and $S\subseteq W$ and $T\subseteq Z$.
Let $\mu_S$ (or $\mu_T$) denote the induced distributions on $X$(or $Y$) obtained by picking
 a uniformly random element of S (or T) and taking a uniformly random
neighbour in $H_1$ (or $H_2$). That is, the degree of any $x\in X$ (or $y\in Y$) within the set $S$ (or $T$) is proportional to $\mu_S(x)$ (or $\mu_T(y)$) (See \autoref{fig:subgames-on-rect}). 

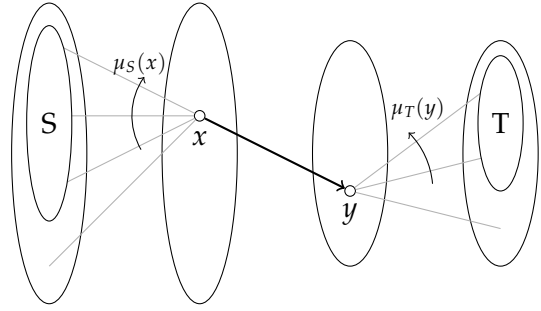
\begin{figure}[h]
\begin{center}
\begin{tikzpicture}
\draw (0,0) ellipse (0.5cm and 2cm);
\draw (2,0) ellipse (0.5cm and 2cm);
\draw (4,0) ellipse (0.5cm and 1.5cm);
\draw (6,0) ellipse (0.5cm and 1.5cm);

\tikzstyle{mycirc}=[circle, draw, inner sep=0pt, minimum width=4pt]
\node[mycirc] (x) at (2,0.5) {};
\node at (2,0.2) {$x$};
\node[mycirc] (y) at (4,-0.5) {};
\node at (4,-0.8) {$y$};

\draw[thick,->] (x) -- (y);

\draw[thin,draw=black!30] (x) -- (0, -1.5);
\draw[thin,draw=black!30] (x) -- (0,-0.5);
\draw[thin,draw=black!30] (x) -- (0, 0.5);
\draw[thin,draw=black!30] (x) -- (0, 1.5);
\draw[<-] (x) ++(145:0.9) arc (145:210:0.9);
\node at (1.2,1.2) {\scriptsize $\mu_S(x)$};

\draw[thin,draw=black!30] (y) -- (6, -1);
\draw[thin,draw=black!30] (y) -- (6, 0);
\draw[thin,draw=black!30] (y) -- (6, 1);
\draw[<-] (y) ++(45:1.1) arc (45:5:1.1);
\node at (4.9,0.6) {\scriptsize $\mu_T(y)$};

\draw[fill=white] (0,0.4) ellipse (0.3cm and 1.3cm);
\node at (0,0.4) {S};

\draw[fill=white] (6,0.4) ellipse (0.3cm and 0.9cm);
\node at (6,0.4) {T};

\end{tikzpicture}
\end{center}
\caption{Sub-games on large rectangles}\label{fig:subgames-on-rect}
\end{figure}
In a subgame $(G^*)_{S\times T}$, the distribution on
verifier checking the underlying constraint on $(x,y)$ is given by the following expression:

\begin{equation}\label{eqn:pi-ST}
\pi_{x,y} \quad = \quad \frac{\mu_S(x)
  \mu_T(y)}{\sum\limits_{(x,y)\in G} \mu_S(x) \mu_T(y)}.
\end{equation}

One way to show that the concatenated game $G^*$ is $(\delta,
O(\epsilon))$-robust would be to show that the above distribution
$\pi_{x,y}$ is $O(\epsilon)$-close to uniform whenever $|S|, |T|$ have
density at least $\delta$ because then the distribution on constraints
that the verifier is going to check in $G^*_{S\times T}$ is
$O(\epsilon)$ close to the distribution on constraints in $G$.  Hence,
up to additive factor of $O(\epsilon)$ the quantity
$\val(G^*)_{S\times T}$ is same as $\val(G)$.  The main question here
what properties should $H_1$ and $H_2$ satisfy so that the above distribution is
close to uniform?

\subsection{Fortifiers are necessary}\label{sec:fortifiers-are-necessary}

To prove that fortifiers are necessary, we shall restrict ourselves to
games on graphs $G = ((X,X),E)$. In such a setting, we can choose to
concatenate with the same graph $H$ both sides. We show that if a
bipartite graph $H = ((W,X),E_H)$, makes a game on a particular graph $G$, $(\delta,
O(\epsilon))$-robust, then $H$ is a good fortifier.

As mentioned earlier, if the graph $G$ had some
expansion properties, then the requirements on the graph $H$ to
concatenate with can be relaxed. Thus, naturally, the worst case graph
$G$ is one that expands the least --- a matching.

\begin{lemma}[Fortifiers are necessary]\label{lem:fortifiers-are-necessary}
  Let $\epsilon, \delta>0$ be small constants.
  Let $H = ((W,X),E_H)$ be a bi-regular graph, and let $G = ((X,X),E)$
  be a matching. Suppose that for every subset $S,T \subseteq W$ with
  $|S|,|T| \geq \delta |W|$, the  distribution (defined in Equation \eqref{eqn:pi-ST}) induced by the game $(H\circ G \circ H)_{S\times T}$
  on the edges of $G$   is $\epsilon$-close to uniform. Then, for every $S \subseteq W$ 
  with $|S| \geq \delta |W|$, 
  \begin{eqnarray}
    \abs{\mu_S - \vecu}_1 & = & \epsilon, \label{eqn:muS-l1}\\
    \norm{\mu_S - \vecu}^2 & = & \frac{O(\epsilon)}{|X|}.\label{eqn:muS-l2}
  \end{eqnarray}
\end{lemma}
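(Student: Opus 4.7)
The plan is to test the robustness hypothesis against two carefully chosen pairs $(S,T)$: one that isolates the $\ell_1$ bound \eqref{eqn:muS-l1}, and another that isolates the $\ell_2$ bound \eqref{eqn:muS-l2}. Throughout, I take $G$ to be the identity matching on $X$ (the worst-case matching highlighted in the preceding discussion), so that edges of $G$ are indexed by $X$ via $x \mapsto (x,x)$.

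For \eqref{eqn:muS-l1}, I set $T = W$. Bi-regularity of $H$ forces $\mu_T = \vecu$, so \eqref{eqn:pi-ST} collapses to $\pi_{x,x} = \mu_S(x)$ with normalizer $Z = 1/|X|$. The hypothesis that $\pi$ is $\epsilon$-close to uniform on the $|X|$ edges of $G$ translates directly to $\abs{\mu_S - \vecu}_1 \leq O(\epsilon)$.

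For \eqref{eqn:muS-l2}, I set $T = S$. Then $\pi_{x,x} = \mu_S(x)^2/\norm{\mu_S}^2$, and the hypothesis reads
\begin{equation*}
\sum_x \abs{\mu_S(x)^2 - \norm{\mu_S}^2/|X|} \;\leq\; O(\epsilon)\, \norm{\mu_S}^2 .
\end{equation*}
The key manipulation is to factor $\mu_S(x)^2 - \norm{\mu_S}^2/|X| = \inparen{\mu_S(x) - \norm{\mu_S}/\sqrt{|X|}}\inparen{\mu_S(x) + \norm{\mu_S}/\sqrt{|X|}}$ and bound the second factor below by $\norm{\mu_S}/\sqrt{|X|}$, which yields
\begin{equation*}
\sum_x \abs{\mu_S(x) - \norm{\mu_S}/\sqrt{|X|}} \;\leq\; O(\epsilon)\,\norm{\mu_S}\sqrt{|X|} .
\end{equation*}
Applying the triangle inequality to the telescoping sum $\sum_x \inparen{\mu_S(x) - \norm{\mu_S}/\sqrt{|X|}} = 1 - \norm{\mu_S}\sqrt{|X|}$ gives $\abs{1 - \norm{\mu_S}\sqrt{|X|}} \leq O(\epsilon)\,\norm{\mu_S}\sqrt{|X|}$. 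Using $\norm{\mu_S}\sqrt{|X|} \geq 1$ (Cauchy-Schwarz), this rearranges to $\norm{\mu_S}^2 \leq (1 + O(\epsilon))/|X|$, and the orthogonality $\innerproduct{\vecu}{\mu_S - \vecu} = 0$ then yields $\norm{\mu_S - \vecu}^2 = \norm{\mu_S}^2 - 1/|X| \leq O(\epsilon)/|X|$, establishing \eqref{eqn:muS-l2}.

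The main subtle point is the factorization trick in the $T = S$ case: it converts the $\ell_1$-closeness of the ``normalized squared'' distribution $\mu_S^2/\norm{\mu_S}^2$ to uniform into a multiplicative bound on $\norm{\mu_S}^2$ itself. Without this second application of the hypothesis, one can at best combine \eqref{eqn:muS-l1} with the trivial pointwise bound $\max_x \mu_S(x) \leq 1/(\delta|X|)$ to obtain $\norm{\mu_S - \vecu}^2 \leq O(\epsilon/\delta)/|X|$, off by a crucial factor of $1/\delta$ in the relevant regime $\delta \ll \epsilon$.
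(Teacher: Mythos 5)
Your proof is correct and takes essentially the same route as the paper: choosing $T=W$ (where bi-regularity makes $\mu_T$ uniform) yields the $\ell_1$ bound, and choosing $T=S$ with the difference-of-squares factorization of $\mu_S(x)^2 - \norm{\mu_S}^2/|X|$, lower-bounding the second factor by $\norm{\mu_S}/\sqrt{|X|}$, yields the $\ell_2$ bound. The only cosmetic difference is that you finish via the identity $\sum_x \mu_S(x)=1$ instead of a triangle inequality against the uniform vector (which is how the paper reuses its $\ell_1$ estimate), and both give the same $\norm{\mu_S-\vecu}^2 = O(\epsilon)/|X|$ conclusion.
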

\begin{proof}
  It is clear that \eqref{eqn:muS-l1} is necessary as the distribution
  on constraints in the sub-game $(H\circ G\circ H)_{S\times W}$ (as defined
  in \eqref{eqn:pi-ST}) is essentially $\mu_S$ (as $\mu_T$ in this
  case is uniform). 

  As for \eqref{eqn:muS-l2}, let us assume that 
  \[
  \norm{\mu_S - \vecu}^2  \quad=\quad \frac{c}{|X|}.
  \]
  Taking $T=S$, we obtain that  the  distribution (defined in Equation \eqref{eqn:pi-ST}) induced by the game $(H\circ G \circ H)_{S\times S}$
  on the edges of $G$ is given by
  \[
  \pi_{x,x} \quad = \quad \frac{\mu_S(x)^2}{\sum_x \mu_S(x)^2}
  \quad=\quad \pfrac{|X|}{1+c} \cdot \mu_S(x)^2,
  \]
  where the last equality used the fact that $\norm{\mu_S}^2 =
  \norm{\mu_S^\perp}^2 + \norm{\vecu}^2$. 
  \begin{eqnarray*}
    \sum_{x\in X} \abs{\pfrac{|X|}{c+1} \cdot \mu_S(x)^2 \;-\; \frac{1}{|X|}} & = & \pfrac{|X|}{1+c} \cdot \sum_{x\in X} \abs{\mu_S(x)^2 \;-\; \frac{c+1}{|X|^2}}\\
    & = & \pfrac{|X|}{1+c} \cdot \sum_{x\in X} \abs{\mu_S(x) \;-\; \frac{\sqrt{c+1}}{|X|}} \cdot \inparen{\mu_S(x) + \frac{\sqrt{c+1}}{|X|}}\\
    & \geq & \pfrac{1}{\sqrt{1+c}} \cdot \sum_{x\in X} \abs{\mu_S(x) \;-\; \frac{\sqrt{c+1}}{|X|}}\\
    & \geq & \pfrac{1}{\sqrt{1+c}} \cdot \inparen{\inparen{\sqrt{1+c}\;-\;1} \;-\; \sum_{x\in X} \abs{\mu_S(x) \;-\; \frac{1}{|X|}}}\\
    & \geq & \pfrac{1}{\sqrt{1+c}} \cdot \inparen{\inparen{\sqrt{1+c}\;-\;1} \;-\; \epsilon}.
  \end{eqnarray*}
  Thus, if the distribution on constraints is $\epsilon$-close to uniform, then the above lower bound forces $c = O(\epsilon)$. 
\end{proof}

\subsection{General (non-regular) extractors are insufficient}\label{sec:extractor-insufficient}

Suppose $H = ((W,X),E_H)$ is an arbitrary
$(\delta,O(\epsilon))$-extractor. Consider a possible scenario where
there is a subset $S \subseteq W$ with $|S| \geq \delta |W|$ such that
$\mu_S$ is of the form
\[
\mu_S = \inparen{\epsilon,\frac{1-\epsilon}{|X|-1}, \dots, \frac{1 -
    \epsilon}{|X|-1}}.
\]
Notice that this is a legitimate distribution that may be obtained
from a large subset $S$ as $\abs{\mu_S - \vecu}_1$ is easily seen to
be at most $2\epsilon$. However, if $G = ((X,X),E)$ was $d$-regular with $d =
o(|X|)$, then using \eqref{eqn:pi-ST}, the probability mass on the edge $(1,1)$ on the sub-game over
$S\times S$ is
\[
\pi_{1,1} = \pfrac{\epsilon^2}{\epsilon^2 + O\pfrac{\epsilon d}{|X|}}
\approx 1.
\]
In other words, if such a distribution $\mu_S$ can be induced by the
extractor, then the provers can achieve value close to $1$ in the game
$(H\circ G \circ H)_{S\times S}$ by just labelling the edge $(1,1)$
correctly. Thus, $(H\circ G \circ H)$ is not even $(\delta, 0.9)$-robust. 

In \autoref{sec:explicit-extractor-counterexample} we show that we can
adversarially construct a $(\delta,O(\epsilon))$-extractor, although
non-regular, that induces such a skew distribution. In
\autoref{sec:random-graph} we also show that left-regular graphs of
left-degree $o(1/\delta\epsilon)$ are not fortifiers. 

\section{Robustness from fortifiers}

In this section,  we show that concatenating any two-prover game
by fortifier(s) yields a robust game as claimed by
\autoref{thm:fortification-on-expanders}.

\begin{lemma}[Distributions from large rectangles are close to uniform]\label{thm:fortification-on-expanders-restated}
  Let $\mu_S$ and $\mu_T$ be two probability distributions such that 
  \begin{eqnarray}
    \abs{\mu_S^\perp}_1\leq \epsilon_1 &\text{and}& \abs{\mu_T^\perp}_1 \leq \epsilon_1 \label{eqn:mu-perp-L1-bnd},\\
    \norm{\mu_S^\perp}^2 \leq \pfrac{\epsilon_2}{|X|} &\text{and}& \norm{\mu_T^\perp}^2 \leq   \pfrac{\epsilon_2}{|Y|}.\label{eqn:mu-perp-L2-bnd}
  \end{eqnarray}
  Then for any bi-regular graph $G = ((X,Y),E)$ that is a
  $\lambda_0$-expander, the distribution on edge $(x,y)$ (where $x\in
  X$ and $y\in Y$) given by \eqref{eqn:pi-ST} is $(2\epsilon_1 +
  \epsilon_1^2 + 2\lambda_0\cdot \epsilon_2)$-close to uniform.
\end{lemma}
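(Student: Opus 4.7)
The plan is to expand $\mu_S = \vecu_X + \mu_S^\perp$ and $\mu_T = \vecu_Y + \mu_T^\perp$, substitute into \eqref{eqn:pi-ST}, and then carefully estimate the resulting sums over edges using bi-regularity together with the $\lambda_0$-expander inequality. Setting $p(x,y) := \mu_S(x)\mu_T(y)$, the expansion
\[
p(x,y) \;=\; \frac{1}{|X||Y|} + \frac{\mu_T^\perp(y)}{|X|} + \frac{\mu_S^\perp(x)}{|Y|} + \mu_S^\perp(x)\mu_T^\perp(y)
\]
has exactly four pieces. Summing over edges, bi-regularity causes the two linear cross terms to vanish (since $\sum_y \deg(y)\mu_T^\perp(y) = d_Y\cdot 0 = 0$, and symmetrically for $x$), so the normalizing constant $Z := \sum_{(x,y)\in E}p(x,y)$ equals $|E|/(|X||Y|) + \xi$ where $\xi := \langle \mu_T^\perp, M \mu_S^\perp\rangle$ is a bilinear form against the $0/1$ bipartite adjacency matrix $M$. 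Since $M$ equals $d_X$ times the left-normalized matrix of $G$, applying the expander inequality to the pair $(\mu_T^\perp, \mu_S^\perp)$, which is orthogonal to $(\mathbf{1}_Y, \mathbf{1}_X)$ by construction, gives $|\xi| \leq \lambda_0\,\epsilon_2 \cdot |E|/(|X||Y|)$.

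I would then bound $\|\pi - \vecu_E\|_1 = \frac{1}{Z}\sum_{(x,y)\in E}|p(x,y) - Z/|E||$ by the triangle inequality split into four pieces: the two linear cross terms, the bilinear term $\mu_S^\perp(x)\mu_T^\perp(y)$, and the constant $\xi/|E|$ that remains after subtracting $Z/|E|$. Bi-regularity makes the linear pieces immediate --- each sums to exactly $\epsilon_1\cdot |E|/(|X||Y|)$ --- and the constant piece contributes $|\xi|\leq \lambda_0\epsilon_2\cdot |E|/(|X||Y|)$.

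The main obstacle is the bilinear piece $\sum_{(x,y)\in E}|\mu_S^\perp(x)||\mu_T^\perp(y)|$, since naive Cauchy--Schwarz only gives $\epsilon_2 \cdot |E|/(|X||Y|)$, which does not produce the $\epsilon_1^2$ summand claimed in the statement. The key trick is to apply a projection decomposition to the \emph{absolute-value} vectors:
\[
|\mu_S^\perp| \;=\; \alpha_S \mathbf{1}_X + w_S, \qquad |\mu_T^\perp| \;=\; \alpha_T \mathbf{1}_Y + w_T,
\]
where $\alpha_S = |\mu_S^\perp|_1/|X| \leq \epsilon_1/|X|$ and $w_S \perp \mathbf{1}_X$ with $\|w_S\| \leq \|\mu_S^\perp\| \leq \sqrt{\epsilon_2/|X|}$ (and symmetrically on the $T$ side). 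Expanding $\langle |\mu_T^\perp|, M |\mu_S^\perp|\rangle$ gives four inner products: the two mixed ones $\langle\mathbf{1}_Y, Mw_S\rangle$ and $\langle w_T, M\mathbf{1}_X\rangle$ vanish by bi-regularity; the $\mathbf{1}\times\mathbf{1}$ term contributes $\alpha_T\alpha_S\cdot|E| \leq \epsilon_1^2\cdot |E|/(|X||Y|)$; and the surviving $\langle w_T, M w_S\rangle$ piece is now amenable to the expander bound (since both $w_T$ and $w_S$ are orthogonal to the all-ones), yielding $\lambda_0\epsilon_2\cdot |E|/(|X||Y|)$.

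Combining the four contributions gives $\sum_{(x,y)\in E}|p(x,y)-Z/|E|| \leq (2\epsilon_1 + \epsilon_1^2 + 2\lambda_0\epsilon_2)\cdot |E|/(|X||Y|)$, and dividing by $Z \geq (|E|/(|X||Y|))\cdot(1-\lambda_0\epsilon_2)$ yields the claimed bound to leading order in $\lambda_0\epsilon_2$. The crux is the projection step: without first peeling off the uniform component of $|\mu_S^\perp|$ and $|\mu_T^\perp|$, the expander inequality is inapplicable, and the $\alpha_T\alpha_S$ term is exactly what is needed to generate the $\epsilon_1^2$ summand matching the complete-bipartite baseline (where $\lambda_0 = 0$ and the statement should reduce to a purely $\ell_1$-bound).
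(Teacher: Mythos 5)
Your proposal is correct and is essentially the paper's own argument: the same splitting into uniform and perpendicular components, the expander bound on $\langle \mu_T^\perp, M\mu_S^\perp\rangle$ to control the normalizing constant (the paper's Claim 4.2), and, crucially, the same trick of decomposing the entrywise absolute-value vectors $|\mu_S^\perp|$, $|\mu_T^\perp|$ into a uniform part (giving the $\epsilon_1^2$ term) plus an orthogonal part on which the expander inequality applies (giving $\lambda_0\epsilon_2$), exactly as in the paper's Claim 4.3 with $f_S, f_T$. The only deviation is bookkeeping: by dividing the whole error by $Z \geq \frac{|E|}{|X||Y|}(1-\lambda_0\epsilon_2)$ you recover the stated constant only up to a factor $(1-\lambda_0\epsilon_2)^{-1}$, whereas the paper gets it exactly by comparing $\pi$ first to the unnormalized measure $\mu_S(x)\mu_T(y)\cdot\frac{|X||Y|}{|E|}$ and using that $\pi$ sums to $1$, so the normalization error contributes exactly $\bigl|1 - Z\cdot\frac{|X||Y|}{|E|}\bigr| \leq \lambda_0\epsilon_2$.
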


As described in \autoref{sec:large-rectangle-subgames}, if $H_1$ and
$H_2$ are $(\delta,\epsilon_1,\epsilon_2)$-fortifiers, then for any
set $S$ and $T$ of density at least $\delta$, the distribution on the
constraints of $(H_1 \circ G \circ H_2)_{S\times T}$ is given by
\eqref{eqn:pi-ST}. From the above lemma, it follows that the value of
the game on any large rectangle can change only
by the above bound on the statistical distance. By setting the parameters,  \autoref{thm:fortification-on-expanders} follows
immediately from \autoref{thm:fortification-on-expanders-restated}. Further, \autoref{thm:main-thm} also follows from
\autoref{thm:fortification-on-expanders-restated} and \autoref{lem:expanders-are-fortifiers} as any graph is trivially a
$1$-expander.\\

The rest of this section would be devoted to the proof of
\autoref{thm:fortification-on-expanders-restated}. For brevity, let us
assume that $|X| = n$, $|Y| = m$ and let $d$ be the left-degree of $G$. We shall prove
\autoref{thm:fortification-on-expanders-restated} by proving the
following two claims.

\begin{claim}\label{claim:triang1-bound}
\[
\sum_{(x,y)\in G} \abs{\frac{\mu_S(x) \mu_T(y)}{\sum\limits_{(x,y)\in G} \mu_S(x) \mu_T(y)} \; - \; \frac{\mu_S(x) \mu_T(y)}{d / m}} \quad \leq \quad \lambda_0 \cdot \epsilon_2
\]
\end{claim}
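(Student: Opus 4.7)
The plan is to observe that the claim is really a statement comparing a single normalization constant to the expected value $d/m$. Writing $Z = \sum_{(x,y)\in G}\mu_S(x)\mu_T(y)$, the quantity on the left telescopes as
\[
\sum_{(x,y)\in G} \mu_S(x)\mu_T(y) \cdot \abs{\tfrac{1}{Z} - \tfrac{m}{d}} = \abs{1 - \tfrac{Zm}{d}},
\]
since the absolute-value factor does not depend on $(x,y)$ and $\sum_{(x,y)} \mu_S(x)\mu_T(y) = Z$. So the whole task reduces to showing $\abs{1 - Zm/d} \leq \lambda_0 \epsilon_2$.

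Next, I would express $Z$ via the normalized bipartite adjacency operator $H$ of $G$ from \autoref{defn:expanders} (an $m \times n$ matrix with entries $1/d$ on edges). Concretely, $Z = d \cdot \mu_T^{T} H \mu_S$. Decomposing the two distributions along and orthogonal to the uniform directions, $\mu_S = \vecu_X + \mu_S^\perp$ and $\mu_T = \vecu_Y + \mu_T^\perp$, bi-regularity of $G$ yields $H\vecu_X = \vecu_Y$ and $H^{T}\vecu_Y = (n/m)\vecu_X$. Therefore the two ``cross'' terms $\vecu_Y^{T} H \mu_S^\perp$ and $(\mu_T^\perp)^{T} H \vecu_X$ both vanish (each reduces to an inner product of a constant vector with a zero-sum vector), and $\vecu_Y^{T} H \vecu_X = \|\vecu_Y\|^2 = 1/m$. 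This gives the clean identity
\[
\tfrac{Zm}{d} = 1 + m\cdot (\mu_T^\perp)^{T} H \mu_S^\perp.
\]

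Finally, I would bound $|(\mu_T^\perp)^{T} H \mu_S^\perp|$ by Cauchy--Schwarz together with the generalized expander bound. Since $\mu_S^\perp \perp \vecu_X$, the definition of $\lambda(G)$ together with $\|\vecu_X\| = 1/\sqrt{n}$ and $\|H\vecu_X\| = \|\vecu_Y\| = 1/\sqrt{m}$ gives $\|H\mu_S^\perp\| \leq \lambda_0 \sqrt{n/m}\,\|\mu_S^\perp\|$. Combined with the $\ell_2$ hypotheses $\|\mu_S^\perp\|^2 \leq \epsilon_2/n$ and $\|\mu_T^\perp\|^2 \leq \epsilon_2/m$, this yields
\[
\abs{(\mu_T^\perp)^{T} H \mu_S^\perp} \leq \|\mu_T^\perp\| \cdot \|H\mu_S^\perp\| \leq \sqrt{\tfrac{\epsilon_2}{m}} \cdot \lambda_0\sqrt{\tfrac{n}{m}}\cdot \sqrt{\tfrac{\epsilon_2}{n}} = \tfrac{\lambda_0\epsilon_2}{m},
\]
so $\abs{1 - Zm/d} \leq \lambda_0\epsilon_2$, as desired.

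The only delicate step is step three: making sure the bi-regular (but not necessarily square) structure of $G$ produces the clean factorization, so that $H\vecu_X = \vecu_Y$ and the ratio $\sqrt{n/m}$ in the spectral bound cancels exactly against the $1/m$ and $1/n$ factors from the $\ell_2$ hypotheses. This is why the generalized definition of $\lambda(H)$ in \autoref{defn:expanders}, with the correcting factor $\|\vecu_X\|/\|H\vecu_X\|$, is precisely what the argument needs.
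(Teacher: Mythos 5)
Your proposal is correct and follows essentially the same route as the paper's proof: both reduce the sum to the single scalar quantity $\abs{1 - m\inangle{G\mu_S,\mu_T}}$ (your $\abs{1 - Zm/d}$), expand $\mu_S,\mu_T$ into uniform plus orthogonal components so the cross terms vanish by bi-regularity, and bound the remaining term by Cauchy--Schwarz together with the generalized expander definition, giving $\lambda_0\norm{\mu_S^\perp}\norm{\mu_T^\perp}\sqrt{n/m} \leq \lambda_0\epsilon_2/m$. The only cosmetic difference is that you factor out $\mu_S(x)\mu_T(y)$ first, whereas the paper keeps the normalized distribution (which sums to $1$) and multiplies by the same error factor.
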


\begin{claim}\label{claim:triang2-bound}
\[
\sum_{(x,y)\in G} \abs{\frac{\mu_S(x) \mu_T(y)}{d / m} \; - \; \frac{1}{n\cdot d}} \quad \leq \quad 2\epsilon_1  + \epsilon_1^2 + \lambda_0\cdot  \epsilon_2
\]
\end{claim}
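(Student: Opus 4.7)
The plan is to expand both distributions around uniform. Writing $\mu_S = \vecu_X + s$ and $\mu_T = \vecu_Y + t$ with $s = \mu_S^\perp$ and $t = \mu_T^\perp$, a direct expansion of $\mu_S(x)\mu_T(y) = (1/n + s(x))(1/m + t(y))$ multiplied by $m/d$ gives
\[
\frac{\mu_S(x)\mu_T(y)}{d/m} \;-\; \frac{1}{nd} \;=\; \frac{s(x)}{d} \;+\; \frac{m\, t(y)}{nd} \;+\; \frac{m\, s(x)t(y)}{d}.
\]
The triangle inequality then reduces the claim to bounding the edge-sums of the absolute values of these three pieces separately.

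The first two contributions are easy. Summing $\abs{s(x)}/d$ over the edges of $G$ counts each $x\in X$ exactly $d$ times by left-regularity, producing $\sum_x \abs{s(x)} = \abs{s}_1 \leq \epsilon_1$. The second term is symmetric: using that the right-degree of $G$ equals $nd/m$, it also contributes at most $\abs{t}_1 \leq \epsilon_1$.

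The interesting term is the third one. Letting $M$ be the normalized adjacency matrix of $G$, one has $\sum_{(x,y)\in E(G)} \abs{s(x)}\abs{t(y)} = d\,\langle \abs{t}, M\abs{s}\rangle$, so this contribution equals $m\,\langle \abs{t}, M\abs{s}\rangle$. The obstacle here is that the expansion hypothesis $\lambda(G)\leq \lambda_0$ only constrains vectors orthogonal to uniform, whereas $\abs{s}$ and $\abs{t}$ each carry a nonzero uniform component. I therefore decompose $\abs{s} = \alpha\vecu_X + \abs{s}^\perp$ with $\alpha = \abs{s}_1 \leq \epsilon_1$, and $\abs{t} = \beta\vecu_Y + \abs{t}^\perp$ with $\beta \leq \epsilon_1$. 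Bi-regularity of $G$ gives $M\vecu_X = \vecu_Y$, and since every column of $M$ sums to $1$ the map $M$ sends the orthogonal complement of $\vecu_X$ into the orthogonal complement of $\vecu_Y$. Consequently both cross terms vanish and
\[
\langle \abs{t}, M\abs{s}\rangle \;=\; \frac{\alpha\beta}{m} \;+\; \langle \abs{t}^\perp, M\abs{s}^\perp\rangle.
\]
The parallel piece is at most $\epsilon_1^2/m$. For the orthogonal piece, Cauchy--Schwarz combined with the expansion bound $\norm{Mv} \leq \lambda_0\sqrt{n/m}\,\norm{v}$ for $v\perp \vecu_X$, together with the $\ell_2$ hypotheses $\norm{s}^2 \leq \epsilon_2/n$ and $\norm{t}^2 \leq \epsilon_2/m$, yields at most $\lambda_0\epsilon_2/m$. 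Multiplying by $m$ gives $\epsilon_1^2 + \lambda_0\epsilon_2$ for the third term.

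Adding the three bounds produces the claimed $2\epsilon_1 + \epsilon_1^2 + \lambda_0\epsilon_2$. The main subtlety — and the reason behind the two-sided fortifier definition — is the bookkeeping around $\abs{s}$ and $\abs{t}$: even though $s$ and $t$ are themselves orthogonal to uniform, their absolute values are not, so one needs the $\ell_1$ guarantee to control the parallel parts and the $\ell_2$ guarantee to invoke expansion on the orthogonal parts, working in tandem.
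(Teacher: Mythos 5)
Your proof is correct and follows essentially the same route as the paper: expand $\mu_S,\mu_T$ around the uniform distributions, apply the triangle inequality, handle the two linear terms via bi-regularity, and bound the cross term by decomposing the entrywise absolute values $\abs{\mu_S^\perp},\abs{\mu_T^\perp}$ into their uniform and orthogonal components, using the $\ell_1$ bound for the parallel part and the expansion plus $\ell_2$ bound for the orthogonal part. The bookkeeping (including the normalization $\norm{Mv}\leq \lambda_0\sqrt{n/m}\,\norm{v}$ and the vanishing cross terms) matches the paper's argument, so there is nothing to fix.
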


\noindent 
Clearly, \autoref{thm:fortification-on-expanders-restated} follows from
\autoref{claim:triang1-bound} and \autoref{claim:triang2-bound}. 

\begin{proof}[Proof of \autoref{claim:triang1-bound}]
If $G$ denotes the normalized adjacency matrix of the graph $G$ (that is, normalized so that $G\vecu_X = \vecu_Y$), then observe that $\sum_{(x,y)\in G} \mu_S(x) \mu_T(y) = d \cdot \inangle{G\mu_S, \mu_T}$. If we resolve $\mu_S$ and $\mu_T$ in the direction of the uniform distribution and the orthogonal component, we have
\begin{eqnarray*}
\inangle{G\mu_S, \mu_T} & = & \inangle{\vecu_Y, \vecu_Y} \;+\; \inangle{G\mu_S^\perp, \mu_T^\perp} \; = \;\frac{1}{m} \;+\; \inangle{G\mu_S^\perp, \mu_T^\perp} \\
\implies \quad \abs{\inangle{G\mu_S, \mu_T} - \frac{1}{m}} & \leq & \lambda_0 \cdot \norm{\mu_S^\perp} \cdot \norm{\mu_T^\perp}\cdot \sqrt{\frac{n}{m}}\\
& \leq &  \pfrac{\lambda_0 \cdot \epsilon_2}{m}. \quad\quad\text{(using \eqref{eqn:mu-perp-L2-bnd})}
\end{eqnarray*}
Therefore, 
\begin{align*}
\sum_{(x,y)\in G} \abs{\frac{\mu_S(x) \mu_T(y)}{d\inangle{G\mu_S,\mu_T}} \; - \; \frac{\mu_S(x) \mu_T(y)}{d / m}} & \leq  \sum_{(x,y)\in G} \pfrac{\mu_S(x) \mu_T(y)}{d \inangle{G\mu_S,\mu_T}}\abs{1 \;-\;m\inangle{G\mu_S,\mu_T}}\\
& \leq  \lambda_0 \cdot \epsilon_2.\qedhere
\end{align*}
\end{proof}

\begin{proof}[Proof of \autoref{claim:triang2-bound}]
\begin{eqnarray*}
\sum_{(x,y)\in G} \abs{\frac{\mu_S(x) \mu_T(y)}{d / m} \; - \; \frac{1}{n\cdot d}} & = & \pfrac{m}{d}\sum_{(x,y)\in G}\abs{\mu_S(x) \mu_T(y) - \frac{1}{n\cdot m}}.
\end{eqnarray*}
Since $\mu_S(x) = \frac{1}{n} + \mu_S^\perp(x)$ and $\mu_T(y) = \frac{1}{m} + \mu_T^\perp(y)$, 
\begin{eqnarray*}
\pfrac{m}{d}\sum_{(x,y)\in G}\abs{\mu_S(x) \mu_T(y) - \frac{1}{n\cdot m}} & = & \pfrac{m}{d}\sum_{(x,y)\in G}\abs{\frac{\mu_S^\perp(x)}{m} + \frac{\mu_T^\perp(y)}{n} + \mu_S^\perp(x) \mu_T^\perp(y)}\\
\text{(Using triangle inequality)}& \leq & \frac{1}{d}\sum_{(x,y)\in G}\abs{\mu_S^\perp(x)} + \frac{m}{nd}\sum_{(x,y)\in G}\abs{\mu_T^\perp(y)} \\
& & \quad\;\quad\;\quad \;+\; \pfrac{m}{d}\sum_{(x,y)\in G} \abs{\mu_S^\perp(x) \mu_T^\perp(y)}\\
& = & \abs{\mu_S^\perp}_1 \;+ \; \abs{\mu_T^\perp}_1 + \pfrac{m}{d}\sum_{(x,y)\in G} \abs{\mu_S^\perp(x) \mu_T^\perp(y)},
\end{eqnarray*}
where the last equality uses the fact that $G$ is a bi-regular graph. Define $f_S (x) \equiv |\mu_S^\perp(x)|$ is a vector with the entrywise absolute values of $\mu_S^\perp$, and similarly $f_T$. Then, the RHS above equation reduces to 
\begin{eqnarray}
\abs{\mu_S^\perp}_1 \;+ \; \abs{\mu_T^\perp}_1 + \pfrac{m}{d}\sum_{(x,y)\in G} \abs{\mu_S^\perp(x) \mu_T^\perp(y)}& = & \abs{\mu_S^\perp}_1 \;+ \; \abs{\mu_T^\perp}_1 \nonumber \\ 
& & \quad \;+\; \pfrac{m}{d}\cdot \sum_{(x,y)\in G} f_S(x) f_T(y) \nonumber \\
& = & \abs{\mu_S^\perp}_1 \;+ \; \abs{\mu_T^\perp}_1 \;+\; m \inangle{Gf_S, f_T}\nonumber \\
\text{(Using \eqref{eqn:mu-perp-L1-bnd})}\quad\quad\quad& \leq & 2\epsilon_1  \;+\; m \cdot \inangle{Gf_S, f_T}.\label{eqn:better-fsGft-bnd}
\end{eqnarray}
A simple bound for $m\cdot \inangle{Gf_S, f_T}$ would $m \norm{G\mu_S^\perp}\norm{\mu_T^\perp}$ by Cauchy-Schwarz inequality. We can use the expansion of $G$ again to estimate this better. Consider the decomposition $f_S = \alpha_1 \cdot \vecu_X + f_S^\perp$ and $f_T = \alpha_2 \cdot \vecu_Y + f_T^\perp$. It follows that $\alpha_1 = \abs{f_S}_1$ and $\alpha_2 = \abs{f_T}_1$, and hence $\alpha_1, \alpha_2 \leq \epsilon_1$ by \eqref{eqn:mu-perp-L1-bnd}. Hence,
\begin{eqnarray*}
m \cdot \inangle{Gf_S,f_T} & = & \alpha_1 \cdot \alpha_2  \;+\; m\cdot \inangle{Gf_S^\perp, f_T^\perp}\\
& \leq &  \epsilon_1^2 \;+\; m\cdot \lambda_0 \cdot \norm{f_S^\perp} \cdot \norm{f_T^\perp} \cdot \sqrt{\frac{n}{m}}\\
& \leq &  \epsilon_1^2 \;+\; m\cdot \lambda_0 \cdot \norm{\mu_S^\perp} \cdot \norm{\mu_T^\perp}\cdot \sqrt{\frac{n}{m}}\\
\text{(Using \eqref{eqn:mu-perp-L2-bnd})}\quad& \leq & \epsilon_1^2 \;+\;\lambda_0 \epsilon_2.
\end{eqnarray*}
Combining this with \eqref{eqn:better-fsGft-bnd}, we get
\[
\sum_{(x,y)\in G} \abs{\frac{\mu_S(x) \mu_T(y)}{d / m} \; - \; \frac{1}{n\cdot d}} \quad\leq \quad 2\epsilon_1  + \epsilon_1^2 + \lambda_0 \epsilon_2.\qedhere
\]
\end{proof}

\subsection*{Acknowledgements}

We would like to thank Dana Moshkovitz for several discussions and
clarifications regarding the initial counter-example. We would also
like to thank Mohammad Bavarian for pointing out that our proof might
generalize for general two-prover games, and would like to thank Anup
Rao for pointing out subtleties involving parallel repetition for
general games. We also would like to thank Prahladh Harsha, Irit Dinur
and Amir Shpilka for many fruitful conversations and comments on the
write-up.

\bibliographystyle{prahladhurl}
\bibliography{references}

\newcommand{\etalchar}[1]{$^{#1}$}
\begin{thebibliography}{ALM{\etalchar{+}}98}

\bibitem[ALM{\etalchar{+}}98]{ALMSS98}
\textsc{Sanjeev Arora}, \textsc{Carsten Lund}, \textsc{Rajeev Motwani},
  \textsc{Madhu Sudan}, and \textsc{Mario Szegedy}.
\newblock \emph{Proof verification and the hardness of approximation problems}.
\newblock J. ACM, 45(3):501--555, May 1998.
\newblock
  \href{http://dx.doi.org/10.1145/278298.278306}{\path{doi:10.1145/278298.278306}}.

\bibitem[AS92]{AS92}
\textsc{Noga Alon} and \textsc{Joel Spencer}.
\newblock \emph{The Probabilistic Method}.
\newblock John Wiley, 1992.

\bibitem[AS98]{AS98}
\textsc{Sanjeev Arora} and \textsc{Shmuel Safra}.
\newblock \emph{Probabilistic checking of proofs: A new characterization of
  {NP}}.
\newblock J. ACM, 45(1):70--122, January 1998.
\newblock
  \href{http://dx.doi.org/10.1145/273865.273901}{\path{doi:10.1145/273865.273901}}.

\bibitem[BG14]{BG14}
\textsc{Mark Braverman} and \textsc{Ankit Garg}.
\newblock \emph{Small value parallel repetition for general games}.
\newblock Electronic Colloquium on Computational Complexity {(ECCC)}, 21:95,
  2014.
\newblock To appear in STOC 2015.
\newblock \href{http://eccc.hpi-web.de/report/2014/095}{\path{eccc:TR14/095}}.

\bibitem[BL06]{BL06}
\textsc{Yonatan Bilu} and \textsc{Nathan Linial}.
\newblock \emph{Lifts, discrepancy and nearly optimal spectral gap*}.
\newblock Combinatorica, 26(5):495--519, 2006.
\newblock
  \href{http://dx.doi.org/10.1007/s00493-006-0029-7}{\path{doi:10.1007/s00493-006-0029-7}}.

\bibitem[DH13]{DH13}
\textsc{Irit Dinur} and \textsc{Prahladh Harsha}.
\newblock \emph{{Composition of Low-Error 2-Query PCPs Using Decodable PCPs}}.
\newblock {SIAM} J. Comput., 42(6):2452--2486, 2013.
\newblock
  \href{http://dx.doi.org/10.1137/100788161}{\path{doi:10.1137/100788161}}.

\bibitem[DS14]{DS14}
\textsc{Irit Dinur} and \textsc{David Steurer}.
\newblock \emph{Analytical approach to parallel repetition}.
\newblock In \emph{Symposium on Theory of Computing, {STOC} 2014, New York, NY,
  USA, May 31 - June 03, 2014}, pages 624--633. 2014.
\newblock
  \href{http://dx.doi.org/10.1145/2591796.2591884}{\path{doi:10.1145/2591796.2591884}}.

\bibitem[Fei91]{Fei91}
\textsc{Uriel Feige}.
\newblock \emph{On the success probability of the two provers in one-round
  proof systems}.
\newblock In \emph{Structure in Complexity Theory Conference, 1991.,
  Proceedings of the Sixth Annual}, pages 116--123. Jun 1991.
\newblock
  \href{http://dx.doi.org/10.1109/SCT.1991.160251}{\path{doi:10.1109/SCT.1991.160251}}.

\bibitem[FK94]{FK94}
\textsc{Uriel Feige} and \textsc{Joe Kilian}.
\newblock \emph{Two prover protocols: low error at affordable rates}.
\newblock In \emph{Proceedings of the Twenty-Sixth Annual {ACM} Symposium on
  Theory of Computing, 23-25 May 1994, Montr{\'{e}}al, Qu{\'{e}}bec, Canada},
  pages 172--183. 1994.
\newblock
  \href{http://dx.doi.org/10.1145/195058.195128}{\path{doi:10.1145/195058.195128}}.

\bibitem[For89]{For89}
\textsc{Lance~Jeremy Fortnow}.
\newblock \emph{Complexity-theoretic aspects of interactive proof systems}.
\newblock Ph.D. thesis, Massachusetts Institute of Technology, 1989.

\bibitem[Hol09]{Hol09}
\textsc{Thomas Holenstein}.
\newblock \emph{Parallel repetition: Simplification and the no-signaling case}.
\newblock Theory of Computing, 5(1):141--172, 2009.
\newblock
  \href{http://dx.doi.org/10.4086/toc.2009.v005a008}{\path{doi:10.4086/toc.2009.v005a008}}.

\bibitem[Mos14]{Moshkovitz14}
\textsc{Dana Moshkovitz}.
\newblock \emph{Parallel repetition from fortification}.
\newblock In \emph{55th {IEEE} Annual Symposium on Foundations of Computer
  Science, {FOCS} 2014, Philadelphia, PA, USA, October 18-21, 2014}, pages
  414--423. 2014.
\newblock
  \href{http://dx.doi.org/10.1109/FOCS.2014.51}{\path{doi:10.1109/FOCS.2014.51}}.

\bibitem[Mos15]{Moshkovitz15}
---{}---{}---.
\newblock \emph{Parallel repetition from fortification}, 2015.
\newblock \url{http://people.csail.mit.edu/dmoshkov/papers/par-rep/final3.pdf}.
\newblock
  \href{http://eccc.hpi-web.de/report/2014/054}{\path{eccc:TR14/054/revision/2}}.

\bibitem[Rao11]{Rao11}
\textsc{Anup Rao}.
\newblock \emph{Parallel repetition in projection games and a concentration
  bound}.
\newblock {SIAM} J. Comput., 40(6):1871--1891, 2011.
\newblock
  \href{http://dx.doi.org/10.1137/080734042}{\path{doi:10.1137/080734042}}.

\bibitem[Raz98]{Raz98}
\textsc{Ran Raz}.
\newblock \emph{A parallel repetition theorem}.
\newblock {SIAM} J. Comput., 27(3):763--803, 1998.
\newblock
  \href{http://dx.doi.org/10.1137/S0097539795280895}{\path{doi:10.1137/S0097539795280895}}.

\bibitem[Raz11]{Raz11}
---{}---{}---.
\newblock \emph{A counterexample to strong parallel repetition}.
\newblock SIAM J. Comput., 40(3):771--777, June 2011.
\newblock
  \href{http://dx.doi.org/10.1137/090747270}{\path{doi:10.1137/090747270}}.

\bibitem[RT00]{RT00}
\textsc{Jaikumar Radhakrishnan} and \textsc{Amnon Ta{-}Shma}.
\newblock \emph{Bounds for dispersers, extractors, and depth-two
  superconcentrators}.
\newblock {SIAM} J. Discrete Math., 13(1):2--24, 2000.
\newblock
  \href{http://dx.doi.org/10.1137/S0895480197329508}{\path{doi:10.1137/S0895480197329508}}.

\bibitem[RVW00]{RVW00}
\textsc{Omer Reingold}, \textsc{Salil Vadhan}, and \textsc{Avi Wigderson}.
\newblock \emph{Entropy waves, the zig-zag graph product, and new
  constant-degree expanders and extractors}.
\newblock In \emph{Foundations of Computer Science, 2000. Proceedings. 41st
  Annual Symposium on}, pages 3--13. 2000.
\newblock
  \href{http://dx.doi.org/10.1109/SFCS.2000.892006}{\path{doi:10.1109/SFCS.2000.892006}}.

\bibitem[Ver96]{Ver96}
\textsc{Oleg Verbitsky}.
\newblock \emph{Towards the parallel repetition conjecture}.
\newblock Theor. Comput. Sci., 157(2):277--282, May 1996.
\newblock
  \href{http://dx.doi.org/10.1016/0304-3975(95)00165-4}{\path{doi:10.1016/0304-3975(95)00165-4}}.

\end{thebibliography}

\appendix

\section{An explicit extractor that does not provide robustness} \label{sec:explicit-extractor-counterexample}
Let $H = ((W,X),E_H)$ be any $(\delta,\epsilon)$-extractor. Let us
assume that the extractor is left-regular with left-degree $D$, and
let $m = |W|$ and $n = |X|$. For any $x\in X$ and $S\subseteq W$, let
$d_S(x)$ denote the degree of $x$ in $S$. Let us fix one $S \subset W$
such that $|S| = \delta |W|$.

We will transform the graph $H$ so that the distribution induced by
the set $S$ looks like the counter-example described in
\autoref{sec:extractor-insufficient} in the following two steps by
altering the edges in the subgraph $S \times X$:
\begin{enumerate}
\item First change the degree into $X$ from $S$ to be exactly uniform. 
\item Next further change the degrees into $X$ from $S$ to be like the 
counterexample
\end{enumerate}
Both these operations can be achieved in a monotone fashion: for every
$x\in X$, the neighborhood of every vertex is either a superset, or a
subset of its neighborhood before each
operation. 

We will show that moving the edges this way does not perturb the
indegree distribution from other large sets by too much, and the
resulting graph is a $(\delta, O(\epsilon))$ extractor as long as the
number of edges we relocate is at most $O(\epsilon\delta\cdot
mD)$. This process will preserve the left-regularity of $H$ but would
\emph{not} preserve bi-regularity.\\

First let us move edges (monotonically) from $S$ into $X$ create the
uniform distribution on $X$. When doing this, the degree of each
vertex changes by $\Delta_S(x) := |d_S(x)-\frac{\delta m D}{n}|$, where
$d_S(x)$ was the old degree. From the extractor property, we know
that:
\begin{equation}
\label{eqn:extractor-eq1}
\sum_{x\in X} \Delta_S (x) \;=\; \sum_{x\in X} (\delta m D )\abs{\frac{d_S(x)}{\sum d_S(x)} - 
\inparen{\frac{1}{n}}}\spaced{\leq} \epsilon \delta \cdot m D.
\end{equation}

Every vertex $x\in X$ now has degree $d^S_\mathrm{avg}$. Fix some
vertex $x_1 \in X$, and relocate from every other $x \neq x_1$ any set
of $\epsilon \cdot d^S_\mathrm{avg}$ edges to be incident on
$x_1$. Thus, if $d_S'(x)$ refers to the new degrees, we have
$d_S'(x_1)$ is $(1 + \epsilon n) d^S_{\mathrm{avg}}$ where as
$d_S'(x)$ is $(1-\epsilon) d^S_\mathrm{avg}$ for every other $x\neq
x_1$.

The further change in degrees incurred on any $x\in
X$ is $\Delta'_S(x) := \abs{d_S'(x) - \frac{\delta m D}{n}}$. Since we
this process only relocates $O(\epsilon \cdot  d^S_{\mathrm{avg}} |X|)$ edges, we have
\begin{equation}
\label{eqn:extractor-eq2}
\sum_{x\in X} \Delta'_S(x) \spaced{=} \sum_{x\in X} \abs{d_S'(x) - d^S_{\mathrm{avg}}}  \spaced{\leq} 
O(n \cdot \epsilon \cdot d^S_{\mathrm{avg}}) \spaced{=}O(\epsilon\delta \cdot mD).
\end{equation}

Thus, the neighbourhood of any vertex $x$ has changed additively by at most
$\Delta_S(x)+\Delta'_S(x)$. Therefore, for any subset $T \subseteq
W$ of size at least $\delta |W|$,
\begin{eqnarray*}
\sum_{x\in X} \abs{d'_T(x) - d^T_{\mathrm{avg}}} &\leq& \sum_{x\in X} \abs{d_T(x) - 
d^T_{\mathrm{avg}}} \spaced{+} \sum_{x\in X} \abs{d'_T(x) - d_T(x) }\\
& \leq & \epsilon |T| D \spaced{+} \sum_{x\in X} \inparen{\Delta_S(x)+\Delta'_S(x)}\\
& \leq & \epsilon |T| D \spaced{+} O(\epsilon\delta \cdot m D) \quad 
(\text{using \eqref{eqn:extractor-eq1} and \eqref{eqn:extractor-eq2}}) \\
&\leq & O(\epsilon \cdot |T| D).
\end{eqnarray*}
Thus, the new graph after relocating edges is still an
$(\delta,O(\epsilon))$-extractor. This extractor, induces a
distribution similar to the one described in
\autoref{sec:extractor-insufficient} and hence cannot provide
robustness.

\section{Lower bounds on degree of fortifiers}\label{sec:random-graph}

In this section, we will show that an attempt to make a game $(\delta,
\epsilon)$-robust by concatenating any left-regular graph with left degree $D$
fails if $D \leq o(1/\epsilon\delta)$. 

\begin{lemma}\label{lem:random-graph-main-lemma}
  Let $H = ((W,X),E_H)$ be a left-regular bipartite graph with
  left-degree $D = 1/(c\cdot \epsilon\delta)$ for some $c > 0$, and small enough
  constants $\epsilon,\delta$. Then, there exists a subset
  $S\subseteq W$ with $|S| \geq \delta |W|$ such that if $p$ was the
  distribution on $X$ induced by the uniform distribution on $S$ then 
  \[
  \norm{p - \vecu}^2 \;\geq\; \frac{\Omega(c\epsilon)}{|X|}.
  \]
\end{lemma}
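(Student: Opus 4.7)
The plan is a second-moment argument: sample $S \subseteq W$ uniformly at random of size $\delta|W|$, show that $\E_S\,\norm{p-\vecu}^2 \geq \Omega(c\epsilon)/|X|$, and conclude by averaging that some realization of $S$ achieves this bound.

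First, using only that $p$ is a probability distribution on $X$, I would write $\norm{p - \vecu}^2 = \sum_{x \in X} p(x)^2 - 1/|X|$, take expectation over $S$, and decompose each $\E[p(x)^2] = (\E[p(x)])^2 + \mathrm{Var}(p(x))$. Since $\sum_x \E[p(x)] = 1$, Cauchy--Schwarz gives $\sum_x (\E[p(x)])^2 \geq 1/|X|$, so the bias contribution already cancels the $-1/|X|$ term and
\[
\E_S\,\norm{p - \vecu}^2 \;\geq\; \sum_{x\in X} \mathrm{Var}(p(x)).
\]
Thus it suffices to produce a lower bound on the total variance.

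Next I would compute that total variance. Writing $p(x) = d_S(x)/(|S|D)$ with $d_S(x) = \sum_{w \in N(x)} \mathbf{1}[w \in S]$ hypergeometric of mean $\delta\,\deg(x)$ and variance $\Theta(\delta(1-\delta)\deg(x))$, the left-regularity identity $\sum_x \deg(x) = |W|D$ combined with $D = 1/(c\epsilon\delta)$ yields
\[
\sum_{x \in X} \mathrm{Var}(p(x)) \;=\; \frac{\Theta\!\inparen{\delta(1-\delta)\,|W|D}}{(\delta |W| D)^2} \;=\; \Theta\!\inparen{\frac{1}{\delta |W| D}} \;=\; \Theta\!\inparen{\frac{c\epsilon}{|W|}}.
\]
In the intended balanced regime $|W| = O(|X|)$ (which is the setting of the bipartite spectral expanders and extractor-based fortifiers of Section~\ref{sec:prelims}), this delivers $\E_S\,\norm{p-\vecu}^2 = \Omega(c\epsilon/|X|)$, and the averaging step produces the required $S$.

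The main technical obstacle is controlling the hypergeometric variance cleanly: the sampling-without-replacement correction only changes $\mathrm{Var}(d_S(x))$ by a benign factor of $(|W|-\deg(x))/(|W|-1)$, but one must check that this correction (together with the covariances that show up when one naively tries to write $\sum_x \mathrm{Var}(p(x))$ via the second moment of the edge count from $S$) does not erode the leading $\Theta(c\epsilon/|W|)$ estimate. A secondary subtlety is merely matching $|W|$ to $|X|$ in the denominator, which is cost-free in the balanced setting and can otherwise be handled by a support-size argument.
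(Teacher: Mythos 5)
Your approach (second-moment bound over a uniformly random $S$ of size $\delta|W|$) is genuinely different from the paper's, which exhibits explicit witnessing sets deterministically. The averaging method is clean, and your arithmetic checks out: with $D = 1/(c\epsilon\delta)$ and $\sum_x \deg(x) = |W|D$, you correctly land at
\[
\E_S\,\norm{p - \vecu}^2 \;\geq\; \sum_x \mathrm{Var}(p(x)) \;=\; \Theta\!\inparen{\frac{1}{\delta|W|D}} \;=\; \Theta\!\inparen{\frac{c\epsilon}{|W|}}.
\]
But this is where the gap lies, and it is not a ``secondary subtlety'': the lemma asserts a bound of $\Omega(c\epsilon/|X|)$ with no hypothesis relating $|W|$ and $|X|$, and the regime of interest is precisely the unbalanced one. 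Extractors are unbalanced by design -- the construction invoked in the paper (\autoref{lem:extractor-construction}) has $|W| = \Theta(|X|/\delta)$. In that case your expectation is $\Theta(c\epsilon\delta/|X|)$, which is a factor of $\delta$ too weak, and this is not slack in your analysis: for a bi-regular $H$ with $|W| \approx |X|/\delta$, the expectation over random $S$ genuinely is $\Theta(c\epsilon/|W|) \ll c\epsilon/|X|$. The averaging argument cannot see the maximum because the bad sets are rare, and the ``support-size argument'' you wave at would itself need to be the heart of the proof.

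The paper instead constructs the bad sets explicitly. It takes $X' \subset X$ with $|X'| = c\epsilon\delta^2 |X|$ consisting of vertices of near-average degree, lets $S_0 = N(X')$ (which has size at most $2\delta|W|$ by the degree assumption), and compares the distribution $\pi_1$ induced by a $\delta|W|$-sized set $S_1$ \emph{disjoint} from $S_0$ against the distribution $\pi_2$ induced by $S_2 = S_0 \cup S_1$. Since $\pi_1$ vanishes on $X'$ while $\pi_2(x) = \Omega(1/(\delta|X|))$ for each $x\in X'$, one gets $\norm{\pi_1 - \pi_2}^2 = \Omega(c\epsilon/|X|)$, and the triangle inequality then shows one of $S_1, S_2$ witnesses the bound. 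This argument never references $|W|/|X|$ and so covers the unbalanced case. To repair your proof you would need to replace the uniform averaging by a construction that isolates a dense-neighborhood set (as the paper does), at which point you would essentially be reproving their argument.
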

\begin{proof}
  Let $d_\mathrm{avg} = |W|D/|X|$. Note that at most $|X|/2$ vertices
  $x$ satisfy $\deg(x) \geq 2d_\mathrm{avg}$. Further, if there is a
  set $S$ of $|X|/4$ vertices $x$ that $\deg(x) < (0.5)
  d_\mathrm{avg}$, then if $p$ is the distribution on $X$ induced by
  the uniform distribution on $W$, then $|p -\vecu|_1 > 1/4$ which implies that $\norm{p - \vecu}_2^2 \geq \frac{1}{4|X|}$ by Cauchy-Schwarz. \\

  Otherwise, there exists $X' \subset X$ such that $|X'| =c\,\epsilon
  \delta^2 |X|$ and for each $x\in X'$ we have $(0.5)d_\mathrm{avg} < \deg(x) < 2 d_\mathrm{avg}$. Consider the set $S_0$
  of all neighbours of $X'$. If $D < 1/(c\epsilon\delta)$, we have $|S_0| \leq 2c\,\delta^2 \epsilon\cdot|W|D = 2\delta |W|$ which is a very small fraction of $|W|$
  when $\delta$ is small enough. Consider an arbitrary set
  $S_1\subseteq W$ such that $|S_1| = \delta m$, with $S_1\cap S_0 =
  \emptyset$. Let $S_2 = S_0 \cup S_1$. Let $\pi_1, \pi_2$ be the
  probability distribution on $X$ induced by $S_1,S_2$
  respectively. Note that $|S_2| \leq 3\delta |W|$.

  For every $x\in X'$, we know that $\pi_1(x) = 0$ and $\pi_2(x) =
  \Omega\pfrac{1}{\delta |X|}$. Therefore,
  \[
  \norm{\pi_1 - \pi_2}^2 \geq \Omega\pfrac{c \delta^2 \epsilon
    |X|}{\delta^2 |X|^2} \;=\; \frac{\Omega(c \epsilon)}{|X|}.
  \]
  Since $\norm{\pi_1 - \pi_2} \leq \norm{\pi_1 - \vecu} + \norm{\pi_2
    - \vecu}$, we have that one of the sets $S_1$ or $S_2$ shows the
  validity of the lemma
\end{proof}

We thus immediately infer the following:
\begin{corollary} \label{cor:bireg-not-fortifier}
  For all small enough $\delta,\epsilon > 0$, no left-regular graph $H =
  ((W,X),E_H)$ with left-degree $D = o(1/\epsilon\delta)$ is an
  $(\delta, * ,\epsilon)$-fortifier.
\end{corollary}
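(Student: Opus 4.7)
The plan is to deduce the corollary directly from \autoref{lem:random-graph-main-lemma} by choosing the hidden constant in the $o(\cdot)$ notation large enough to overshoot the $\ell_2$ threshold demanded by \autoref{defn:fortifiers}.

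Concretely, fix $\delta, \epsilon > 0$ small enough that \autoref{lem:random-graph-main-lemma} applies, and let $H = ((W,X),E_H)$ be any left-regular graph with left-degree $D$. Define $c := 1/(D\epsilon\delta)$ so that $D = 1/(c\epsilon\delta)$. The hypothesis $D = o(1/\epsilon\delta)$ says precisely that $c \to \infty$; in particular we may assume $c$ is larger than the reciprocal of the absolute constant hidden in the $\Omega(\cdot)$ on the right-hand side of \autoref{lem:random-graph-main-lemma}. Invoking that lemma produces a subset $S \subseteq W$ with $|S| \geq \delta|W|$ such that the distribution $p$ on $X$ induced by the uniform distribution on $S$ (followed by a random neighbour) satisfies
\[
\|p - \vecu\|^2 \;\geq\; \frac{\Omega(c\epsilon)}{|X|}.
\]

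If $H$ were an $(\delta, *, \epsilon)$-fortifier, then by \autoref{defn:fortifiers} every such $S$ would yield $\|p - \vecu\|^2 \leq \epsilon/|X|$. Comparing the two bounds, we would need $\Omega(c\epsilon) \leq \epsilon$, i.e., $c = O(1)$, contradicting the choice $c \to \infty$ forced by $D = o(1/\epsilon\delta)$. Hence no such $H$ can be an $(\delta, *, \epsilon)$-fortifier, as claimed.

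There is essentially no obstacle here: the only thing to check is that the ``$o$'' quantifier matches the ``large enough constant'' quantifier extracted from the $\Omega$ in \autoref{lem:random-graph-main-lemma}, which is immediate from the definition of $o(\cdot)$. Note that the corollary makes no assertion about the $\ell_1$ parameter (written as $*$ in the statement), which is why we only need the $\ell_2$ conclusion of the lemma; the $\ell_1$ bound could well hold while the $\ell_2$ bound fails, and that is precisely the gap between extractors and fortifiers highlighted earlier in the paper.
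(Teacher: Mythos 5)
Your proof is correct and matches the paper's own treatment, which simply states that the corollary is immediately inferred from \autoref{lem:random-graph-main-lemma}. You make explicit the one step the paper leaves implicit — the bookkeeping that $D = o(1/\epsilon\delta)$ forces $c = 1/(D\epsilon\delta)$ to exceed the reciprocal of the hidden $\Omega$-constant, so that $\Omega(c\epsilon)/|X|$ overshoots the fortifier threshold $\epsilon/|X|$ — which is exactly the right thing to do.
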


Note that any $(\delta,\epsilon,\epsilon)$-fortifier is in particular an $(\delta,\epsilon)$-extractor, and hence we also have that $D = \Omega((1/\epsilon^2) \log(1/\delta))$ \cite{RT00}. We also point out that the construction of
\autoref{lem:expanders-are-fortifiers} has left-degree $D =
\tilde{O}(1/\epsilon^2\delta).$ The above essentially shows this
construction is almost optimal.

\section{Parallel repetition from fortification}

We present a mild generalization of \autoref{thm:robustness-to-pr} to
general bi-regular games, following essentially the same strategy as
in \cite{Moshkovitz14}.

\begin{lemma}
  \label{lem:p2-generalgames}
  Let $G = ((X,Y),E)$ be a $(\delta,\epsilon)$-robust general game that is bi-regular with $2\delta \inparen{|\Sigma_X||\Sigma_Y|}^{k-1} < \epsilon$. Then, 
  \[
  \val(G^k) \quad \leq \quad \val(G^{k-1}) \cdot \inparen{\val(G) + \epsilon} \;+\; \epsilon.
  \]
\end{lemma}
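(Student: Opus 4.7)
The plan is to lift Moshkovitz's partition argument for \autoref{thm:robustness-to-pr} from projection games to general games, taking care that both provers now contribute free labels. Fix an optimal strategy $(f,g)$ for $G^k$, with $f=(f_1,\dots,f_k):X^k\to\Sigma_X^k$ and $g=(g_1,\dots,g_k):Y^k\to\Sigma_Y^k$. Let $W$ be the event that all $k$ constraints are satisfied, $W_1$ the event that the first constraint holds, and $W_{2..k}$ the event that the last $k-1$ constraints hold. The target inequality is
\[
\Pr[W] \;\leq\; (\val(G)+\epsilon)\cdot\Pr[W_{2..k}] \;+\; \epsilon,
\]
from which the lemma follows: for every fixed $(x_1,y_1)$, the restriction of $(f,g)$ to coordinates $2,\ldots,k$ is a strategy for $G^{k-1}$ on those independently-drawn coordinates, so $\Pr[W_{2..k}]\leq \val(G^{k-1})$.

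The heart of the argument is a partition indexed by answers on the last $k-1$ coordinates. For fixed $(x_2,\ldots,x_k)$ and each $\vec a\in \Sigma_X^{k-1}$, let $A_{\vec a} = \set{x_1\in X : (f_2,\ldots,f_k)(x_1,x_2,\ldots,x_k)=\vec a}$, and analogously define $B_{\vec b}\subseteq Y$ for each $\vec b \in \Sigma_Y^{k-1}$. The families $\{A_{\vec a}\}$ and $\{B_{\vec b}\}$ partition $X$ and $Y$ respectively, and whether $(\vec a,\vec b)$ satisfies the tail constraints depends only on the tails. Conditioning on the tails yields
\[
\Pr[W] \;=\; \E_{x_{\geq 2},\,y_{\geq 2}}\insquare{\sum_{\vec a,\vec b}\mathbb{1}\insquare{(\vec a,\vec b)\text{ sat.\ tail}}\cdot\E_{x_1,y_1}\insquare{\mathbb{1}_{W_1}\cdot\mathbb{1}_{x_1\in A_{\vec a}}\cdot\mathbb{1}_{y_1\in B_{\vec b}}}}.
\]

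Call $(\vec a,\vec b)$ \emph{good} if $|A_{\vec a}|\geq\delta|X|$ and $|B_{\vec b}|\geq\delta|Y|$, and \emph{bad} otherwise. For good pairs, the $(\delta,\epsilon)$-robustness of $G$ bounds the inner expectation by $(\val(G)+\epsilon)\cdot \Pr[x_1\in A_{\vec a},\,y_1\in B_{\vec b}]$. Summing this over all satisfying pairs (enlarging the good-only sum only weakens the bound) and noting that the provers' outputs on coordinates $2,\ldots,k$ are deterministically fixed by $(x_1,y_1)$ and the tail, one gets $\sum_{(\vec a,\vec b)\text{ sat}}\Pr[x_1\in A_{\vec a},\,y_1\in B_{\vec b}]=\Pr[W_{2..k}\mid x_{\geq 2},y_{\geq 2}]$; averaging over the tail produces the good contribution $(\val(G)+\epsilon)\cdot\Pr[W_{2..k}]$.

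The only place where the hypothesis $2\delta(|\Sigma_X||\Sigma_Y|)^{k-1}<\epsilon$ enters, and the only non-routine step, is controlling the bad contribution. Bi-regularity of the underlying graph makes $\Pr[x_1\in A_{\vec a}]=|A_{\vec a}|/|X|$, and for any $\vec a$ with $|A_{\vec a}|<\delta|X|$, the partition property of $\{B_{\vec b}\}$ gives $\sum_{\vec b}\Pr[x_1\in A_{\vec a},\,y_1\in B_{\vec b}]=\Pr[x_1\in A_{\vec a}]<\delta$. There are at most $|\Sigma_X|^{k-1}$ such offending $\vec a$'s and symmetrically at most $|\Sigma_Y|^{k-1}$ offending $\vec b$'s, so the bad contribution per tail is at most $\delta(|\Sigma_X|^{k-1}+|\Sigma_Y|^{k-1})\leq 2\delta(|\Sigma_X||\Sigma_Y|)^{k-1}<\epsilon$. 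The main subtlety relative to the projection-game proof is precisely this counting: for projections the $\vec a$ are pinned down by $\vec b$ through $\psi_{x_i,y_i}$, so only $|\Sigma_Y|^{k-1}$ pairs enter the bad union bound, whereas in the general setting both sides contribute independently and the stronger $(|\Sigma_X||\Sigma_Y|)^{k-1}$ factor in the hypothesis is forced.
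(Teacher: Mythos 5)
Your proof is correct and is essentially the paper's own argument: you partition the free coordinate's queries into rectangles indexed by the labels the strategy assigns on the other $k-1$ coordinates, apply $(\delta,\epsilon)$-robustness on the large accepting rectangles, union-bound the small ones via the hypothesis $2\delta(|\Sigma_X||\Sigma_Y|)^{k-1}<\epsilon$ (using bi-regularity), and bound the accepting mass by $\val(G^{k-1})$. The only differences are cosmetic: you condition on the tail rather than the first $k-1$ rounds, phrase the counting probabilistically, and get the marginally tighter bad-rectangle bound $\delta(|\Sigma_X|^{k-1}+|\Sigma_Y|^{k-1})$.
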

\begin{proof}
Consider any deterministic strategy for the provers. These are merely functions 
\begin{align*}
f_1:X^k & \rightarrow \Sigma_X^k \quad \text{and} \quad f_2:Y^k  \rightarrow \Sigma_Y^k
\end{align*}
that assign labels to the $k$ queries asked by the verifier. For every
$(k-1)$-tuple of queries $\bar{v} = (v_1,\dots, v_{k-1})$ with each
$v_i := (x_i ,y_i) \in E$, and an arbitrary tuple of $(k-1)$ pairs of labels
$\bar{\sigma}:=((\sigma_1,\sigma_1'),\ldots,
(\sigma_{k-1},\sigma_{k-1}')) \in \inparen{\Sigma_X \times \Sigma_Y}^{k-1}$,
define the rectangle $\mathcal{R}_{\bar{v},\bar{\sigma}} :=
S_{\bar{v},\bar{\sigma}} \times T_{\bar{v},\bar{\sigma}}$ where
\begin{align*}
S_{\bar{v},\bar{\sigma}} & = \setdef{x_k}{f_1(x_1,x_2,\dots, x_k)    \  \text{assigns label $\sigma_i$ to $x_i$ for all $i\leq k-1$}},\\
T_{\bar{v},\bar{\sigma}} & = \setdef{y_k}{f_2(y_1,y_2,\dots, y_k)\  \text{assigns label $\sigma_i'$ to $y_i$ for all $i\leq k-1$}}.
\end{align*}
Also we shall call a rectangle $\mathcal{R}_{\bar{v},\bar{\sigma}}$ \emph{accepting} if every coordinate $(\sigma_i, \sigma_i')$ of $\bar{\sigma}$ satisfies the constraint on $v_i = (x_i,y_i)$ for all $1\leq i \leq k-1$. In words, an accepting rectangle $\mathcal{R}_{\bar{v},\bar{\sigma}}$ is the set of all possible queries $v_k$ for the last round such that the provers win on the first $(k-1)$
rounds with $x_1,\dots,x_{k-1}$ and $y_1,\dots,y_{k-1}$ getting labels $\sigma_1,\dots,\sigma_{k-1}$ and $\sigma_1',\dots, \sigma_{k-1}'$ respectively.  We shall call a rectangle $\mathcal{R}_{\bar{v},\bar{\sigma}}$ ``large'' if $S_{\bar{v},\bar{\sigma}}$ and $T_{\bar{v},\bar{\sigma}}$ have density at least $\delta$, and ``small'' otherwise.
We shall partition the space of all possible queries $(v_1,\dots, v_k)$ into the following sets.  Note that $v_k$ belongs to a unique rectangle $ \mathcal{R}_{\bar{v},\bar{\sigma}}$.
\begin{itemize}
\item $\mathcal{A}_0 = \setdef{(v_1,\dots, v_k)}{ \mathcal{R}_{\bar{v},\bar{\sigma}}\text{ is not accepting}}$
\item $\mathcal{A}_1 = \setdef{(v_1,\dots, v_k)}{\mathcal{R}_{\bar{v},\bar{\sigma}} \text{ is accepting and ``large''}}$
\item $\mathcal{A}_2 = \setdef{(v_1,\dots, v_k)}{\mathcal{R}_{\bar{v},\bar{\sigma}} \text{ is accepting and ``small''}}$
\end{itemize}

Observe that $\abs{\mathcal{A}_1} + \abs{\mathcal{A}_2} \leq
\val(G^{k-1}) \cdot \abs{E}^k$ because $\mathcal{A}_1 \union
\mathcal{A}_2$ is the set of queries on which the provers succeed on
the first $(k-1)$ rounds. 

Also, the projection of elements in set $\mathcal{A}_1$ to the $k$th coordinate,
 is essentially a union of large rectangles. By the $(\delta, \epsilon)$-robustness of $G$,
any strategy of the provers can succeed on each large rectangle with
probability at most $\val(G) + \epsilon$. Hence, the provers succeed
on at most a $(\val(G) + \epsilon)$-fraction of points in
$\mathcal{A}_1$.

Furthermore, since $G$ is regular, we get
$\abs{\mathcal{A}_2}$ is at most $\abs{E}^{k-1} \cdot 2\delta|E|\cdot
\abs{\Sigma_X\times \Sigma_Y}^{k-1} \leq \epsilon \abs{E}^k$ by the
choice of $\delta$ and $\epsilon$.\footnote{In the case of projection
  games, the set of $\bar{\sigma}$ that are accepting pairs for
  $\bar{v}$ can be indexed with $\Sigma_Y^{k-1}$ instead of $(\Sigma_X
  \times \Sigma_Y)^{k-1}$, and that gets the better parameters for
  projection games as in \autoref{thm:robustness-to-pr}.}

Hence, the total number of queries on which the provers can succeed is
upper bounded by $(\val(G) + \epsilon) \abs{\mathcal{A}_1} +
\abs{\mathcal{A}_2}$. It therefore follows that they succeed on at
most a $\val(G^{k-1}) (\val(G) + \epsilon) + \epsilon$ fraction of
queries.
\end{proof}

Unfolding the recursion from the above lemma, we get the following generalization of \autoref{thm:robustness-to-pr}.

\begin{corollary}
Let $G = ((X,Y),E)$ be a $(\delta,\epsilon)$-robust general game with $2\delta \inparen{|\Sigma_X||\Sigma_Y|}^{k-1} < \epsilon$. Then, 
\[
\val(G^k) \leq (\val(G) + \epsilon)^k + k\cdot \epsilon.
\]
\end{corollary}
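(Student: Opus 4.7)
The plan is to reduce the $k$-fold game to a combination of the $(k-1)$-fold game and the single-shot robustness guarantee. Fix an arbitrary pair of deterministic prover strategies $f_1 : X^k \to \Sigma_X^k$ and $f_2 : Y^k \to \Sigma_Y^k$; since the provers are deterministic, the labels they assign on the first $k-1$ coordinates depend only on the respective $k-1$-tuples of queries. I would index the ``state'' of these first $k-1$ rounds by the pair $(\bar{v}, \bar{\sigma})$ where $\bar{v} = (v_1, \dots, v_{k-1})$ is the sequence of edges asked in the first $k-1$ rounds and $\bar{\sigma} \in (\Sigma_X \times \Sigma_Y)^{k-1}$ is the tuple of labels assigned to those vertices. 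For each such $(\bar{v}, \bar{\sigma})$, let $S_{\bar{v},\bar{\sigma}} \subseteq X$ be the set of $x_k$ such that $f_1$ assigns exactly $\bar{\sigma}|_X$ to the first $k-1$ $x$-coordinates when the last coordinate is $x_k$; define $T_{\bar{v},\bar{\sigma}} \subseteq Y$ analogously. These sets are disjoint across $\bar{\sigma}$ for fixed $\bar{v}$ and partition $X$ (resp. $Y$).

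Next I would partition the random query space $(v_1, \dots, v_k)$ into three types according to the rectangle $\mathcal{R}_{\bar{v},\bar{\sigma}} := S_{\bar{v},\bar{\sigma}} \times T_{\bar{v},\bar{\sigma}}$ that $v_k$ falls into: those where $\bar{\sigma}$ does not satisfy the first $k-1$ constraints (automatically lost), those where $\bar{\sigma}$ does satisfy them and both $|S_{\bar{v},\bar{\sigma}}| \geq \delta |X|$ and $|T_{\bar{v},\bar{\sigma}}| \geq \delta |Y|$ (call this large), and those where $\bar{\sigma}$ satisfies them but one of the sets has density below $\delta$ (small). The fraction of $(v_1,\dots,v_k)$ on which the first $k-1$ rounds succeed, i.e. the union of the last two types, is at most $\val(G^{k-1})$ by definition of the value of the $(k-1)$-fold game applied to the $(k-1)$-round strategy induced by fixing $x_k, y_k$.

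For the large type, the sub-game $G_{S_{\bar{v},\bar{\sigma}} \times T_{\bar{v},\bar{\sigma}}}$ has value at most $\val(G) + \epsilon$ by the $(\delta,\epsilon)$-robustness hypothesis; since the prover's choice of label on $v_k$ is determined by the full $k$-tuple and hence induces a strategy on this sub-game, the fraction of $v_k \in \mathcal{R}_{\bar{v},\bar{\sigma}}$ on which the $k$-th constraint is satisfied is at most $\val(G) + \epsilon$. Summing over large accepting rectangles inside the $(k-1)$-winning set gives a contribution of at most $\val(G^{k-1}) \cdot (\val(G) + \epsilon)$ to $\val(G^k)$. For the small type, I bound crudely: for each fixed $\bar{v}$, there are at most $(|\Sigma_X||\Sigma_Y|)^{k-1}$ choices of $\bar{\sigma}$, and each small rectangle contributes at most a $2\delta$ fraction of $E$ in the $v_k$-coordinate (using bi-regularity so that a fraction-$\delta$ set on one side contributes at most a $\delta$ fraction of edges). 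This gives a total of at most $2\delta (|\Sigma_X||\Sigma_Y|)^{k-1} \leq \epsilon$, independent of the strategy.

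Combining, the total winning probability is at most $\val(G^{k-1})(\val(G)+\epsilon) + \epsilon$, as desired. The main subtlety I expect is the small-rectangle count: one must be careful to bound the mass in the $v_k$-coordinate rather than the product space, and this is exactly where the bi-regularity hypothesis is invoked (so that ``density $<\delta$ on one side'' translates to ``edge mass $<\delta$''), and where the asymmetry between projection games and general games shows up---accepting label tuples $\bar{\sigma}$ can be indexed by $\Sigma_Y^{k-1}$ in the projection case, yielding the cleaner hypothesis $2\delta|\Sigma_Y|^{k-1} \le \epsilon$, but for general relations both alphabets must enter.
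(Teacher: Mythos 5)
Your argument is essentially the paper's own proof: the paper factors the corollary into a recursive lemma asserting $\val(G^k) \le \val(G^{k-1})\cdot(\val(G)+\epsilon) + \epsilon$, proved by exactly your decomposition --- partition the query tuples according to whether the rectangle $\mathcal{R}_{\bar v,\bar\sigma}$ containing $v_k$ is non-accepting, accepting and large, or accepting and small; bound the accepting mass by $\val(G^{k-1})$, apply $(\delta,\epsilon)$-robustness on each large rectangle, and count the small rectangles using bi-regularity together with $2\delta(|\Sigma_X||\Sigma_Y|)^{k-1} < \epsilon$ --- and then unfolds the recursion. You reproduce the recursive step faithfully, and in fact handle one point more carefully than the paper's prose: the prover's first $k-1$ answers do depend on the last query (so your opening sentence claiming they depend only on the first $k-1$ queries is literally false), but your definition of $S_{\bar v,\bar\sigma}$ and $T_{\bar v,\bar\sigma}$ as sets of last-coordinate queries, and your remark about the strategy induced by fixing $x_k,y_k$, make the argument correct, just as in the paper.

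The only omission is the final step: you end with ``$\val(G^k)\le \val(G^{k-1})(\val(G)+\epsilon)+\epsilon$, as desired,'' but the statement to be proved is $\val(G^k)\le(\val(G)+\epsilon)^k+k\epsilon$. You still need the (routine) unrolling: since $2\delta(|\Sigma_X||\Sigma_Y|)^{j-1}\le 2\delta(|\Sigma_X||\Sigma_Y|)^{k-1}<\epsilon$ for every $j\le k$, the recursion applies at every level; writing $\rho=\val(G)+\epsilon$, induction gives $\val(G^k)\le \rho^k+(k-1)\epsilon\rho+\epsilon\le \rho^k+k\epsilon$ whenever $\rho\le 1$, and when $\rho>1$ the claimed bound is trivial because $\val(G^k)\le 1$. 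This is a one-line fix rather than a flaw in the approach, but as written the proposal proves the recursive lemma, not the corollary.
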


As mentioned earlier, this is not useful when say $\abs{\Sigma_X} =
\exp(O(1/\delta))$, which is unfortunately the case when an arbitrary
game is made robust by concatenating with a fortifier. 

\section{Making the graph bi-regular}\label{sec:deg-reduction}

In this section, we shall show that a general game on a graph can be converted to a slightly larger game on a bi-regular
graph with almost the same value.

\begin{lemma}\label{lem:general-game-biregular}
Given a two-prover game  $G$  any graph $((X,Y),E)$. For every $\epsilon > 0$, there is a polynomial time algorithm to construct a game $G'$ with $\mathrm{size}(G') = \mathrm{size}(G) \cdot \tilde{O}\inparen{(\abs{\Sigma_X} + \abs{\Sigma_Y})/\epsilon}^5$ such that  $G'$ is on a bi-regular graph and  $\val(G') \leq \val(G) + \epsilon$. 
\end{lemma}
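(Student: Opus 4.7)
The plan follows the standard vertex-splitting approach combined with a consistency gadget, in the spirit of the construction of \cite{DH13}. Fix a target degree $D$ polynomial in $(\abs{\Sigma_X}+\abs{\Sigma_Y})/\epsilon$. For each $x \in X$, replace $x$ by a cloud $C_x$ of $\lceil \deg_G(x)/D\rceil$ copies and distribute the edges incident on $x$ as evenly as possible among them; to bring every copy to degree exactly $D$, pad with dummy edges to a single auxiliary sink in $Y$ carrying an always-satisfied trivial constraint. Do the symmetric splitting on the $Y$ side with its own target degree. The resulting bipartite graph is bi-regular on both sides, and a prover can realize $\val(G)$ by copying a good $G$-strategy uniformly across each cloud, but can a priori do \emph{better} by labeling different copies of the same original vertex inconsistently.

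To penalize inconsistency, for each $x \in X$ attach to the cloud $C_x$ an auxiliary cloud $A_x$ on the $Y$ side of equal size and connect $C_x$ to $A_x$ by a bounded-degree bi-regular bipartite expander (obtained from \autoref{lem:expander-construction}) whose every edge carries the equality constraint over $\Sigma_X$; do the symmetric construction for each $y \in Y$ on the $X$ side. Let the consistency edges account for a total sampling probability $\alpha = \Theta(\epsilon)$ and the original ``game'' edges for probability $1-\alpha$; by replicating edges an appropriate integer number of times one realises these relative weights inside a bi-regular multigraph. With $D$ and the expander degree tuned appropriately, the total blow-up in vertices and edges is $\tilde O((\abs{\Sigma_X}+\abs{\Sigma_Y})/\epsilon)^5$, as claimed.

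For the value bound, take any deterministic strategy in $G'$ and let $\sigma^*_x \in \Sigma_X$ be the plurality label of $C_x$ (breaking ties arbitrarily), with $\gamma_x$ the fraction of copies of $x$ \emph{not} labeled $\sigma^*_x$; define $\sigma^*_y,\gamma_y$ analogously. By the expander-mixing lemma applied to the gadget at $x$, the fraction of violated equality edges at $C_x$ is at least $2\gamma_x(1-\gamma_x) - \lambda$, so the consistency-failure contribution to $1-\val(G')$ is at least $\Omega\big(\alpha \cdot (\mathbb{E}_x[\gamma_x(1-\gamma_x)] + \mathbb{E}_y[\gamma_y(1-\gamma_y)])\big)$. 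Meanwhile, the success probability on game edges is at most the value of the plurality-induced strategy in $G$ (which is $\leq \val(G)$) plus an additive $O(\mathbb{E}_x \gamma_x + \mathbb{E}_y \gamma_y)$ coming from edges incident to mislabeled copies. Adding the two contributions and optimizing in $\gamma_x,\gamma_y$ yields $\val(G') \leq \val(G) + \epsilon$; the reverse inequality is immediate.

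The main obstacle is the simultaneous parameter tuning: the splitting must produce a genuinely bi-regular graph (requiring careful handling of dummy edges and matched target degrees on the two sides), the consistency gadgets must be inserted so that the combined multigraph is still bi-regular, and the parameters $D$, the expander degree and $\alpha$ must be jointly chosen so that the overall blow-up stays within $\tilde O((\abs{\Sigma_X}+\abs{\Sigma_Y})/\epsilon)^5$ while the value slack stays within $\epsilon$. Once the gadget is set up, the analysis reduces to a routine plurality argument combined with the expander-mixing lemma.
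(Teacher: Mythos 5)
There is a genuine gap, and it lies in the value analysis of your consistency gadget. Your accounting is: gain on game edges at most $O(\E_x\gamma_x+\E_y\gamma_y)$ versus penalty at least $\Omega\inparen{\alpha\cdot(\E_x[\gamma_x(1-\gamma_x)]+\E_y[\gamma_y(1-\gamma_y)])}$ with $\alpha=\Theta(\epsilon)$. These two terms cannot be balanced: the gain is not damped by $\alpha$, so for the penalty to dominate you would need $\alpha=\Omega(1)$, and even then plurality decoding only guarantees $1-\gamma_x\geq 1/\abs{\Sigma_X}$, so $\gamma_x(1-\gamma_x)$ can be a factor $\abs{\Sigma_X}$ smaller than $\gamma_x$; moreover the equality edges are trivially satisfiable and by themselves add $\alpha$ to the value, so no choice of $\alpha$ yields an \emph{additive} $\epsilon$ bound. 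A concrete failure: let $G$ be a star with center $x$, $\Sigma_X=\{1,\dots,k\}$, and the constraint on $(x,y_i)$ demanding label $i$, so $\val(G)=1/k$. Splitting $x$ into clouds and distributing edges ``as evenly as possible'' (any fixed partition into blocks of size $D$ is allowed by your description) can give each copy only edges towards a single $y_i$; labelling each copy by its own favourite value satisfies all game edges, and the consistency gadget deducts at most $O(\alpha)$, so $\val(G')\geq 1-O(\epsilon)\gg \val(G)+\epsilon$. (Two smaller problems: your always-satisfied dummy padding edges can dominate when degrees are below $D$ --- e.g.\ when $G$ is a matching --- again pushing $\val(G')$ towards $1$ unless you first boost degrees by duplicating edges; and the expander-mixing error in your ``violated equality edges'' bound should be $\lambda\abs{\Sigma_X}$, not $\lambda$, since mixing must be summed over the label classes.)

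The paper avoids equality gadgets entirely, and this is the missing idea. For each $y$ (after duplicating every edge to make degrees large) it introduces a cloud $C_y$ of $d_y$ vertices and joins $C_y$ to $N(y)$ by a degree-$d$ $\lambda$-expander (\autoref{lem:expander-construction}) whose edges all carry the \emph{original} constraint $\psi_{(x,y)}$; copies of $y$ are free to take different labels. Decoding assigns to $y$ the label of a uniformly random element of $C_y$, and the Expander Mixing Lemma (\autoref{lem:expander-mixing}), applied to the partition of $C_y$ into label classes against the sets $A_\sigma\subseteq N(y)$ of accepting neighbours, shows the provers gain at most $\lambda\abs{\Sigma_Y}$ per vertex, giving $\val(G_\lambda)\leq\val(G)+\lambda\abs{\Sigma_Y}$ directly, with no reweighting and no consistency penalty to balance. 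Repeating on the other side gives bi-regularity with total loss $\epsilon$ and the claimed size bound. If you want to keep a splitting-plus-consistency route you would have to make the bipartite graph between each cloud and its original neighbourhood itself expanding (so that every copy sees a representative sample of constraints), at which point the consistency edges are superfluous and you have essentially reconstructed the paper's argument.
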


The rest of this section would be a proof of this. Suppose we have a
graph $G = ((X,Y),E)$ that is possibly non-regular. We shall make some
transformations on the graph to make it bi-regular such that it does
not affect the value of the game by much. This is along the same lines
as the technique used by Dinur and Harsha~\cite{DH13}. We shall need
the following well-known \emph{Expander Mixing Lemma}.

\begin{lemma}[Expander Mixing Lemma] \label{lem:expander-mixing} Let
  $H = ((P,Q),E)$ be a $\lambda$-expander with $|P| = |Q|$. Then, for
  every subsets $A \subseteq P$ and $B\subseteq Q$,
  \[
  \abs{\frac{\abs{E(A,B)}}{\abs{E}} \;-\; \frac{|A|}{|P|} \cdot \frac{|B|}{|Q|} } \;\leq\; \lambda
  \]
\end{lemma}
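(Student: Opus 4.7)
The plan is to carry out the standard spectral proof of the Expander Mixing Lemma using the normalized adjacency matrix notation already set up in \autoref{defn:expanders}. Let $H$ also denote the $|Q| \times |P|$ matrix with $H(y,x)=1/D$ whenever $(x,y)\in E$, where $D$ is the common left-degree. For any subsets $A\subseteq P$ and $B\subseteq Q$, I would first rewrite $|E(A,B)|$ as an inner product: since $H(y,x)=1/D$ on edges, one has $\inangle{H \mathbf{1}_A,\mathbf{1}_B} = \abs{E(A,B)}/D$, and bi-regularity gives $|E|=D\cdot|P|$, so
\[
\frac{\abs{E(A,B)}}{\abs{E}} \;=\; \frac{\inangle{H\mathbf{1}_A,\mathbf{1}_B}}{|P|}.
\]

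Next I would decompose the indicator vectors along the uniform direction: write $\mathbf{1}_A = \tfrac{|A|}{|P|}\mathbf{1}_P + \mathbf{1}_A^\perp$ and $\mathbf{1}_B = \tfrac{|B|}{|Q|}\mathbf{1}_Q + \mathbf{1}_B^\perp$, with the perpendicular parts orthogonal to the all-ones vector. Because the graph is bi-regular with $|P|=|Q|$, both $H$ and $H^\top$ map the all-ones vector to the all-ones vector, hence $H \mathbf{1}_A^\perp \perp \mathbf{1}_Q$ (using $\inangle{H\mathbf{1}_A^\perp,\mathbf{1}_Q} = \inangle{\mathbf{1}_A^\perp, H^\top \mathbf{1}_Q} = \inangle{\mathbf{1}_A^\perp,\mathbf{1}_P}=0$). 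Expanding the inner product, the two cross terms vanish, leaving
\[
\inangle{H\mathbf{1}_A,\mathbf{1}_B} \;=\; \frac{|A||B|}{|P|} \;+\; \inangle{H\mathbf{1}_A^\perp,\mathbf{1}_B^\perp}.
\]

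Dividing by $|P|$ and using $|P|=|Q|$ identifies the first term as the ``expected'' density $\tfrac{|A|}{|P|}\cdot\tfrac{|B|}{|Q|}$, so the discrepancy equals $\tfrac{1}{|P|}\inangle{H\mathbf{1}_A^\perp,\mathbf{1}_B^\perp}$. I then bound this by Cauchy--Schwarz and the spectral definition of $\lambda$:
\[
\abs{\inangle{H\mathbf{1}_A^\perp,\mathbf{1}_B^\perp}} \;\leq\; \norm{H\mathbf{1}_A^\perp}\cdot\norm{\mathbf{1}_B^\perp} \;\leq\; \lambda\cdot\norm{\mathbf{1}_A^\perp}\cdot\norm{\mathbf{1}_B^\perp}.
\]
Finally, $\norm{\mathbf{1}_A^\perp}^2 = |A|(1-|A|/|P|)\leq |A|$ and similarly $\norm{\mathbf{1}_B^\perp}^2\leq |B|$, so the whole expression is at most $\tfrac{\lambda}{|P|}\sqrt{|A||B|}\leq \lambda$, since $\sqrt{|A||B|}\leq |P|$.

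The only delicate point is verifying the cancellation of cross terms, which rests on the fact that $\mathbf{1}_P$ and $\mathbf{1}_Q$ are genuine left/right singular vectors of $H$ with singular value $1$ — this uses both bi-regularity and $|P|=|Q|$. Once that is in place, the proof is just bookkeeping with the normalization factors and a single application of Cauchy--Schwarz combined with the definition of $\lambda(H)$.
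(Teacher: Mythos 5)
Your proof is correct, and it is exactly the standard spectral argument: the paper itself does not prove \autoref{lem:expander-mixing} but simply cites it to a textbook (\cite[Chapter 5]{AS92}), which presents this same decomposition-plus-Cauchy--Schwarz proof. All the steps check out against the paper's definition of $\lambda(H)$ (your $\mathbf{1}_A^\perp$ is orthogonal to the uniform direction, and bi-regularity with $|P|=|Q|$ indeed makes $\mathbf{1}_P,\mathbf{1}_Q$ singular vectors with value $1$), so nothing further is needed.
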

A proof of the above lemma may be found in any text that studies
expanders graphs (for example, \cite[Chapter 5]{AS92}).

We shall make the graph bi-regular in two steps. We shall first make a
transformation that makes it regular on the right side, and then
repeat the same process on the left. But first, we would need to
ensure that the degree on the $Y$ side is large enough for the
transformation to work. This is just done by creating $d$ copies of
every edge with the same constraint. The graph therefore becomes a
multi-graph but the value remains the same.\footnote{One could also do
  this by replicating every vertex $d$ times and adding the edges
  between them.}

Thus, from now on, we assume that we are given a game $G = ((X,Y),E)$,
with the minimum degree being ``large enough'', that we want to make
biregular. The transformation of $G$ to make it regular on right side
is as follows (\autoref{fig:biregularity}):

For every vertex $y\in X$ with degree $d_y$, we shall have a set $C_y$
of $d_y$ vertices. Between the vertices $C_y$ and the neighbourhood of
$y$ (in $G$), we shall add a $\lambda$-expander of degree $d$. The constraint on any
edge between $x \in N(y)$ and a vertex in $C_y$ would be the same as
$\psi_{(x,y)}$. Let us denote this game by $G_\lambda$.
\begin{figure}[h]
\begin{center}
\begin{tikzpicture}
\tikzstyle{mycirc}=[circle, draw, inner sep=0pt, minimum width=4pt]
\draw (0,0) ellipse (0.5cm and 1.5cm);
\node at (0,-1.8) {\small $N(y)$};
\node[mycirc] (x1) at (0,1) {};
\node[mycirc] (x2) at (0,0.5) {};
\node[mycirc] (x3) at (0,0) {};
\node[mycirc] (x4) at (0,-0.5) {};
\node[mycirc] (x5) at (0,-1) {};

\node[mycirc] (y) at (2,0) {}
edge[draw=red!80] (x1)
edge[draw=blue!80] (x2)
edge[draw=green!80] (x3)
edge[draw=brown] (x4)
edge[draw=purple] (x5);
\node at (2,-0.3) {\small $y$};

\draw (5,0) ellipse (0.5cm and 1.5cm);
\node at (5,-1.8) {\small $N(y)$};

\node[mycirc] (y1) at (7,1) {};
\node[mycirc] (y2) at (7,0.5) {};
\node[mycirc] (y3) at (7,0) {};
\node[mycirc] (y4) at (7,-0.5) {};
\node[mycirc] (y5) at (7,-1) {};

\node[mycirc] (x11) at (5,1) {}
edge[draw=red!80] (y2)
edge[draw=red!80] (y4)
;
\node[mycirc] (x21) at (5,0.5) {}
edge[draw=blue!80] (y1)
edge[draw=blue!80] (y5)
;
\node[mycirc] (x31) at (5,0) {}
edge[draw=green!80] (y3)
edge[draw=green!80] (y2)
;
\node[mycirc] (x41) at (5,-0.5) {}
edge[draw=brown] (y1)
edge[draw=brown] (y5);
\node[mycirc] (x51) at (5,-1) {}
edge[draw=purple] (y3)
edge[draw=purple] (y4);
\draw (7,0) ellipse (0.5cm and 1.5cm);
\node at (7,-1.8) {\small $C_y$};
\draw (3,0) edge[ultra thick,->] (4,0);
\end{tikzpicture}
\end{center}
\caption{Enforcing bi-regularity}\label{fig:biregularity}
\end{figure}
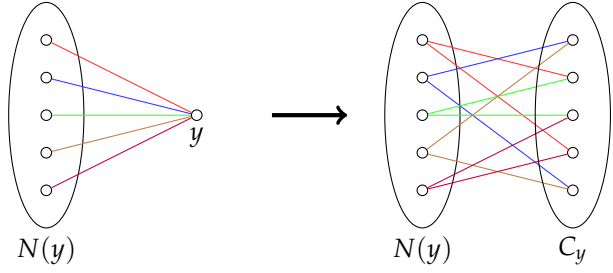
\begin{lemma}
$\val(G_\lambda) \leq \val(G) + \lambda \abs{\Sigma_Y}$. 
\end{lemma}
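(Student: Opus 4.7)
The plan is to start from any optimal labeling of $G_\lambda$ and produce from it a labeling of $G$ that recovers almost the same fraction of satisfied constraints. Let $(\ell_X^*, \{\ell_{C_y}^*\}_{y \in Y})$ be any labeling of $G_\lambda$. I would use the same labeling $\ell_X^*$ on $X$ for $G$, and for each $y$ define $\ell_Y(y) \in \Sigma_Y$ to be the label $\sigma$ that maximizes the number of neighbours $x \in N(y)$ satisfying $(\ell_X^*(x), \sigma) \in \psi_{(x,y)}$ (i.e.\ the best possible label for $y$ given the fixed $X$-side strategy). The goal is to show that the value achieved on $G_\lambda$ by $(\ell_X^*, \ell_{C}^*)$ is at most $\val(G) + \lambda |\Sigma_Y|$, from which the inequality follows by taking $(\ell_X^*, \ell_{C}^*)$ optimal.

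The key local computation is done cluster-by-cluster. Fix $y \in Y$. For every $\sigma \in \Sigma_Y$, define
\[
A_{y,\sigma} \;=\; \{x \in N(y) : (\ell_X^*(x), \sigma) \in \psi_{(x,y)}\}, \qquad
B_{y,\sigma} \;=\; \{c \in C_y : \ell_{C_y}^*(c) = \sigma\}.
\]
The number of satisfied edges of $G_\lambda$ inside the expander attached to $y$ is exactly $\sum_{\sigma} |E_y(A_{y,\sigma}, B_{y,\sigma})|$, where $E_y$ is the edge set of the $d$-regular $\lambda$-expander between $N(y)$ and $C_y$ (so $|E_y| = d \cdot d_y$).

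Applying the Expander Mixing Lemma (\autoref{lem:expander-mixing}) to each pair $(A_{y,\sigma}, B_{y,\sigma})$ gives
\[
|E_y(A_{y,\sigma}, B_{y,\sigma})| \;\leq\; |E_y| \left( \frac{|A_{y,\sigma}|}{|N(y)|} \cdot \frac{|B_{y,\sigma}|}{|C_y|} \;+\; \lambda \right).
\]
Summing over $\sigma \in \Sigma_Y$ and using that $\{B_{y,\sigma}\}_\sigma$ partitions $C_y$, so $\sum_\sigma |B_{y,\sigma}|/|C_y| = 1$, we get
\[
\sum_\sigma |E_y(A_{y,\sigma}, B_{y,\sigma})| \;\leq\; |E_y| \cdot \max_\sigma \frac{|A_{y,\sigma}|}{|N(y)|} \;+\; |E_y|\, \lambda |\Sigma_Y| \;=\; d \cdot \max_\sigma |A_{y,\sigma}| \;+\; d\, d_y\, \lambda |\Sigma_Y|.
\]
Here $\max_\sigma |A_{y,\sigma}|$ is exactly the number of edges of $G$ incident to $y$ that are satisfied under $(\ell_X^*, \ell_Y)$ with our choice of $\ell_Y(y)$.

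Summing over $y \in Y$, the total number of satisfied edges in $G_\lambda$ is at most $d \sum_y \max_\sigma |A_{y,\sigma}| + d\,\lambda |\Sigma_Y| \sum_y d_y = d \cdot (\text{satisfied edges of } G \text{ under } (\ell_X^*, \ell_Y)) + d\,\lambda |\Sigma_Y|\,|E|$. Dividing both sides by $|E(G_\lambda)| = d|E|$ and using that the induced labeling of $G$ satisfies at most $\val(G) \cdot |E|$ constraints yields $\val(G_\lambda) \leq \val(G) + \lambda |\Sigma_Y|$. The only substantive step is the application of the Expander Mixing Lemma to extract from $G_\lambda$'s cluster-labels an effective single label per $y$; everything else is bookkeeping, and the factor $|\Sigma_Y|$ arises exactly because we apply the EML once per possible label $\sigma$.
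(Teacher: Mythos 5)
Your proof is correct and follows essentially the same strategy as the paper: apply the Expander Mixing Lemma cluster-by-cluster to the partition $\{B_{y,\sigma}\}_\sigma$ of $C_y$, sum, and compare against $\val(G)$. The only cosmetic difference is that you recover a deterministic $Y$-labeling by best response ($\ell_Y(y) = \arg\max_\sigma |A_{y,\sigma}|$), whereas the paper uses a randomized labeling $L(y) = L_\lambda(y_i)$ for a uniform $y_i \in C_y$ and bounds by $\delta_y = \sum_\sigma \frac{|C_{y,\sigma}|}{d_y}\frac{|A_{y,\sigma}|}{d_y}$; both are dominated by $\val(G)$ after summing over $y$, and both yield the same additive $\lambda|\Sigma_Y|$ loss from the EML error term summed over the $|\Sigma_Y|$ label classes.
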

\begin{proof}
  Consider any labelling $L_\lambda$ of $G_\lambda$. From this, let
  $L$ be the natural randomized labelling for $G$ such that $L(x) =
  L_\lambda(x)$ for every $x\in X$, and $L(y) = L_\lambda(y_i)$ be
  where $y_i$ is a random element of $C_y$. For every $y\in Y$, let
  $\delta_y$ be the expected fraction of edges incident on $y$ that are
  satisfied by this assignment.
  \[
  \delta_y = \sum_{\sigma \in \Sigma_Y} \Pr[L(y) = \sigma] \cdot \Pr_{x \sim y}[(L(x),\sigma)\text{ satisfies $\psi_{(x,y)}$} ]
  \]
  By the definition of $\val(G)$, we know that $\sum\limits_{y\in Y} d_y\delta_y \leq \val(G)\cdot |E|$. 
  \begin{addmargin}[2em]{5em}
    \begin{subclaim}\label{sclm:sampler-mixing}
      For every $y\in Y$, the fraction of edges between $N(y)$ and $C_y$ that are satisfied by $L_\lambda$ is at most $(\delta_y + \lambda |\Sigma_y|)$
    \end{subclaim}
  \end{addmargin}
  
  \medskip
  
  \noindent 
  Before we prove this, let us see why this is sufficient to complete
  the proof of the lemma. The number of edges between $C_y$ and $N(y)$
  is exactly $d \cdot d_y$ where $d$ is the degree of the
  expander. Therefore, the number of edges in $G_\lambda$
  that are satisfied is
  \begin{eqnarray*}
    \sum_{y \in Y} d \cdot d_y \cdot (\delta_y + \lambda|\Sigma_y|) & \leq & d \cdot \sum_{y\in Y} d_y \delta_y \;+\; O(d \lambda \abs{\Sigma_Y}) \cdot \sum_{y \in Y} d_y\\
    & \leq & \inparen{\val(G) + \lambda \abs{\Sigma_Y}} \cdot \abs{E_\lambda}
  \end{eqnarray*}
  as claimed by the lemma. Thus, it suffices to prove \autoref{sclm:sampler-mixing}. 

  \begin{addmargin}[2em]{5em}
    \begin{myproof}{\autoref{sclm:sampler-mixing}}
      The number of edges between $C_y$ and $N(y)$ is $d \cdot
      d_y$. Partition the vertices of $C_y$ into sets
      $\setdef{C_{y,\sigma}}{\sigma \in \Sigma_Y}$ based on the label
      assigned by $L_\lambda$. For every $\sigma\in \Sigma_Y$, let
      $A_\sigma$ denote the set of vertices $x\in N(y)$ such that
      $(L_\lambda(x),\sigma)$ satisfies $\psi_{(x,y)}$. Hence, the set
      of edges that are satisfied by $L_\lambda$ is precisely $\Union_{\sigma} E(A_\sigma, C_{y,\sigma})$. By \autoref{lem:expander-mixing}, 
      \begin{eqnarray*}
        \abs{E(A_\sigma,C_{y,\sigma})} & \leq & \abs{A_\sigma} \cdot \abs{C_{y,\sigma}} \cdot \frac{d}{d_y} \; + \; \lambda \cdot d \cdot d_y\\
        \implies \sum_{\sigma \in \Sigma_Y}\abs{E(A_\sigma,C_{y,\sigma})} & \leq & \sum_{\sigma \in \Sigma_Y} \abs{A_\sigma} \cdot \abs{C_{y,\sigma}} \cdot \frac{d}{d_y} \;+\; \lambda \cdot \abs{\Sigma_Y} \cdot d \cdot d_y\\
        & = & (d \cdot d_y) \sum_{\sigma \in \Sigma_Y} \Pr[L(y) = \sigma] \cdot \Pr_{x \sim y}[(L(x),\sigma)\text{ satisfies $\psi_{(x,y)}$} ]\\
        &  & \quad + \quad \lambda \abs{\Sigma_Y} \cdot d \cdot d_y\\
        & = & \inparen{\delta_y \;+\; \lambda \cdot \abs{\Sigma_Y}}  \cdot (d\cdot d_y)
      \end{eqnarray*}
      as claimed, since the number of edges is $d \cdot d_y$. 
    \end{myproof}
  \end{addmargin}
  
  \noindent 
  That hence finishes the proof of the Lemma. 
\end{proof}

This operation ensures that the right-degree of the game $G_\lambda$
is $d$ and the value changes by at most $\epsilon/2$ if $\lambda <
(\epsilon/2 \abs{\Sigma_Y})$. By \autoref{lem:expander-construction},
we can choose explicit constructions of expanders with $d =
\tilde{O}(1/\lambda^2) = \tilde{O}((\abs{\Sigma_Y}/\epsilon)^2)$. The
graph is now right-regular with degree $d$, and the degree of every
$x\in X$ has increased by a factor of $d$. Repeating the same process
for the other side makes both sides regular and the value changes by
at most $\epsilon$. \qed (\autoref{lem:general-game-biregular})

\end{document}